\RequirePackage{afterpackage}
\AfterPackage{amsthm}{
  \RequirePackage{hyperref,cleveref}
}

\documentclass[final]{lipics-v2021}
\pdfoutput=1

\hideLIPIcs

\nolinenumbers

\usepackage{lscape}
\usepackage{subcaption}
\usepackage{scalerel}
\usepackage{mathrsfs}
\usepackage{stmaryrd}
\usepackage{hyphenat}
\usepackage{stmaryrd}
\usepackage{mathpartir}
\usepackage{tikz-cd}
\usepackage[inline]{enumitem}
\usepackage{mathtools}
\usepackage{wrapfig}

\usepackage[leftmargin=1em,rightmargin=1ex,vskip=1ex]{quoting}
\usepackage[utf8]{inputenc}
\usepackage{cmll}

\newcommand\scalemath[3]{\scalebox{#1}[#2]{\mbox{\ensuremath{\displaystyle #3}}}}

\newcommand{\leftarrowtip}{\ensuremath{\tikz\draw[line width=0.5pt,->] (10pt,0) -- (0,0);}}
\newcommand{\leftarrowtailnotip}{\ensuremath{\tikz\draw[line width=0.5pt,-<] (0,0) -- (10pt,0);}}

\newcommand{\unicodeStar}{\ensuremath{\star}}
\DeclareUnicodeCharacter{2605}{\unicodeStar}

\DeclareUnicodeCharacter{21D2}{\ensuremath{\Rightarrow}}
\DeclareUnicodeCharacter{2218}{\ensuremath{\circ}}
\DeclareUnicodeCharacter{2022}{\ensuremath{\bullet}}
\DeclareUnicodeCharacter{2219}{\ensuremath{\bullet}}
\DeclareUnicodeCharacter{2026}{\ensuremath{\dots}}
\DeclareUnicodeCharacter{2208}{\ensuremath{\in}}
\DeclareUnicodeCharacter{2192}{\ensuremath{\to}}
\DeclareUnicodeCharacter{2190}{\ensuremath{\leftarrowtip}}
\DeclareUnicodeCharacter{2919}{\ensuremath{\leftarrowtailnotip}}

\DeclareUnicodeCharacter{00D7}{\ensuremath{\times}}
\DeclareUnicodeCharacter{00B7}{\ensuremath{\cdot}}
\DeclareUnicodeCharacter{222B}{\ensuremath{\int}}
\DeclareUnicodeCharacter{22A4}{\ensuremath{\top}}
\DeclareUnicodeCharacter{22A5}{\ensuremath{\bot}}
\DeclareUnicodeCharacter{2264}{\ensuremath{\leq}}

\newcommand{\unicodecolon}{\ensuremath{\colon}}
\DeclareUnicodeCharacter{FE55}{\unicodecolon}
\newcommand{\unicodeleftpar}{\ensuremath{\left(}}
\DeclareUnicodeCharacter{27EE}{\unicodeleftpar}
\newcommand{\unicoderightpar}{\ensuremath{\right)}}
\DeclareUnicodeCharacter{27EF}{\unicoderightpar}
\DeclareUnicodeCharacter{2260}{\neq}
\DeclareUnicodeCharacter{22A9}{\Vdash}
\DeclareUnicodeCharacter{2237}{\proportion}
\DeclareUnicodeCharacter{2124}{\mathbb{Z}}
\DeclareUnicodeCharacter{27E8}{\langle}
\DeclareUnicodeCharacter{27E9}{\rangle}
\DeclareUnicodeCharacter{21A6}{\mapsto}
\DeclareUnicodeCharacter{22A2}{\vdash}
\DeclareUnicodeCharacter{2090}{\ensuremath{{}_a}}
\DeclareUnicodeCharacter{A71B}{{}^\uparrow}
\DeclareUnicodeCharacter{A71C}{{}^\downarrow}
\DeclareUnicodeCharacter{27E6}{\llbracket}
\DeclareUnicodeCharacter{27E7}{\rrbracket}

\newcommand{\unicoderightcircle}{\ensuremath{\RIGHTcircle}}
\DeclareUnicodeCharacter{25D1}{\unicoderightcircle}
\newcommand{\unicodeleftcircle}{\ensuremath{\LEFTcircle}}
\DeclareUnicodeCharacter{25D0}{\unicodeleftcircle}
\DeclareUnicodeCharacter{229B}{\circledast}

\newcommand{\unicodebbA}{\ensuremath{\mathbb{A}}}
\DeclareUnicodeCharacter{1D538}{\unicodebbA}
\newcommand{\unicodebbB}{\ensuremath{\mathbb{B}}}
\DeclareUnicodeCharacter{1D539}{\unicodebbB}
\newcommand{\unicodebbC}{\ensuremath{\mathbb{C}}}
\DeclareUnicodeCharacter{2102}{\unicodebbC}
\DeclareUnicodeCharacter{1D53B}{\ensuremath{\mathbb{D}}}
\DeclareUnicodeCharacter{2115}{\ensuremath{\mathbb{N}}}
\DeclareUnicodeCharacter{211D}{\ensuremath{\mathbb{R}}}
\DeclareUnicodeCharacter{1D543}{\ensuremath{\mathbb{L}}}
\newcommand\UnicodeBlackboardP{\ensuremath{\mathbf{P}}} \DeclareUnicodeCharacter{2119}{\UnicodeBlackboardP}
\DeclareUnicodeCharacter{211A}{\ensuremath{\mathbb{Q}}}
\DeclareUnicodeCharacter{1D544}{\ensuremath{\mathbb{M}}}
\DeclareUnicodeCharacter{1D54C}{\ensuremath{\mathbb{U}}}
\DeclareUnicodeCharacter{1D54D}{\ensuremath{\mathbf{V}}}
\DeclareUnicodeCharacter{1D54E}{\ensuremath{\mathbb{W}}}
\DeclareUnicodeCharacter{1D546}{\ensuremath{\mathbb{O}}}
\DeclareUnicodeCharacter{1D540}{\ensuremath{\mathbb{I}}}
\DeclareUnicodeCharacter{1D54A}{\ensuremath{\mathbb{S}}}
\DeclareUnicodeCharacter{1D53C}{\ensuremath{\mathbb{E}}}

\newcommand{\unicodecalS}{\ensuremath{\mathcal{S}}}
\newcommand{\unicodecalT}{\ensuremath{\mathcal{T}}}
\newcommand{\unicodecalC}{\ensuremath{\mathcal{C}}}
\newcommand{\unicodecalD}{\ensuremath{\mathcal{D}}}
\newcommand{\unicodecalX}{\ensuremath{\mathcal{X}}}
\newcommand{\unicodecalN}{\ensuremath{\mathcal{N}}}
\newcommand{\unicodecalE}{\ensuremath{\mathcal{E}}}
\DeclareUnicodeCharacter{1D4D4}{\unicodecalE}
\DeclareUnicodeCharacter{1D4D2}{\unicodecalC}
\DeclareUnicodeCharacter{1D4D3}{\unicodecalD}
\DeclareUnicodeCharacter{1D4DE}{\mathcal{O}}
\DeclareUnicodeCharacter{1D4AA}{\mathcal{O}}
\DeclareUnicodeCharacter{210B}{\mathcal{H}}
\DeclareUnicodeCharacter{1D4E2}{\unicodecalS}
\DeclareUnicodeCharacter{1D4E3}{\unicodecalT}
\DeclareUnicodeCharacter{1D4E7}{\unicodecalX}
\DeclareUnicodeCharacter{1D4D0}{\ensuremath{\mathcal{A}}}
\DeclareUnicodeCharacter{1D4D1}{\ensuremath{\mathcal{B}}}
\DeclareUnicodeCharacter{1D4D6}{\ensuremath{\mathcal{G}}}
\DeclareUnicodeCharacter{1D4D7}{\ensuremath{\mathcal{H}}}
\DeclareUnicodeCharacter{1D4DB}{\ensuremath{\mathcal{L}}}
\DeclareUnicodeCharacter{1D4DC}{\ensuremath{\mathcal{M}}}
\DeclareUnicodeCharacter{1D4DD}{\unicodecalN}
\DeclareUnicodeCharacter{1D4B1}{\ensuremath{\mathcal{V}}}
\DeclareUnicodeCharacter{1D4E5}{\ensuremath{\mathcal{V}}}
\DeclareUnicodeCharacter{1D4E6}{\ensuremath{\mathcal{W}}}
\DeclareUnicodeCharacter{1D4E4}{\ensuremath{\mathcal{U}}}

\DeclareUnicodeCharacter{03B1}{\alpha}
\DeclareUnicodeCharacter{03B2}{\beta}
\DeclareUnicodeCharacter{03BC}{\mu}
\DeclareUnicodeCharacter{03B4}{\delta}
\DeclareUnicodeCharacter{03B5}{\varepsilon}
\DeclareUnicodeCharacter{03B7}{\eta}
\DeclareUnicodeCharacter{03BB}{\lambda}
\DeclareUnicodeCharacter{03C1}{\rho}
\DeclareUnicodeCharacter{03C8}{\psi}
\DeclareUnicodeCharacter{03C4}{\tau}
\DeclareUnicodeCharacter{03A8}{\Psi}
\DeclareUnicodeCharacter{03C3}{\sigma}
\DeclareUnicodeCharacter{03C6}{\varphi}
\DeclareUnicodeCharacter{03A6}{\Phi}
\DeclareUnicodeCharacter{03A3}{\Sigma}
\DeclareUnicodeCharacter{03D5}{\phi}
\DeclareUnicodeCharacter{03B8}{\theta}
\DeclareUnicodeCharacter{03C0}{\ensuremath{\pi}}
\DeclareUnicodeCharacter{0393}{\Gamma}
\DeclareUnicodeCharacter{0394}{\Delta}
\DeclareUnicodeCharacter{03BA}{\kappa}
\DeclareUnicodeCharacter{03BD}{\nu}
\DeclareUnicodeCharacter{03A9}{\Omega}
\DeclareUnicodeCharacter{03C9}{\omega}
\DeclareUnicodeCharacter{03D2}{\Upsilon}
\DeclareUnicodeCharacter{03C5}{\upsilon}

\DeclareUnicodeCharacter{25A0}{\blacksquare}
\DeclareUnicodeCharacter{25AA}{\blacksquare}
\DeclareUnicodeCharacter{2205}{\varnothing}
\DeclareUnicodeCharacter{2254}{\coloneq}
\DeclareUnicodeCharacter{03B3}{\gamma}

\newcommand{\hirayo}{\scaleobj{0.9}{\text{\usefont{U}{min}{m}{n}\symbol{'210}}}}
\DeclareUnicodeCharacter{3088}{\hirayo}
\DeclareFontFamily{U}{min}{}
\DeclareFontShape{U}{min}{m}{n}{<-> udmj30}{}

\newcommand\UnicodeWhiteRightPointingSmallTriangle{\triangleright}
\DeclareUnicodeCharacter{25B9}{\mathbin{\UnicodeWhiteRightPointingSmallTriangle}}
\newcommand\UnicodeWhiteDownPointingSmallTriangle{\triangledown}
\DeclareUnicodeCharacter{25BF}{\mathbin{\UnicodeWhiteDownPointingSmallTriangle}}
\newcommand\UnicodeWhiteUpPointingSmallTriangle{\scalemath{1}{-1}{{}^{\triangledown}}}
\DeclareUnicodeCharacter{25B5}{\mathbin{\UnicodeWhiteUpPointingSmallTriangle}}

\DeclareUnicodeCharacter{2080}{\ensuremath{{}_0}}
\DeclareUnicodeCharacter{2081}{\ensuremath{{}_1}}
\DeclareUnicodeCharacter{2082}{\ensuremath{{}_2}}
\DeclareUnicodeCharacter{2083}{\ensuremath{{}_3}}

\DeclareUnicodeCharacter{1D62}{\ensuremath{{}_i}}
\DeclareUnicodeCharacter{2C7C}{\ensuremath{{}_j}}
\DeclareUnicodeCharacter{02B3}{\ensuremath{{}^r}}
\DeclareUnicodeCharacter{02E1}{\ensuremath{{}^\ell}}
\DeclareUnicodeCharacter{1D48}{\ensuremath{{}^d}}
\DeclareUnicodeCharacter{1D50}{\ensuremath{{}^m}}
\DeclareUnicodeCharacter{1D58}{\ensuremath{{}^u}}
\DeclareUnicodeCharacter{209A}{\ensuremath{{}_p}}
\DeclareUnicodeCharacter{2096}{\ensuremath{{}_k}}

\DeclareUnicodeCharacter{2245}{\ensuremath{\cong}}
\DeclareUnicodeCharacter{2286}{\subseteq}

\DeclareUnicodeCharacter{22C5}{\cdot}
\DeclareUnicodeCharacter{25C3}{\ensuremath{\triangleleft}}
\DeclareUnicodeCharacter{25B9}{\ensuremath{\triangleright}}

\DeclareUnicodeCharacter{2217}{\ast}
\DeclareUnicodeCharacter{039E}{\Xi}
\DeclareUnicodeCharacter{2295}{\oplus}
\DeclareUnicodeCharacter{2297}{\otimes}
\DeclareUnicodeCharacter{214B}{\parr}
\DeclareUnicodeCharacter{2298}{\oslash}
\DeclareUnicodeCharacter{25C0}{\mathbin{\blacktriangleleft}}
\DeclareUnicodeCharacter{25C1}{\mathbin{\vartriangleleft}}
\DeclareUnicodeCharacter{22B3}{\mathbin{\triangleright}}
\DeclareUnicodeCharacter{22B2}{\mathbin{\triangleleft}}
\DeclareUnicodeCharacter{FF5C}{\mid}
\DeclareUnicodeCharacter{227A}{\mathbin{\prec}}
\DeclareUnicodeCharacter{227B}{\mathbin{\succ}}
\DeclareUnicodeCharacter{22A3}{\mathbin{\dashv}}
\DeclareUnicodeCharacter{219D}{\ensuremath{\leadsto}}
\DeclareUnicodeCharacter{2191}{\ensuremath{\uparrow}}
\DeclareUnicodeCharacter{1361}{\colon}

\DeclareUnicodeCharacter{29D1}{\mathrel{\multimapdotbothB}}
\DeclareUnicodeCharacter{29D2}{\mathrel{\multimapdotbothA}}
\DeclareUnicodeCharacter{22C4}{\mathbin{\diamond}}
\DeclareUnicodeCharacter{226B}{\mathrel{\gg}}
\DeclareUnicodeCharacter{25A1}{\mathbin{\square}}
\DeclareUnicodeCharacter{266F}{\sharp}

\DeclareUnicodeCharacter{2113}{\ell}
\DeclareUnicodeCharacter{2261}{\equiv}
\DeclareUnicodeCharacter{2099}{_n}
\DeclareUnicodeCharacter{2098}{_m}
\DeclareUnicodeCharacter{1D5AD}{\ensuremath{\mathsf{N}}}

\DeclareUnicodeCharacter{1D4DF}{\mathcal{P}}
\DeclareUnicodeCharacter{1D415}{\mathbf{V}}
\DeclareUnicodeCharacter{1D400}{\mathbf{A}}
\DeclareUnicodeCharacter{1D6C2}{\pmb{\alpha}}
\DeclareUnicodeCharacter{1D6C3}{\pmb{\beta}}
\DeclareUnicodeCharacter{1D6C4}{\pmb{\gamma}}
\DeclareUnicodeCharacter{1D6C5}{\pmb{\delta}}
\DeclareUnicodeCharacter{1D6C6}{\pmb{\epsilon}}
\DeclareUnicodeCharacter{1D6C8}{\pmb{\eta}}
\DeclareUnicodeCharacter{1D6DA}{\pmb{\omega}}

\newcommand\mydots{\makebox[0.6em][c]{.\hfil.\hfil.}}
\DeclareUnicodeCharacter{2026}{\mydots}
\DeclareUnicodeCharacter{226B}{\gg}

\DeclareUnicodeCharacter{22C9}{\ltimes}
\DeclareUnicodeCharacter{22CA}{\rtimes}

\newcommand{\unicodeRelationalComposition}{\dcomp}
\DeclareUnicodeCharacter{2A3E}{\unicodeRelationalComposition}
\DeclareUnicodeCharacter{2983}{\llparenthesis}
\DeclareUnicodeCharacter{2984}{\rrparenthesis}

\input{pkg-color-links.tex}

\makeatletter
\newcommand{\nicelinktarget}[1]{\Hy@raisedlink{\hypertarget{#1}{}}}
\makeatother

\newcommand\Set{\hyperlink{linkSet}{\mathbf{Set}}}

\usepackage[xcolor,no patch,hyperref,quotation,electronic]{knowledge}

\knowledgeconfigure{notion}
\knowledgestyle{notion}{color=nordnight}
\knowledgestyle{intro notion}{emphasize,color=nordnight}

\knowledge{notion}
| uniformity
| Uniformity
| uniformity axiom
| Uniformity axiom
| uniformly traced distributive multicategory
| uniformly traced distributive multicategories
| Uniformly traced distributive multicategory
| Uniformly traced distributive multicategories

\knowledge{notion}
| term
| terms
| Term
| Terms

\knowledge{notion}
| interchange
| Interchange
| interchange axiom
| Interchange axiom

\knowledge{notion}
| correctness triple
| Correctness triple
| correctness triples
| Correctness triples

\knowledge{notion}
| posetal imperative category
| Posetal imperative category
| posetal imperative categories
| Posetal imperative categories

\knowledge{notion}
| posetal imperative multicategory
| Posetal imperative multicategory
| posetal imperative multicategories
| Posetal imperative multicategories

\knowledge{notion}
| guard combinators
| Guard combinators
| guard combinator
| Guard combinator

\knowledge{notion}
| command combinators
| Command combinators
| command combinator
| Command combinator

\knowledge{notion}
| monoidal category
| monoidal categories
| Monoidal category
| Monoidal categories

\knowledge{notion}
| sesquifunctor
| Sesquifunctor
| sesquifunctors
| Sesquifunctors

\knowledge{notion}
| premonoidal category
| Premonoidal category
| premonoidal categories
| Premonoidal categories
| premonoidality
| Premonoidality
| premonoidal
| Premonoidal

\knowledge{notion}
| central
| Central
| central morphism
| Central morphism
| central morphisms
| Central morphisms

\knowledge{notion}
| Symmetric premonoidal category
| Symmetric premonoidal categories
| symmetric premonoidal category
| symmetric premonoidal categories

\knowledge{notion}
| deterministic
| Deterministic
| deterministic morphism
| Deterministic morphism
| deterministic morphisms
| Deterministic morphisms

\knowledge{notion}
| total
| total morphism
| total morphisms
| Total
| Total morphism
| Total morphisms

\knowledge{notion}
| cocartesian multicategory
| Cocartesian multicategory
| cocartesian multicategories
| Cocartesian multicategories

\knowledge{notion}
| predistributive multicategory
| Predistributive multicategory
| predistributive multicategories
| Predistributive multicategories

\knowledge{notion}
| posetal distributive copy-discard multicategory
| Posetal distributive copy-discard multicategory
| posetal distributive copy-discard multicategories
| Posetal distributive copy-discard multicategories

\knowledge{notion}
| index
| indices
| Index
| Indices

\knowledge{notion}
| state
| states
| State
| States

\knowledge{notion}
| state combinator
| state combinators
| State combinator
| State combinators

\knowledge{notion}
| multicategory
| Multicategory
| Multicategories
| multicategories

\knowledge{notion}
| distributive multicategory
| distributive multicategories
| Distributive multicategory
| Distributive multicategories

\knowledge{notion}
| substitution
| Substitution
| substitutions
| Substitutions
| substituting
| Substituting

\knowledge{notion}
| guard
| guards
| Guard
| Guards

\knowledge{notion}
| predicate
| predicates
| Predicates
| Predicate

\knowledge{notion}
| predicate combinators
| Predicate combinators

\knowledge{notion}
| command
| commands
| Commands
| Command

\knowledge{notion}
| fresh
| Fresh

\knowledge{notion}
| alpha-equivalent
| alpha-equivalence
| Alpha-equivalent
| Alpha-equivalence

\knowledge{notion}
| variable
| variables
| Variable
| Variables

\knowledge{notion}
| traced distributive copy-discard multicategory
| Traced distributive copy-discard multicategory
| traced distributive copy-discard multicategories
| Traced distributive copy-discard multicategories

\knowledge{notion}
| traced distributive multicategory
| Traced distributive multicategory
| traced distributive multicategories
| Traced distributive multicategories

\knowledge{notion}
| traced predistributive copy-discard multicategory
| Traced predistributive copy-discard multicategory
| traced predistributive copy-discard multicategories
| Traced predistributive copy-discard multicategories

\knowledge{notion}
| predistributive copy-discard multicategory
| Predistributive copy-discard multicategory
| predistributive copy-discard multicategories
| Predistributive copy-discard multicategories

\knowledge{notion}
| distributive copy-discard multicategory
| Distributive copy-discard multicategory
| distributive copy-discard multicategories
| Distributive copy-discard multicategories

\knowledge{notion}
| variable substitution
| variable substitutions
| Variable substitution
| Variable substitutions

\knowledge{notion}
| posetal distributive signature
| posetal distributive signatures
| Posetal distributive signature
| Posetal distributive signatures

\knowledge{notion}
| posetal uniformity
| Posetal uniformity

\knowledge{notion}
| posetal distributive copy-discard category
| posetal distributive copy-discard categories
| Posetal distributive copy-discard category
| Posetal distributive copy-discard categories

\knowledge{notion}
| posetal uniform trace
| posetal uniform traces
| Posetal uniform traces
| Posetal uniform trace
| posetal uniform traced monoidal category
| Posetal uniform traced monoidal category
| posetal uniform traced monoidal categories
| Posetal uniform traced monoidal categories

\knowledge{notion}
| label substitution
| label substitutions
| Label substitution
| Label substitutions

\knowledge{notion}
| anchor
| anchors
| Anchor
| Anchors
| label
| Label
| Labels
| labels

\knowledge{notion}
| context
| Context
| contexts
| Contexts

\knowledge{notion}
| copy-discard category
| Copy-discard category
| copy-discard categories
| Copy-discard categories
| copy-discard
| Copy-discard

\knowledge{notion}
| copy-discard monoidal category
| Copy-discard monoidal category
| copy-discard monoidal categories
| Copy-discard monoidal categories

\knowledge{notion}
| copy-discard premonoidal category
| Copy-discard premonoidal category
| copy-discard premonoidal categories
| Copy-discard premonoidal categories
| premonoidal copy-discard category
| Premonoidal copy-discard category
| premonoidal copy-discard categories
| Premonoidal copy-discard categories

\knowledge{notion}
| commutative imperative category
| Commutative imperative category
| commutative imperative categories
| Commutative imperative categories

\knowledge{notion}
| imperative category
| Imperative category
| imperative categories
| Imperative categories

\knowledge{notion}
| imperative multicategory
| Imperative multicategory
| imperative multicategories
| Imperative multicategories

\knowledge{notion}
| bimonoidally strict
| Bimonoidally strict
| strict
| Strict

\knowledge{notion}
| distributive signature
| distributive signatures
| Distributive signature
| Distributive signatures

\knowledge{notion}
| distributive category
| distributive categories
| Distributive category
| Distributive categories
| distributive copy-discard category
| distributive copy-discard categories
| Distributive copy-discard category
| Distributive copy-discard categories

\knowledge{notion}
| basic type
| basic types
| Basic type
| Basic types

\knowledge{notion}
| generator
| generators
| Generator
| Generators

\knowledge{notion}
| effectful triple
| effectful triples
| Effectful triple
| Effectful triples

\knowledge{notion}
| double signature
| double signatures
| Double signature
| Double signatures

\knowledge{notion}
| forgetful double signature
| forgetful double signatures
| Forgetful double signature
| Forgetful double signatures

\knowledge{notion}
| double signature morphism
| double signature morphisms
| Double signature morphism
| Double signature morphisms

\knowledge{notion}
| pinwheel double category
| pinwheel double categories
| Pinwheel double category
| Pinwheel double categories
| double category with pinwheels
| double categories with pinwheels
| Double category with pinwheels
| Double categories with pinwheels

\knowledge{notion}
| bigraph
| Bigraph
| bigraphs
| Bigraphs
| 2-graph
| 2-graphs
| 2-Graph
| 2-Graphs

\knowledge{notion}
| 2-graph morphism
| 2-graph morphisms

\knowledge{notion}
| 2-graph category
| 2-graph categories

\knowledge{notion}
| double category
| double categories
| Double category
| Double categories

\knowledge{notion}
| weighted polygraph
| weighted polygraphs
| Weighted polygraph
| Weighted polygraphs
| timed polygraph
| timed polygraphs
| Timed polygraph
| Timed polygraphs

\knowledge{notion}
| morphism of timed polygraphs
| Morphism of timed polygraphs
| timed polygraph morphism
| Timed polygraph morphism

\knowledge{notion}
| tilted bicategory signature
| tilted bicategory signatures
| Tilted bicategory signature
| Tilted bicategory signatures
| tilted bigraph
| Tilted bigraph
| tilted bigraphs
| Tilted bigraphs
| tilted 2-graph
| Tilted 2-graph
| tilted 2-graphs
| Tilted 2-graphs

\knowledge{notion}
| path
| paths
| Path
| Paths
| composable path
| composable paths

\knowledge{notion}
| signature
| signatures
| Signature
| Signatures

\knowledge{notion}
| congruence
| congruences
| Congruence
| Congruences

\knowledge{notion}
| symmetry
| symmetries
| Symmetry
| Symmetries

\knowledge{notion}
| profunctor
| profunctors
| Profunctor
| Profunctors

\knowledge{notion}
| left inclusion
| Left inclusion

\knowledge{notion}
| right inclusion
| Right inclusion

\knowledge{notion}
| left whiskering
| Left whiskering

\knowledge{notion}
| right whiskering
| Right whiskering

\knowledge{notion}
| inclusion and whiskering
| Inclusion and whiskering
| whiskering interacts with inclusion
| Whiskering interacts with inclusion

\knowledge{notion}
| whiskering
| Whiskering

\knowledge{notion}
| rewiring preserves whiskering
| Rewiring preserves whiskering
| whiskering preserves rewiring
| Whiskering preserves rewiring

\knowledge{notion}
| rewiring preserves interchange
| Rewiring preserves interchange

\knowledge{notion}
| rewiring preserves composition
| Rewiring preserves composition

\knowledge{notion}
| rewiring
| Rewiring 
| definition of rewiring

\knowledge{rewiring is an action}[]{notion}

\knowledge{notion}
| composition
| Composition
| term composition
| Term composition

\knowledge{notion}
| term composition is associative
| Term composition is associative

\knowledge{notion}
| term composition is unital
| Term composition is unital

\knowledge{notion}
| tensoring
| Tensoring

\knowledge{notion}
| arrow notation preterm
| Arrow notation preterm
| arrow notation preterms
| Arrow notation preterms

\knowledge{notion}
| arrow notation term
| Arrow notation term
| arrow notation terms
| Arrow notation terms

\knowledge{notion}
| observe-arrow notation
| Observe-arrow notation
| observe-arrow notation term
| Observe-arrow notation term
| observe-arrow notation terms
| Observe-arrow notation terms

\knowledge{notion}
| subdistribution
| Subdistribution
| subdistributions
| Subdistributions
| subdistributional

\knowledge{notion}
| rescaling
| Rescaling

\knowledge{notion}
| normalisation
| normalisations
| Normalisation
| Normalisations
| normalization
| normalizations
| Normalization
| Normalizations
| normalised
| Normalised

\knowledge{notion}
| observe-arrow notation copy-discard-compare category of terms

\knowledge{notion}
| category of arrow notation terms

\knowledge{notion}
| copy-discard-compare categories
| copy-discard-compare category
| Copy-discard-compare categories
| Copy-discard-compare category

\knowledge{notion}
| Kleisli extension
| Kleisli extensions

\knowledge{notion}
| Kleisli category
| Kleisli categories

\newcommand{\kleisli}[1]{\kl[Kleisli category]{\mathsf{kl}}(#1)}

\knowledge{notion}
| partially additive
| partially additive monad
| Partially additive
| Partially additive monad

\knowledge{notion}
| sets
| category of sets

\renewcommand{\Set}{\kl[sets]{\mathsf{Set}}}

\knowledge{notion}
| powerset
| powerset monad

\knowledge{notion}
| category of relations
| Category of relations

\knowledge{notion}
| maybe
| maybe monad

\knowledge{notion}
| category of partial functions
| Category of partial functions

\knowledge{notion}
| discrete subdistribution
| discrete subdistributions
| discrete subdistribution monad
| finitely-supported subdistribution
| finitely-supported subdistributions
| finitely-supported subdistribution monad

\newcommand{\subdistr}{\kl[discrete subdistribution]{\mathcal{D}_{\leq}}}

\knowledge{notion}
| discrete distribution
| discrete distributions
| discrete distribution monad
| finitely-supported distribution
| finitely-supported distributions
| finitely-supported distribution monad

\newcommand{\distr}{\kl[discrete distribution]{\mathcal{D}}}

\knowledge{notion}
| category of discrete stochastic channels
| Category of discrete stochastic channels

\knowledge{notion}
| support

\knowledge{notion}
| Dirac distribution

\knowledge{notion}
| measurable subdistribution
| measurable subdistributions
| measurable subdistribution monad
| Panangaden monad
| subprobability measure monad

\newcommand{\subgiry}{\kl[measurable subdistribution]{\mathcal{G}_{\leq}}}

\knowledge{notion}
| measurable distribution
| measurable distributions
| measurable distribution monad
| Giry monad
| probability measure monad

\newcommand{\giry}{\kl[Giry monad]{\mathcal{G}}}

\knowledge{notion}
| standard Borel spaces
| Standard Borel spaces

\knowledge{notion}
| quantale-valued sets

\knowledge{notion}
| assertion-correctness triple
| assertion-correctness triples
| Assertion-correctness triple
| Assertion-correctness triples

\knowledge{notion}
| state-correctness triple
| state-correctness triples
| State-correctness triple
| State-correctness triples

\knowledge{notion}
| predicate-correctness triple
| predicate-correctness triples
| Predicate-correctness triple
| Predicate-correctness triples

\knowledge{notion}
| assertion-incorrectness triple
| assertion-incorrectness triples
| Assertion-incorrectness triple
| Assertion-incorrectness triples

\knowledge{notion}
| state-incorrectness triple
| state-incorrectness triples
| State-incorrectness triple
| State-incorrectness triples

\knowledge{notion}
| predicate-incorrectness triple
| predicate-incorrectness triples
| Predicate-incorrectness triple
| Predicate-incorrectness triples

\knowledge{notion}
| relational assertion-correctness triple
| relational assertion-correctness triples
| Relational assertion-correctness triple
| Relational assertion-correctness triples

\knowledge{notion}
| relational state-correctness triple
| relational state-correctness triples
| Relational state-correctness triple
| Relational state-correctness triples

\knowledge{notion}
| relational predicate-correctness triple
| relational predicate-correctness triples
| Relational predicate-correctness triple
| Relational predicate-correctness triples

\knowledge{notion}
| relational assertion-incorrectness triple
| relational assertion-incorrectness triples
| Relational assertion-incorrectness triple
| Relational assertion-incorrectness triples

\knowledge{notion}
| relational state-incorrectness triple
| relational state-incorrectness triples
| Relational state-incorrectness triple
| Relational state-incorrectness triples

\knowledge{notion}
| relational predicate-incorrectness triple
| relational predicate-incorrectness triples
| Relational predicate-incorrectness triple
| Relational predicate-incorrectness triples

\knowledge{notion}
| branch combinator

\knowledge{notion}
| coupling synchronization

\knowledge{notion}
| pll
| parallel composition of natural transformations

\newcommand{\pll}{\mathbin{\kl[pll]{\star}}}

\knowledge{notion}
| monoid of sampling intervals
| sampling intervals monoid
| sampling intervals
| Samp

\newcommand{\Samp}{\kl[Samp]{\mathsf{Samp}}}

\knowledge{notion}
| graded monoid
| graded monoids
| Graded monoid
| Graded monoids

\knowledge{notion}
| coalgebra composition

\newcommand{\ccomp}{\mathbin{\kl[coalgebra composition]{\bullet}}}

\knowledge{notion}
| graded monad morphism
| graded monad morphisms
| Graded monad morphism
| Graded monad morphisms

\IfFileExists{noappendix.token}%
{\usepackage[appendix=strip,bibliography=common]{apxproof}}%
{\usepackage[bibliography=common]{apxproof}}

\newcommand{\cat}[1]{\mathbf{#1}}
\newcommand{\fun}[1]{\mathsf{#1}}

\newcommand{\id}{\mathsf{id}}
\newcommand{\dcomp}{\mathbin{;}}

\newcommand{\tensor}{\otimes}
\newcommand{\cp}{\nu}

\newcommand{\proj}[1][]{\pi_{#1}} %

\newcommand{\naturals}{\mathbb{N}}
\newcommand{\reals}{\mathbb{R}}
\newcommand{\posreals}{\reals_{\geq}}
\newcommand{\integers}{\mathbb{Z}}

\newcommand{\Meas}{\cat{Meas}}

\newcommand{\writer}[1]{\fun{W}_{#1}}

\newcommand{\ket}[1]{|#1\rangle}

\newcommand{\obs}{\mathsf{obs}}

\newcommand{\ar}{\mathsf{ar}}
\newcommand{\depth}{\mathsf{d}}

\newcommand{\modal}[1]{\langle {#1} \rangle}
\newcommand{\rmodal}[1]{( {#1} )}
\newcommand{\GCoAlg}[1]{\cat{GCoAlg}(#1)}

\newcommand{\GPCoAlg}[1]{\cat{GPCoAlg}(#1)}

\definecolor{med0}{HTML}{1C1B1B} %
\definecolor{med1}{HTML}{261D1D} %
\definecolor{med2}{HTML}{362B2B} %
\definecolor{med3}{HTML}{5E5757} %
\definecolor{med4}{HTML}{FFE983} %
\definecolor{med5}{HTML}{FFF4C2} %
\definecolor{med6}{HTML}{FDF6E3} %
\definecolor{med7}{HTML}{DB7842} %
\definecolor{med8}{HTML}{B32E39} %
\definecolor{med9}{HTML}{821529} %
\definecolor{medA}{HTML}{602E51} %
\definecolor{medB}{HTML}{FFC929} %
\definecolor{medC}{HTML}{60C37E} %
\definecolor{medD}{HTML}{89D7D0} %
\definecolor{medE}{HTML}{3E75DA} %
\definecolor{medF}{HTML}{D0ADE1} %
\colorlet{medBlack}{med0}
\colorlet{medWhite}{med6}
\colorlet{medRed}{med8}
\colorlet{medBlue}{medE}
\definecolor{nordred}{HTML}{bf616a}
\definecolor{bordeaux}{HTML}{4b1121}
\definecolor{darkyellow}{HTML}{FFC20A}
\definecolor{nicered}{HTML}{9C0D38}
\definecolor{niceblue}{HTML}{0C7BDC}
\colorlet{localblack}{black}
\colorlet{localwhite}{white}
\colorlet{localcolor}{medB}
\colorlet{localgray}{med3}
\colorlet{localblue}{medE}
\colorlet{localred}{med8}
\colorlet{addcolor}{white}
\colorlet{copycolor}{black}

\definecolor{nordred}{HTML}{bf616a}
\definecolor{bordeaux}{HTML}{821529}
\definecolor{bluelink}{HTML}{003399}
\definecolor{nordred}{HTML}{bf616a}
\definecolor{nordblue}{HTML}{81a1c1}
\definecolor{norddarkblue}{HTML}{5e81ac}
\definecolor{nordgreen}{HTML}{a3be8c}
\definecolor{nordnight}{HTML}{4c566a}

\AtEndPreamble{
  \RequirePackage{hyperref}
  \hypersetup{
    breaklinks=true,
    linktocpage,
    colorlinks = true,
    linkcolor = nordnight,
    citecolor = nordgreen,
    urlcolor = nordblue
  }
\RequirePackage{cleveref}
\crefname{example}{example}{examples}
\Crefname{example}{Example}{Examples}
}

\allowdisplaybreaks{}

\usepackage{cleveref} %

\theoremstyle{plain}
\newtheoremrep{theorem}{Theorem}[section]
\newtheoremrep{proposition}[theorem]{Proposition}
\newtheoremrep{lemma}[theorem]{Lemma}
\newtheoremrep{corollary}[theorem]{Corollary}

\theoremstyle{definition}
\newtheorem{define}[theorem]{Definition}

\theoremstyle{remark}

\crefname{definition}{Definition}{Definitions}
\crefname{theorem}{Theorem}{Theorems}
\crefname{lemma}{Lemma}{Lemmas}
\crefname{proposition}{Proposition}{Propositions}
\crefname{corollary}{Corollary}{Corollaries}
\crefname{example}{example}{examples}
\crefname{remark}{Remark}{Remarks}
\Crefname{example}{Example}{Examples}

\allowdisplaybreaks

\crefname{example}{example}{examples}

\renewcommand{\ldots}{...} %

\tikzcdset{scale cd/.style={
	every label/.append style={scale=#1},
    cells={nodes={scale=#1}}
}}

\ccsdesc[500]{Theory of computation~Timed and hybrid models}

\keywords{Coalgebra, Continuous-Time Systems, Stochastic Processes, Modal Logic, Characteristic Logics, Trace Semantics}

\title{Graded Monad Coalgebras for Continuous-Time Transition Systems}

\author{Elena Di Lavore}{Talinn University of Technology}{}{https://orcid.org/0000-0002-7783-5079}{}

\author{Jonas Forster}{Friedrich-Alexander-Universität Erlangen-Nürnberg}{}{https://orcid.org/0000-0002-5050-2565}{}

\author{Mario Román}{Talinn University of Technology}{}{https://orcid.org/0000-0003-3158-1226}{}

\date{\today}
\authorrunning{E. Di Lavore, J. Forster, and M. Román}

\makeatletter
\def\@copyrightspace{\relax}
\makeatother

\funding{Jonas Forster was supported by the Deutsche Forschungsgemeinschaft (DFG, German Research Foundation) -- project number 434050016 (SpeQt). Elena Di Lavore and Mario Román were supported by the Estonian Research Council grant PRG 3215, and by the Advanced Research + Invention Agency (ARIA) Safeguarded AI Programme.}

\acknowledgements{The authors would like to thank Stefan Milius for helpful discussions regarding locally presentable categories.}

\begin{document}

\maketitle

\begin{abstract}
  Functor coalgebras capture a wide range of transition systems that must however evolve in discrete steps.
  We introduce graded coalgebras of graded monads and propose them to model continuous-time transition systems.
  We develop the theory of graded coalgebras—including graded distributive laws between graded monads—and we give conditions for the existence of terminal coalgebras. We define both branching-time and trace semantics, linking them to recent work on Feller-Dynkin processes.
  Finally, we develop coalgebraic modal logics for both process semantics and state criteria for invariance and expressivity.
\end{abstract}
\keywords{Category theory, coalgebra, Markov processes, characteristic logics, terminal coalgebra}

\section{Introduction}%
\label{sec:intro}
Coalgebras of functors are an abstraction for state-based systems, in which the transition behaviour is parametric in an endofunctor $F$: they consist of a set of states, \(X\), and a
transition function, \(\gamma \colon X \to FX\)~\cite{rutten2000universal}. For instance, coalgebras for the functor
\(FX = \distr(O \times X)\) are discrete-time hidden Markov models with observations in $O$, if we let \(\distr\) be the finitely-supported distribution monad.
By this definition, coalgebra transitions occur in discrete steps. Could coalgebras also model continuous-time transition systems?

\begin{wrapfigure}[6]{r}{0.40\textwidth}
  \vspace{-0.8cm}
  \begin{equation}%
  \label{eq:transition-graph-rep-system}
  \begin{tikzcd}[column sep=huge, row sep=tiny]
    & L
    \arrow[bend left=35]{rd}[pos=0.6]{\mu}
    \arrow{ld}[pos=0.4]{\lambda}
    &
    \\
    0
    \arrow[bend left=35]{ru}[pos=0.6]{\mu}
    \arrow{rd}[pos=0.6]{\mu}
    &
    & 2
    \arrow{lu}[pos=0.6]{\lambda}
    \arrow[bend left=35]{ld}[pos=0.6]{\lambda}
    \\
    & R
    \arrow{ru}[pos=0.4]{\mu}
    \arrow[bend left=35]{lu}[pos=0.4]{\lambda} &
  \end{tikzcd}
\end{equation}
\end{wrapfigure}
The continuous-time Markov chain in diagram~\eqref{eq:transition-graph-rep-system}, models a two-component repairable system~\cite[Example~6.1]{gnedenko1995probabilistic}.
Its states encode which components are working: In state \(0\), no component is working; in \(L\), only the left one is working; in \(R\), only the right one; and, in \(2\), both of them are working.
Breakages occur independently on each component and are $\lambda$-exponentially distributed, $b(t) = \lambda e^{-\lambda t}$; repairing time is also idependent on each component but $\mu$-exponentially distributed, $r(t) = \mu e^{-\mu t}$.
Assume that a system with one working component, when tested, functions half of the time; i.e., our observation function, \(o \colon \{0,L,R,2\} \to \distr(\{yes, no\})\), is defined as follows.\footnote{We use the ket-notation for probability distributions \cite{jacobs2009hypernormalisation}: A distribution that assigns \(0.5\) probability to both \(yes\) and \(no\) will be written as \(0.5 \ket{yes} + 0.5 \ket{no}.\)}
\begin{align}
  \label{eq:observations-rep-system}
  o(0) &= 1 \ket{no}; & o(2) &= 1 \ket{yes}; & o(L) &= o(R) = 0.5 \ket{yes} + 0.5 \ket{no}.
\end{align}
We have described a hidden Markov model. Let us propose a coalgebraic description.

The system of differential equations for this model has a solution, determined by a family of transition kernels \(\gamma_{t} \colon \{0,L,R,2\} \to \distr(\{0,L,R,2\})\) where \(\gamma_{t}(y \mid x) = \gamma_{t}(x)(y)\) is the probability that, after time \(t\), the system has transitioned from state \(x\) to \(y\).
While it can be computed as a sum of exponentials, \(\gamma_{t}(k \mid j) = \sum_{i = 1}^{4} c_{i,j} v_{k,i} e^{l_{i}t}\) (\Cref{ex:graded-coalgebra-rep-system}), its exact expression is not relevant for the present discussion; 
what is important is that it satisfies the Markov property: i.e., \(\gamma_{0} = \id\) and \(\gamma_{s+t} = \gamma_{s} \dcomp \gamma_{t}\), in the category of finitely-supported stochastic kernels.
We set out to capture these equations by an appropriate definition of \emph{graded coalgebra of a monad}: $\gamma$ is a graded coalgebra of the distribution monad, graded by the monoid of positive reals.

Moreover, its observations, \(o \colon \{0,L,R,2\} \to \distr(\{yes, no\})\), also induce a family of functions
\(\hat{\gamma}_{t,k} \colon \{0,L,R,2\} \to \distr(\{yes, no\}^{k} \times \{0,L,R,2\})\),
that after letting the system transition for \(t \in \posreals\) time, test it $k$ times. Note how multiple tests may return different answers.
More generally, we obtain functions that alternate transitions of \(t_{i}\) time with clusters of \(k_{i}\) observations,
\begin{equation}%
  \label{eq:samp-graded-rep-system}
  \hat{\gamma}_{(t_{0},k_{0}, \dots, t_{n},k_{n})} \colon \{0,L,R,2\} \to \distr(\{yes, no\}^{k} \times \{0,L,R,2\})
\end{equation}
 for \(k = k_{0}+ \cdots +k_{n}\) the total number of observations.

These are coalgebras for the functors \(\distr(\{yes, no\}^{k} \times -)\) satisfying some extra Markov-like property: indeed, $\hat{\gamma}$ is a graded coalgebra of the monad \(\distr(\{yes, no\}^{k} \times -)\), graded by the coproduct of a time monoid, $(\posreals, +, 0)$, and a monoid counting observations, \((\naturals, +, 0)\).

\subparagraph{Graded monad coalgebras.}
We introduce graded coalgebras of graded monads as a mathematical abstraction
for continuous-time transition systems (\Cref{sec:graded-coalgebras}) and continous-time Markov chains in particular.
Graded coalgebras of monads might be surprising: in the non-graded case, coalgebras of monads are trivial. Grades lift this limitation.

Fix a monoid representing time, \((T, \cdot, e)\), and a monad \(M_{t}\) graded by it (\Cref{def:graded-monad}). Graded monad coalgebras, \(\gamma_{t} \colon X \to M_{t}X\), encode transition systems whose transition behaviour at time $t \in T$ is specified by \(\gamma_{t}\). The graded monad coalgebra axioms (\Cref{def:gradedcoalgebra}) impose that instant transitions
produce no observable behaviour—\(\gamma_{e} = \eta\) must be the graded monad
unit—and that a transition of \(s \cdot t\) time yields the same behaviour as a transition of \(s\) time followed by a transition of \(t\) time.

The existing framework of graded semantics~\cite{DorschMS19} uses graded monads and functor coalgebras to capture different types of process equivalence; the individual components of the monad, graded by the monoid \((\naturals, +,0)\), determine the observable behaviours of the system after some number of steps.
For instance, the trace semantics of discrete-time hidden Markov models is captured by the graded monad \(M_{n}X = \distr(B^{n} \times X)\).

We extend this picture to allow examples where time is an arbitrary monoid.
Coalgebras of monads graded by a monoid unify several abstractions of state-based systems, including examples that were out of the scope of coalgebraic models.
Functor coalgebras and graded semantics~\cite{DorschMS19} are instances of graded coalgebras, but also Markov semigroups, Lawvere dynamical systems~\cite{lawvere1984functorial} (\Cref{ex:graded-coalgebras}) and Feller-Dynkin processes~\cite{chen2023behavioural} (\Cref{sec:fd-processes}) are particular graded coalgebras.

\begin{wrapfigure}[4]{r}{0.33\textwidth}
\vspace{-0.7cm}
\begin{equation}%
  \label{eq:bisimilar-rep-system}
  \begin{tikzcd}[baseline=-1cm]
  0
  \arrow[bend left=35]{r}{2\mu}
  & 1
  \arrow[bend left=35]{r}{\mu}
  \arrow[bend left=35]{l}{\lambda}
  & 2
  \arrow[bend left=35]{l}{2\lambda}
  \end{tikzcd}
\end{equation}
\end{wrapfigure}
\subparagraph{Coalgebraic process equivalences.}
The previous Markov chain in Diagram \eqref{eq:transition-graph-rep-system} is not the only model of a two-component repairable system: we may also use a chain with three states, as in Diagram~\eqref{eq:bisimilar-rep-system}~\cite[Section~3.8.5]{shooman2003reliability}.
Both models~(\ref{eq:transition-graph-rep-system},~\ref{eq:bisimilar-rep-system}) are used interchangeably because they are behaviourally equivalent: intuitively, the states \(L\) and \(R\) in model~\eqref{eq:transition-graph-rep-system} are indistinguishable.

Formally, we extend behavioural equivalences to the graded case:
bisimilarity and trace equivalence of Feller-Dynkin processes are a particular case of bisimilarity and trace equivalence of graded monad coalgebras (\Cref{th:fd-bisimilarity,th:fd-trace-equivalence});
graded monads compose via graded distributive laws (\Cref{prop:composite-graded-monad}), which we introduce similarly to monad-comonad graded distributive laws~\cite{gaboardi2016combining} to attach an output behaviour to graded coalgebras.
When the graded monad is accessible and its base category is locally presentable, we can prove the existence of a terminal graded coalgebra (\Cref{thm:graded-coalgebras-complete}). As in the ungraded case, terminal graded coalgebras characterise behavioural equivalence.

\subparagraph{Characteristic logics.}
Finally, we provide characteristic logics for behavioural equivalence and trace equivalence of graded coalgebras (\Cref{sec:characteristic-logics}).
We identify conditions for the Hennessy-Milner property: when logical equivalence coincides with behavioural or trace equivalence (\Cref{cor:branching-invariance,thm:branching-expressivity,cor:trace-invariance,thm:trace-expressivity}).

\subsection{Related work}

Universal coalgebra~\cite{rutten2000universal} models state-based systems as coalgebras of functors; coalgebra homomorphisms characterize behavioural equivalence. Trace equivalence in the coalgebraic setting has been modelled in different ways, including via distributive laws of monads over functors~\cite{jacobs2004trace,hasuo2007trace,sokolova2005phd}, via distributive laws of functors over monads~\cite{jacobs2015trace,goncharov2013trace}, or via corecursive algebras~\cite{rot2021steps}.
Coalgebraic methods also capture infinite traces~\cite{cirstea2010generic,cirstea2017branching,urabe2018coalgebraic}, and 
trace equivalence of coalgebras in a category of presheaves~\cite{hoshino2014memoryful,muroya2016memoryful,di_lavore2022monoidal}.
Coalgebraic logics capture different notions of equivalences on the state space~\cite{klin2007coalgebraic,schroder2008expressivity,cirstea2009modal,schroder_et_al:LIPIcs.STACS.2009.1855,kupke2011coalgebraic,gorin2013simulations,klin2017coalgebraic}.
Graded semantics models different types of process equivalence via monads graded by natural numbers~\cite{milius_et_al:LIPIcs.CALCO.2015.253,DorschMS19,ford2022graded}.
Graded characteristic logics give the logic counterpart of graded semantics~\cite{DBLP:conf/lics/ForsterSW25,ForsterSWBGM24}.
Continuous stochastic logic~\cite{BaierHHK00, DesharnaisP03} is a characteristic logic for continuous-time Markov processes where, contrary to our setting, individual jumps are considered observable in these logics.
(Stochastic) differential dynamic logics are interpreted over (stochastic) hybrid programs, which can evolve in continuous time~\cite{Platzer11,Platzer12}.

Graded monads are ubiquitous in program semantics~\cite{gaboardi2016combining,fujii2016formal,mcdermott2022flexibly}: they allow named nondeterministic choices~\cite{liellcock2025compositional,sarkis_et_al:LIPIcs.CALCO.2025.5}, control access to the source of randomness~\cite{lew2019trace} and control the dimension of quantum resources~\cite{abramsky_et_al:LIPIcs.MFCS.2017.35}.
Monadic graded effects can be combined with comonadic ones via graded distributive laws~\cite{gaboardi2016combining}.

The graded coalgebra conditions that we propose have appeared in different guises in other models of time-indexed transition systems:
transition systems whose evolution depends continuously on time may be expressed as arrows in the Kleisli category of a monad of continous paths on topological spaces~\cite{neves2016continuity}, where the conditions for a valid path are analogous to our graded coalgebra axioms;
another abstraction for timed processes considers partial actions of the monoid of time on the state space~\cite{kick2006coalgebraic} or, more generally, actions of the monoid of time in a monoidal category~\cite{lawvere1984functorial}---the action axioms correspond to the graded coalgebra axioms.
Other coalgebraic approaches to hybrid systems focus on their discrete component and treat the continuous component as observations of a system with discrete transitions~\cite{neves2018languages}.
Hybrid computations may also be modelled as morphisms in the Kleisli category of an appropriate Elgot monad~\cite{goncharov2019adequate,neves2025adequate}.

Our examples take inspiration from existing categorical abstractions of continuous-time systems, like those for Feller-Dynkin processes~\cite{chen2023behavioural}, Lawvere dynamical systems~\cite{lawvere1984functorial} and coalgebraic probabilistic systems with both discrete and continuous state spaces~\cite{sokolova2011survey}.
Final coalgebras in measurable spaces have also been studied before~\cite{kerstan2012coalgebraic,kerstan2013coalgebraic}.

\section{Graded coalgebras of monads}%
\label{sec:graded-coalgebras}
We start by recalling graded monads. In particular, the case of a monad graded by a monoidal category~\cite{benabou1967bicategories,smirnov2008,fujii2016formal} that arises when a monoid is regarded as a discrete monoidal category.

\begin{define}[Graded monad]%
  \label{def:graded-monad}
  A \intro[graded monad]{monad \(M\) graded by a monoid} \((T, \cdot, e)\) consists of a family of monoid-indexed endofunctors, \(M_{t} \colon \cat{C} \to
  \cat{C}\) for \(t \in T\), equipped with a \emph{graded multiplication}—a
  family of natural transformations \(\mu^{s,t} \colon M_{s}M_{t} \to M_{s \cdot
  t}\) for \(s,t \in T\)—and a \emph{graded unit}—a natural transformation
\(\eta \colon \id \to M_{e}\)—satisfying graded associativity, \(M_r(\mu^{s,t}) \dcomp \mu^{r, s \cdot t} = \mu^{r,s}_{M_t} \dcomp \mu^{r \cdot s, t}\), and graded unitality, \(M_t(\eta) \dcomp \mu^{t,e} = \id = \eta_{M_t} \dcomp \mu^{e,t}\) (also in \Cref{eq:graded-monad-diagrams}).
\end{define}
\begin{toappendix}
The diagrams below express associativity and unitality of graded monads (\Cref{def:graded-monad}).
\begin{equation}%
  \label{eq:graded-monad-diagrams}
  \begin{tikzcd}
    {M_r M_s M_t} \arrow{r}{M_r \mu^{s,t}} \arrow{d}[swap]{\mu^{r,s} M_t} & {M_r M_{s \cdot t}} \arrow{d}{\mu^{r, s \cdot t}} & {M_t} \arrow{r}{\eta M_t} \arrow{d}[swap]{M_t \eta} \arrow{dr}{\id} & {M_e M_t} \arrow{d}{\mu^{e,t}} \\
    {M_{r \cdot s} M_t} \arrow{r}[swap]{\mu^{r \cdot s, t}} & {M_{r \cdot s \cdot t}} & {M_t M_e} \arrow{r}[swap]{\mu^{t,e}} & {M_{t}}
  \end{tikzcd}
\end{equation}
\end{toappendix}

\noindent We now introduce the central concept of the present work.

\begin{define}[Graded coalgebra]%
  \label{def:gradedcoalgebra}
  \AP A \intro{graded coalgebra} of a \(T\)-\kl{graded monad} \(M_t \colon \cat{C}
  \to \cat{C}\) is a carrier object, \(X \in \cat{C}\), with a family of morphisms
  \(\gamma_{t} \colon X \to M_{t} X\) indexed by the monoid \((T, \cdot , e)\)
  and making the following diagrams commute.
  \[\begin{tikzcd}
    {X} \arrow{d}[swap]{\gamma_s} \arrow{r}{\gamma_{s \cdot t}} & {M_{s \cdot t} X} & {X} \arrow{r}{\gamma_e} \arrow{d}[swap]{\eta} & {M_e X} \\
    {M_s X} \arrow{r}[swap]{M_s \gamma_t} & {M_s M_t X} \arrow{u}[swap]{\mu^{s,t}} & {M_e X} \arrow[dash]{ur}[swap]{\id} & { }
  \end{tikzcd}\]
\end{define}

\noindent Graded coalgebras can be interpreted as transition systems whose clock ticks in $T$, a monoid representing time. The graded coalgebra axioms have a natural interpretation in these terms: unitality imposes that a transition of no time should produce no observable behaviour; multiplicativity imposes that a transition of \(s \cdot t\) time should produce the same behaviour as an \(s\)-transition followed by a \(t\)-transition.

\begin{remark}
  In the ungraded case, monad coalgebras are degenerate: the coalgebra must coincide with the monad unit. Grading, instead, allows nontrivial monad coalgebras.
\end{remark}

\begin{toappendix}
\begin{example}%
  \label{ex:solution-rep-system}
  The transition kernel of the Markov chain in~\eqref{eq:transition-graph-rep-system} is computed by solving an ordinary linear differential equation \(\vec{x}'(t) = A\vec{x}(t)\) determined by the chain, where the matrix \(A\) is given below.
  \[A =
    \begin{pmatrix}
      -2\mu & \lambda & \lambda & 0\\
      \mu & -(\lambda + \mu) & 0 & \lambda \\
      \mu & 0 & -(\lambda + \mu) & \lambda \\
      0 & \mu & \mu & -2\lambda
    \end{pmatrix}
  \]
  The solutions to ordinary linear differential equations have a particular shape: it is a sum of exponentials, \(\vec{x}(t) = \sum_{i = 1}^{n} c_{i} \vec{v}_{i} e^{l_{i}t}\), where \(l_{i}\) are the eigenvalues of \(A\) and \(v_{i}\) the corresponding eigenvectors.
  The constants \(c_{i}\) are found by imposing the initial conditions.
  For our example, the matrix of eigenvectors, \(V\), and the vector of eigenvalues, \(\vec{l}\), are given below.
  \[V =
    \begin{pmatrix}
      \lambda^{2} & 0 & -\lambda & 1 \\
      \lambda \mu & -1 & \lambda -\mu & -1 \\
      \lambda \mu & 1 & 0 & -1 \\
      \mu^{2} & 0 & \mu & 1
    \end{pmatrix}
    \qquad
    \vec{l} =
    \begin{pmatrix}
      0\\
      -(\lambda + \mu)\\
      -(\lambda + \mu)\\
      -2(\lambda + \mu)
    \end{pmatrix}
  \]
  For each state \(j\), we impose the base vector \(\vec{e}_{j}\) as initial condition to find constants \(c_{i,j}\).
  With these constants, we obtain the transition kernel \(\gamma_{t}(x_{k} \mid x_{j}) = \sum_{i = 1}^{4} c_{i,j} v_{k,i} e^{l_{i}t}\).
  In our case, the matrix \(C\) of the constants \(c_{i,j}\) is below.
  \[C = \frac{1}{(\lambda + \mu)^{2}}
    \begin{pmatrix}
      1 & 1 & 1 & 1\\
      \mu(\mu - \lambda) & -2\lambda \mu & \lambda^{2}+\mu^{2} & \lambda (\lambda - \mu)\\
      -2\mu & \lambda - \mu & \lambda - \mu & 2 \lambda \\
      \mu^{2} & -\lambda \mu & -\lambda \mu & \lambda^{2}
    \end{pmatrix}
  \]
  For general reasons, this transition kernel satisfies the axioms of a graded monad coalgebra; in this example, it is not difficult to check these properties by hand: by construction, \(\gamma_{0} = \id\); by the relationship between \(c_{i,j}\) and \(v_{k,j}\), \(\gamma_{s} \dcomp \gamma_{t} = \gamma_{s+t}\).
\end{example}
\end{toappendix}

\begin{example}%
  \label{ex:graded-coalgebra-rep-system}
  The Markov chain~\eqref{eq:transition-graph-rep-system} in the previous section determines a graded coalgebra \(\gamma_{t} \colon 4 \to \distr(4)\) that encodes the solution to the system of differential equations specified by the graph in~\eqref{eq:transition-graph-rep-system}.
  Its expression is a sum of exponentials, \(\gamma_{t}(k \mid j) = \sum_{i = 1}^{4} c_{i,j} v_{k,i} e^{l_{i}t}\); the values of the constants are not relevant for our discussion but can be found in \Cref{ex:solution-rep-system}.
  From this general shape, and using the conditions that determine the constants \(c_{i,j}\) and \(v_{j,k}\), one can check that the kernels \(\gamma_{t}\) satisfy the axioms of a graded coalgebra.
  We highlight that graded coalgebras are semantic models of transition systems, in the sense that they encode their trajectories.
  On the other hand, continuous-time transition systems are often specified more syntactically, like the Markov chain in~\eqref{eq:transition-graph-rep-system}, encoding the one-step transition behaviour of the system.
  We leave as future work to categorically express the relationship between these two encodings.
\end{example}

\subsection{Behavioural and trace equivalences of graded coalgebras}

Intepreting coalgebras as state-based systems raises the question of how to
characterize behavioural equivalence: whether an outside observer can
distinguish between two given states. Following the coalgebra literature, we
define behavioural equivalence via cospans of coalgebra homomorphisms.
\begin{define}%
  \label{def:graded-coalgebra-morphism}
  A homomorphism, \(h \colon (X, \gamma) \to (Y, \delta)\), between two graded coalgebras of a \(T\)-graded monad \(M_{t} \colon \cat{C} \to \cat{C}\) is a morphism \(h\colon X \to Y\) in \(\cat{C}\) commuting with the coalgebras, \(\gamma_t \dcomp M_t(h) = h \dcomp \delta_t\) (\Cref{eq:graded-coalgebra-morphism}), for all \(t\in T\).
  Graded coalgebra homomorphisms between graded coalgebras for a graded monad
  \(M\) form a category, \(\GCoAlg{M}\).
\end{define}
\begin{toappendix}
The diagram below defines graded coalgebra morphisms (\Cref{def:graded-coalgebra-morphism}).
\begin{equation}%
  \label{eq:graded-coalgebra-morphism}
  \begin{tikzcd}
    X \arrow{r}{\gamma_t} \arrow{d}[swap]{h} & M_tX \arrow{d}{M_th} \\
    Y \arrow{r}[swap]{\delta_t} & M_tY
  \end{tikzcd}
\end{equation}
\end{toappendix}

\begin{define}[Behavioural equivalence]%
  \label{def:behavioural-equivalence}
  Let \(M\) be a \(T\)-graded monad on a cartesian category \(\cat{C}\) and let \((X, \gamma)\) and \((Y, \delta)\) be two graded \(M\)-coalgebras.
  Two states \(x \colon 1 \to X\) and \(y \colon 1 \to Y\) are \emph{behaviourally equivalent} if there is some graded \(M\)-coalgebra \((Z, \xi)\) and two coalgebra homomorphisms \(g \colon (X, \gamma) \to (Z, \xi)\) and \(h \colon (Y, \delta) \to (Z, \xi)\) such that \(x \dcomp g =
  y \dcomp h\).
\end{define}

\begin{toappendix}
\begin{example}%
  \label{ex:solution-bisimilar-rep-system}
  The transition kernel of the Markov chain in~\eqref{eq:bisimilar-rep-system} can be found in the same way as the previous one (\Cref{ex:solution-rep-system}).
  For this second Markov chain, the matrix of the system of ordinary differential equations is \(B\) as given below.
  \[B =
    \begin{pmatrix}
      -2 \mu & \lambda & 0\\
      2 \mu & -(\lambda + \mu) & 2 \lambda \\
      0 & \mu & -2\lambda
    \end{pmatrix}
  \]
  Its matrix of eigenvectors \(W\) and vector of eigenvalues \(\vec{m}\) are below.
  \[W =
    \begin{pmatrix}
      \lambda^{2} & -\lambda & 1 \\
      2 \lambda \mu & \lambda - \mu & -2\\
      \mu^{2} & \mu & 1
    \end{pmatrix}
    \qquad
    \vec{m} =
    \begin{pmatrix}
      0\\
      -(\lambda + \mu)\\
      -2(\lambda + \mu)
    \end{pmatrix}
  \]
  The matrix \(D\) of constants can be found as in \Cref{ex:solution-rep-system}.
  \[D = \frac{1}{(\lambda + \mu)^{2}}
    \begin{pmatrix}
      1 & 1 & 1 \\
      -2\mu & \lambda - \mu & 2 \lambda \\
      \mu^{2} & -\lambda \mu & \lambda^{2}
    \end{pmatrix}
  \]
  We prove that the function \(h\) from \Cref{ex:bisim-rep-system} is a graded coalgebra homomorphism.
  We can express the transition kernels as matrix multiplications, \(\gamma_{t} = V \cdot \overline{C}_{t}\) and \(\delta_{t} = W \cdot \overline{D}_{t}\), where \(\overline{C}_{t}(i,j) = c_{i,j} e^{l_{i}t}\) and \(\overline{D}_{t}(i,j) = d_{i,j} e^{m_{i}t}\).
  Then, the entry \((k,j)\) of the matrix \(\gamma_{t}\) is the transition probability \(\gamma_{t}(k \mid j)\) from state \(j\) to state \(k\).
  The function \(h\), written in matrix form, is below left.
  \[H =
    \begin{pmatrix}
      1 & 0 & 0 & 0\\
      0 & 1 & 1 & 0\\
      0 & 0 & 0 & 1
    \end{pmatrix}
    \qquad
    Q =
    \begin{pmatrix}
      1 & 0 & 0 & 0\\
      0 & 0 & 0 & 0\\
      0 & 0 & 1 & 0\\
      0 & 0 & 0 & 1
    \end{pmatrix}
  \]
  We check that \(H \cdot \gamma_{t} = \delta_{t} \cdot H\), where we use the auxiliary matrix \(Q\) defined above right.
  \begin{align*}
    H \cdot \gamma_{t} = H \cdot V \cdot \overline{C}_{t} = W \cdot H \cdot Q \cdot \overline{C}_{t} = W \cdot \overline{D}_{t} \cdot H = \delta_{t} \cdot H
  \end{align*}
\end{example}
\end{toappendix}

\begin{example}%
  \label{ex:bisim-rep-system}
  There is a morphism of graded coalgebras that witnesses behavioural equivalence between states in the Markov chains~(\ref{eq:transition-graph-rep-system} and~\ref{eq:bisimilar-rep-system}) from the previous section.
  We indicate with \(\gamma_{t}\) the transition kernel of the chain~\eqref{eq:transition-graph-rep-system} (\Cref{ex:graded-coalgebra-rep-system}), and with \(\delta_{t}\) the transition kernel of the chain~\eqref{eq:bisimilar-rep-system} (see \Cref{ex:solution-bisimilar-rep-system}).
  The morphism \(h \colon \{0,L,R,2\} \to \{0,1,2\}\), defined as \(h(0) = 0\), \(h(2) = 2\) and \(h(L) = h(R) = 1\), is a graded coalgebra morphism \(h \colon (\{0,L,R,2\}, \gamma_{t}) \to (\{0,1,2\}, \delta_{t})\).
  See \Cref{ex:solution-bisimilar-rep-system} for the details.
\end{example}

\noindent For trace equivalence we follow work on graded semantics~\cite{DorschMS19}: the trace of length \(t \in T\) of a state \(x \colon 1 \to X\) is obtained by running the system for \(t\) time, starting in \(x\), and discarding the state space.

\begin{define}[Trace equivalence]%
  \label{def:trace-equivalence}
  Let \(M\) be a \(T\)-graded monad on a cartesian category \(\cat{C}\) and let \((X, \gamma)\) and \((Y, \delta)\) be two graded \(M\)-coalgebras.
  Two states \(x \colon 1 \to X\) and \(y \colon 1 \to Y\) are \emph{trace equivalent} if \(x \dcomp \gamma_{t} \dcomp M_{t}(!_X) = y \dcomp \delta_{t} \dcomp M_{t}(!_Y)\), for all grades \(t \in T\), where \(!_X \colon X \to 1\) is the unique morphism to the terminal object.
\end{define}

\noindent Graded coalgebras are a common generalisation of functor coalgebras, Lawvere dynamical systems and graded semantics, as we show next.
\Cref{sec:fd-processes} will use graded coalgebras to express Feller-Dynkin processes, and capture their behavioural and trace equivalence.

\subsection{First examples}%
\label{ex:graded-coalgebras}
\subparagraph{Coalgebras of functors.}%
\label{exp:functor-coalgebra}

Intuitively, coalgebras of functors are an abstraction of discrete-time transition systems~\cite{rutten2000universal}:
a morphism \(\gamma \colon X \to FX\) expresses a one-step transition, with the functor \(F\) regulating its possible behaviour.
Formally, any endofunctor $F$ induces a \kl[graded monad]{graded monad} $F^{n}$ given by its $n$-fold composition: its multiplications and unit are the identities \(F^{m}F^{n} = F^{m+n}\) and \(\id = F^{0}\).
In this sense, functor coalgebras are \((\naturals, +, 0)\)-graded coalgebras.
A minimalistic concrete example is the random walk.

\begin{propositionrep}%
  \label{prop:functor-coalgebras}
  \AP The category of $F$-coalgebras is isomorphic to the category of \((\naturals, +, 0)\)-graded coalgebras for the graded monad \(F^{n}\).
\end{propositionrep}
\begin{proof}[Proof sketch]
  A coalgebra \(\gamma_{1} \colon X \to FX\) of a functor \(F\) determines morphisms \(\gamma_{n+1} = \gamma_{n} \dcomp F^{n}(\gamma_{1})\)~\cite{milius_et_al:LIPIcs.CALCO.2015.253,DorschMS19}; these satisfy the graded coalgebra axioms, with \(\gamma_{0} = \id_{X}\).
  Viceversa, a graded coalgebra \(\gamma_{n} \colon X \to F^{n}X\) includes a coalgebra \(\gamma_{1} \colon X \to FX\) of the functor \(F\).
  These transformations extend to coalgebra morphisms as the identity on the underlying \(\cat{C}\)-morphisms.
  Thus, we obtain two functors that are each other's inverses and give an isomorphism between the category of coalgebras of the functor \(F\) and the category of \((\naturals, +, 0)\)-graded coalgebras for the graded monad \(F^{n}\).
\end{proof}

\begin{example}[Random walk]%
  \label{ex:random-walk}
  Consider the \intro{finitely-supported distribution monad}, \(\distr \colon \Set \to
  \Set\), and define \(\gamma_{1} \colon \integers \to \distr(\integers)\) as
  \(\gamma_{1}(x) = 0.5 \ket{x-1} + 0.5 \ket{x+1}\):
  from the state \(x\), the system transitions one step to the left
  or one step to the right with equal probability. This is a coalgebra for the
  functor underlying the distribution monad \(\distr\), which we have seen
  coincide with \((\naturals, +, 0)\)-graded coalgebras for the graded monad
  \(\distr^{n}\).
\end{example}

\subparagraph{Lawvere dynamical systems.}
Any monad (e.g., \(\distr\) from \Cref{ex:random-walk}) can be reinterpreted as a constantly-graded monad:
\(\distr_{n} = \distr\) for all \(n \in \naturals\).
This helps us recover Lawvere dynamical systems~\cite{lawvere1984functorial}, which consist of endomorphisms \(\gamma_{t}\) indexed by elements \(t\) of a monoid \((T, \cdot, e)\) that are compatible with the monoid structure\footnote{We consider the external version of Lawvere dynamical systems. The internal version employs monoid actions \(\gamma \colon T \tensor X \to X\) in some monoidal category.}.

\begin{define}%
  \AP A \intro{Lawvere dynamical system} in a category \(\cat{D}\) is a monoid
  homomorphism \(\gamma \colon (T, \cdot, e) \to (\cat{D}(X,X), (\dcomp), \id)\)
  to the monoid of endomorphisms on some object \(X\).
  Explicitly, a \kl{Lawvere dynamical system} is a family of morphisms \(\gamma_{t} \colon X \to X\) in \(\cat{D}\) that preserve the unit, \(\gamma_{e} = \id_{X}\), and the multiplication, \(\gamma_{s \cdot t} = \gamma_{s} \dcomp
  \gamma_{t}\).
\end{define}

\begin{toappendix}
More generally, monoid morphisms regrade monads.

\begin{lemma}%
  \label{lemma:regrading}
  Monoid morphisms induce regrading of monads: given a \(U\)-graded monad \((M, \mu^{u,v}, \eta)\) and a monoid morphism \(h \colon T \to U\), the family of functors \(M_{h(t)}\) assemble into a \(T\)-graded monad \(M_{h}\) with graded multiplication \(\mu^{s,t} = \mu^{h(s),h(t)}\) and unit \(\eta\).
\end{lemma}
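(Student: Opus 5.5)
The plan is to verify the graded monad axioms of \Cref{def:graded-monad} directly for the regraded family, transporting each equation from the \(U\)-graded monad \(M\). First I check that the data have the right types. For \(s,t \in T\), the component \(\mu^{h(s),h(t)}\) is a natural transformation \(M_{h(s)}M_{h(t)} \to M_{h(s)\cdot h(t)}\). Since \(h\) is a monoid morphism, \(h(s)\cdot h(t) = h(s \cdot t)\), so its codomain is exactly \((M_h)_{s\cdot t} = M_{h(s\cdot t)}\). Likewise \(h(e_T) = e_U\), so \(\eta \colon \id \to M_{e_U} = M_{h(e_T)} = (M_h)_{e_T}\) has the required type. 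These are equalities of indices, hence equalities of functors, not merely isomorphisms. No coherence isomorphisms arise, because the monoid is regarded as a discrete monoidal category.

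Next I check graded associativity for \(r,s,t \in T\). The required equation is \(M_{h(r)}(\mu^{h(s),h(t)}) \dcomp \mu^{h(r),h(s\cdot t)} = \mu^{h(r),h(s)}_{M_{h(t)}} \dcomp \mu^{h(r\cdot s),h(t)}\). Rewriting \(h(s\cdot t) = h(s)\cdot h(t)\) and \(h(r\cdot s) = h(r)\cdot h(s)\) turns it into the associativity axiom of \(M\) at the grades \(h(r),h(s),h(t) \in U\). That axiom holds by hypothesis.

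Then I check unitality. The required equations are \(M_{h(t)}(\eta) \dcomp \mu^{h(t),h(e)} = \id = \eta_{M_{h(t)}} \dcomp \mu^{h(e),h(t)}\). Substituting \(h(e)=e_U\) gives the unitality of \(M\) at the grade \(h(t)\).

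The only potential obstacle is bookkeeping. Each composite in the axioms must be well typed, and the identification of indices through \(h\) must be applied consistently on both sides of every equation. Once the monoid-morphism equations are used to rewrite the indices, the axioms for \(M_h\) are literally instances of the axioms for \(M\), so nothing further needs checking.
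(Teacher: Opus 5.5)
Your proposal is correct and follows the same route as the paper. The indices match because \(h(s)\cdot h(t) = h(s\cdot t)\) and \(h(e)=e\), and after this rewriting the associativity and unitality equations for \(M_h\) are literally instances of those for \(M\) at the grades \(h(r),h(s),h(t)\); you write out the instantiated equations more explicitly than the paper does, but the argument is the same.
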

\begin{proof}
  The multiplication \(\mu^{h(s),h(t)}\) has the correct type because \(h(s) \cdot h(t) = h(s \cdot t)\);
  similarly, the unit \(\eta\) has the correct type because \(h(e) = e\).
  The graded associativity and unitality equations hold because they hold for \(M\).
\end{proof}

\noindent Monads are \(1\)-graded monads and, for any monoid \(T\), there is a unique morphism to the terminal monoid \(1\).
By \Cref{lemma:regrading}, we may regrade any monad \(M\) with this morphism and obtain the constantly graded monad \(M_{t} = M\).
\end{toappendix}
\begin{propositionrep}
  \kl{Lawvere dynamical systems} in the Kleisli category of a monad \(M \colon \cat{C} \to \cat{C}\) are graded coalgebras for the constantly-graded monad associated to \(M\).
\end{propositionrep}
\begin{proof}[Proof sketch]
  A \kl{Lawvere dynamical system} in \(\kleisli{M}\)
  is a family of morphisms \(\gamma_{t} \colon X \to MX\) in \(\cat{C}\).
  The monoid morphism axioms in \(\kleisli{M}\) are exactly the graded coalgebra axioms for the constantly-graded monad \(M_{t} = M\).
\end{proof}

\begin{example}[Markov monoids]
  Markov semigroups~\cite[Definition~14.40]{klenke2008probability} are Lawvere dynamical systems for the \intro{Giry monad}, \(\giry \colon \Meas \to \Meas\), on measurable spaces~\cite{giry1982categorical}; time is often restricted to the nonnegative reals, \((\posreals, +, 0)\).
  The condition of compatibility with the unit, \(\alpha_{e} = \id\), suggests \emph{Markov monoids} as a more appropriate name for Markov semigroups.
  Brownian motion is an example of Markov monoid (\Cref{ex:brownian}).
  Analogously, we could consider \emph{partial Markov monoids} to be Lawvere dynamical systems for the \intro{subprobability measure monad}, \(\subgiry = \giry(- + 1)  \colon \Meas \to \Meas\)~\cite{panangaden1999category}.
\end{example}

\begin{example}[Timed transition systems]
  Timed transition systems~\cite{kick2006coalgebraic,kick2003thesis} are Lawvere dynamical systems in the category of sets and partial functions (cf.~\cite[Proposition~2.11]{kick2006coalgebraic}) with additional conditions on the monoid of time.
\end{example}

\begin{toappendix}
\begin{example}[Brownian motion]%
  \label{ex:brownian}
  The family of morphims $β_s ፡ \reals → \giry(\reals)$ defining Brownian motion, $β_s(x) = \operatorname{Normal}(x;s)$, form a Markov monoid, i.e.\ a
  \((\posreals, + , 0)\)-graded coalgebra for the constantly graded Giry monad.
  The coalgebra axioms impose that $y \sim \operatorname{Normal}(x;s)$ and $z \sim \operatorname{Normal}(y;t)$ imply that $z \sim \operatorname{Normal}(x;s+t)$;
  and that $y \sim \operatorname{Normal}(x;0)$ implies $y = x$.
\end{example}
\end{toappendix}

\subparagraph{Graded semantics.}%
\label{sec:graded-semantics}

Graded semantics~\cite{DorschMS19} captures different kinds of state-based process equivalence.
Systems are modelled as coalgebras \(\gamma \colon X \to GX\) of a \(\Set\)-endofunctor \(G\); the notion of process equivalence is determined by a \((\mathbb{N}, +, 0)\)-graded monad \(M\).

\begin{define}[{\cite[Definition~5.1]{milius_et_al:LIPIcs.CALCO.2015.253}}]
  A \intro{graded semantics} for a functor \(G \colon \Set \to \Set\) in a \((\mathbb{N}, +, 0)\)-graded monad \(M\) is a natural transformation \(\alpha \colon G \to M_{1}\).
\end{define}

\begin{toappendix}
We show that a graded semantics for a functor-coalgebra defines a \((\mathbb{N}, +, 0)\)-graded coalgebra.
For this, we need the notion of morphism of graded monads.

\begin{define}%
  \label{def:graded-monad-morphism}
  \AP A \intro{graded monad morphism} \(\alpha \colon M \to N\) between \((T, \cdot , e)\)-graded monads \(M,N \colon \cat{C} \to \cat{C}\) is a family of natural transformations \(\alpha^{t} \colon M_{t} \to N_{t}\) indexed by the monoid \((T,\cdot, e)\) that commute with the graded multiplication and unit, \((\alpha^{s} \pll \alpha^{t}) \dcomp \mu^{s,t} = \mu^{s,t} \dcomp \alpha^{s \cdot t}\) and \(\eta \dcomp \alpha^e = \eta\) (\Cref{eq:graded-monad-morphism}), where \intro[pll]{}\((\pll)\) indicates the parallel composition of natural transformations.
\begin{equation}%
  \label{eq:graded-monad-morphism}
  \begin{tikzcd}
    {M_{s} M_{t}} \arrow{r}{\alpha^{s} \pll \alpha^{t}} \arrow{d}[swap]{\mu^{s,t}} & {N_{s} N_{t}} \arrow{d}{\mu^{s,t}} & {\id_{\cat{C}}} \arrow{dr}{\eta} \arrow{d}[swap]{\eta} & {}\\
    {M_{s \cdot t}} \arrow{r}[swap]{\alpha^{s \cdot t}} & {N_{s \cdot t}} & {M_{e}} \arrow{r}[swap]{\alpha^{e}} & {N_{e}}
  \end{tikzcd}
\end{equation}
\end{define}

\begin{lemma}%
  \label{lemma:graded-morphism}
  For a functor \(G \colon \cat{C} \to \cat{C}\) and a \((\naturals,+,0)\)-graded monad \(M\) on \(\cat{C}\), a natural transformation \(\alpha \colon G \to M_{1}\) induces a morphism of \((\mathbb{N}, +, 0)\)-graded monads \(\alpha^{n} \colon G^{n} \to M_{n}\).
  The morphism is defined by induction: \(\alpha^{0} = \eta\) and \(\alpha^{n+1} = (\alpha \pll \alpha^{n}) \dcomp \mu^{1,n}\).
\end{lemma}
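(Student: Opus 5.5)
We have to check that the inductively defined family \(\alpha^{n} \colon G^{n} \to M_{n}\), with \(\alpha^{0} = \eta\) and \(\alpha^{n+1} = (\alpha \pll \alpha^{n}) \dcomp \mu^{1,n}\), satisfies the two conditions of \Cref{def:graded-monad-morphism}. The graded monad \(G^{n}\) has identity multiplication and unit, so the conditions read \(\eta \dcomp \alpha^{0} = \eta\) and \((\alpha^{m} \pll \alpha^{n}) \dcomp \mu^{m,n} = \alpha^{m+n}\) for all \(m,n\). Naturality of each \(\alpha^{n}\) is immediate: it is built from natural transformations by horizontal and vertical composition. The unit condition holds by definition, since \(\alpha^{0} = \eta\) and the unit of \(G^{n}\) is the identity \(\id = G^{0}\).

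For multiplicativity we argue by induction on \(m\), with \(n\) arbitrary. In the base case \(m = 0\) we must show \((\eta \pll \alpha^{n}) \dcomp \mu^{0,n} = \alpha^{n}\). Using the interchange law, write \(\eta \pll \alpha^{n} = \alpha^{n} \dcomp \eta_{M_{n}}\). Graded unitality, \(\eta_{M_n} \dcomp \mu^{0,n} = \id\), then gives \(\alpha^{n}\).

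For the inductive step, \(\alpha^{m+1} \pll \alpha^{n}\) unfolds to \(\big((\alpha \pll \alpha^{m}) \dcomp \mu^{1,m}\big) \pll \alpha^{n}\). By interchange this equals \((\alpha \pll \alpha^{m} \pll \alpha^{n}) \dcomp \mu^{1,m}_{M_{n}}\). Postcomposing with \(\mu^{m+1,n}\) and applying graded associativity, \(\mu^{1,m}_{M_n} \dcomp \mu^{1+m,n} = M_{1}(\mu^{m,n}) \dcomp \mu^{1,m+n}\), we obtain
\[
(\alpha \pll \alpha^{m} \pll \alpha^{n}) \dcomp M_{1}(\mu^{m,n}) \dcomp \mu^{1,m+n}
= \big(\alpha \pll ((\alpha^{m} \pll \alpha^{n}) \dcomp \mu^{m,n})\big) \dcomp \mu^{1,m+n},
\]
where the equality is again interchange. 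The induction hypothesis rewrites the inner term as \(\alpha^{m+n}\). The whole expression becomes \((\alpha \pll \alpha^{m+n}) \dcomp \mu^{1,m+n} = \alpha^{m+n+1}\), which is exactly what we need.

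The main obstacle is purely bookkeeping: placing the whiskerings correctly in the interchange steps. For example, \(\alpha \pll \beta = G(\beta) \dcomp \alpha_{M}\), or equivalently \(\alpha_{G'} \dcomp M_1(\beta)\); by naturality of \(\alpha\) either form may be used. With that settled, the rest is a direct application of graded associativity and unitality.
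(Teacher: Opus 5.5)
Your proof is correct and follows the only route the paper indicates: the paper gives no argument beyond the inductive definition of \(\alpha^{n}\), and your induction on \(m\) is the routine verification it leaves implicit, using graded unitality for the base case and interchange plus graded associativity \(\mu^{1,m}_{M_n} \dcomp \mu^{1+m,n} = M_{1}(\mu^{m,n}) \dcomp \mu^{1,m+n}\) for the step. Optionally, the other unit law gives \(\alpha^{1} = \alpha \dcomp M_{1}(\eta) \dcomp \mu^{1,0} = \alpha\), which makes explicit that the family extends \(\alpha\).
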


\end{toappendix}
\noindent
A coalgebra with a graded semantics determines a graded coalgebra.

\begin{propositionrep}%
  \label{prop:graded-semantics}
  A \kl{graded semantics} \((M_{n}, \alpha)\) for a coalgebra \(\gamma \colon X \to GX\) extends uniquely to a graded coalgebra \(\gamma_{n} \colon X \to G^{n}X\) for the \((\naturals,+,0)\)-graded monad \(G^{n}\) and a \kl{graded monad morphism} \(\alpha^{n} \colon G^{n} \to M_{n}\).
\end{propositionrep}
\begin{proof}[Proof sketch]
  \Cref{prop:functor-coalgebras} gives a graded \(G^{n}\)-coalgebra \(\gamma_{n}\), and \Cref{lemma:graded-morphism} gives a graded monad morphism \(\alpha^{n} \colon G^{n} \to M_{n}\).
  The composition \(\gamma_{n} \dcomp \alpha^{n}_{X}\), then, gives a graded coalgebra of \(M_{n}\).
\end{proof}

\noindent By \Cref{prop:graded-semantics}, we obtain a graded coalgebra \(\delta_{n} = \gamma_{n} \dcomp \alpha^{n}\) for the graded monad \(M\) by composing the graded \(G\)-coalgebra with the graded monad morphism.
Indeed, process equivalence in graded semantics inspired our definition of trace equivalences.

\section{Feller-Dynkin processes via graded distributive laws}%
\label{sec:fd-processes}%

Behavioural equivalence collapses in the absence of an explicit output.
For instance, given a coalgebra of the functor $\distr$, all its states are behaviourally equivalent.
In the ungraded case, this problem can be solved by attaching observations to the functor itself: e.g., $B \times (\distr{-})$ allows for probabilistic transitions, while each state produces an observation in $B$. In the graded case, the situation is subtler, as one needs to retain the graded monad structure while incorporating the observation. 

To this end, we introduce labelled graded coalgebras and propose them as a generalisation of Feller-Dynkin processes where a graded monad determines the branching behaviour, and a functor determines the observation behaviour.
We prove that labelled graded coalgebras are instances of graded coalgebras when the branching effect and the labelling effect can be combined via a graded distributive law.

\subsection{Labelled graded coalgebras}

The behaviour of a transition system is often composed of two parts: the branching behaviour, regulated by a monad, \(M\), and the observation behaviour, regulated by a functor, \(F\).
The observation functor may appear inside the monad, \(MF\), or outside it, \(FM\), giving rise to Kleisli and Eilenberg-Moore semantics, respectively~\cite{hasuo2007trace,jacobs2015trace}.
We bring this distinction to the graded setting.

\begin{define}%
  \AP A \intro{Kleisli-labelled coalgebra} (resp. an  \intro{Eilenberg-Moore-labelled coalgebra}) of a $T$-graded monad $M_t$ on a category \(\cat{C}\) labelled by an endofunctor $F \colon \cat{C} \to \cat{C}$ is a carrier object $X \in \cat{C}$ together with a \(T\)-graded \(M\)-coalgebra $\gamma_t \colon X \to M_tX$ and a morphism \(l \colon X \to M_{e}FX\) (resp. \(l \colon X \to FM_{e}X\)).
\end{define}

\noindent The graded coalgebra \(\gamma_{t}\) gives the transitions; the morphism \(l\) labels the internal states with some observation.

\begin{toappendix}
\begin{example}%
  \label{ex:streams-labelled-coalgebra}
  Consider the labelling functor \(F =(B \times -)\) and the \(\posreals\)-graded monad that is the constantly-graded identity functor on \(\Set\).
  In this case, Kleisli-labelled and Eilenberg-Moore-labelled coalgebras coincide and are functions \(\gamma_{t} \colon X \to X\), such that \(\gamma_{0} = \id\) and \(\gamma_{s} \dcomp \gamma_{t} = \gamma_{s+t}\), together with a labelling function \(l \colon X \to B \times X\).
\end{example}
\end{toappendix}

\begin{example}%
  \label{ex:finitary-fd}%
  \label{ex:fd-process-rep-system}%
  \label{ex:fd-processes-labelled-coalgebra}
  A \intro{finitary Feller-Dynkin process} consists of a finitary partial Markov monoid, $\gamma_t ፡ X → \subdistr(X)$, and an observation function, $\obs \colon X \to B$, where \(\subdistr = \distr(- +1)\) is the \intro{finitely-supported subdistribution monad}.
  As a concrete example, the graded coalgebra in \Cref{ex:graded-coalgebra-rep-system} together with the observation function \(o \colon \{0,L,R,2\} \to \distr(\{yes, no\})\) in~\Cref{sec:intro}, is a finitary Feller-Dynkin process with observations in \(\distr(\{yes,no\})\).

  Finitary Feller-Dynkin processes determine Klei\-sli-la\-belled and Eilenberg-Moore-labelled coalgebras for the labelling functor \((B \times -)\):
  given a finitary Feller-Dynkin process \((\gamma,\obs)\), we take \(\gamma\) as the transition structure;
  the labelling morphism for the Kleisli case is \(l = \langle \obs, \id_{X} \rangle \dcomp \eta_{B \times X}\), while the labelling morphism for the Eilenberg-Moore case is \(l = \langle \obs, \eta_{X} \rangle\).
\end{example}

\begin{example}
  A \intro{Feller-Dynkin process}~\cite[Definition~13]{chen2023behavioural} is a family of transition kernels \(\gamma_{t} \colon X \to \subgiry(X)\) together with a measurable function \(\obs \colon X \to 2^{A}\), for some finite set \(A\), the \emph{atomic propositions}, considered with the discrete \(\sigma\)-algebra; the transition kernels need to satisfy time-compatibility, \(\gamma_{s} \dcomp \gamma_{t} = \gamma_{s+t}\) and \(\gamma_{0} = \id_{X}\) in \(\kleisli{\subgiry}\), and some continuity conditions, which we disregard.
  Intuitively, the observation function maps each state to the set of atomic propositions that hold in that state.

  As in the finitely supported case, a \kl{Feller-Dynkin process} determines both a Kleisli-labelled and an Eilenberg-Moore-labelled coalgebra: the transition structure are the morphisms \(\gamma_{t}\); the labelling morphism for the Kleisli case is \(l = \langle \obs, \id_{X} \rangle \dcomp \eta_{2^{A} \times X}\), while for the Eilenberg-Moore case is \(l = \langle \obs, \eta_{X} \rangle\).
  However, contrary to the finitely supported case, \kl{Feller-Dynkin processes} are a proper subclass of labelled graded coalgebras: \kl{Feller-Dynkin processes} additionally require that trajectories be \emph{cadlag} (i.e.\ right-continuous with left limit).
  Graded coalgebras of monads, as presently defined, cannot natively impose continuity conditions with respect to time; these continuity conditions may be added by enriching in topological spaces, but we leave this direction as future work.
\end{example}

\noindent We now go on to show that labelled graded coalgebras are particular instances of graded coalgebras when we can combine the branching effect of the graded monad with the effect of the labelling functor.
This is done via graded distributive laws, which we introduce next.

\subsection{Graded distributive laws}

A distributive law \(\lambda \colon PM \to MP\) of a monad \(P\) over a monad \(M\) gives a composite monad \(MP\)~\cite{beck1969distributive}.
We extend the definition of distributive law to the graded case and prove that graded distributive laws give composite graded monads.
The definition of graded distributive law between monads is analogous to that of monad-comonad graded distributive law~\cite{gaboardi2016combining}.

\begin{define}[Graded distributive law]
  A \intro{graded distributive law} of a \(U\)-graded monad \(P_{u} \colon \cat{C} \to \cat{C}\) over a \(T\)-graded monad \(M_{t} \colon \cat{C} \to \cat{C}\) is a family of natural transformations \(\lambda^{u,t} \colon P_{u}M_{t} \to M_{t}P_{u}\) indexed by the product monoid \(T \times U\) that commutes with graded multiplications and units as below.\\
  \begin{tikzcd}[column sep = normal, scale cd=0.85]
    {P_u M_s M_t} \arrow{d}[swap]{\lambda^{u,s}_{M_t}} \arrow{r}{P_u \mu^{s,t}}
    & {P_u M_{s \cdot t}} \arrow{dd}{\lambda^{u,s \cdot t}}\\
    {M_s P_u M_t} \arrow{d}[swap]{M_s \lambda^{u,t}}
    & \\
    {M_s M_t P_u} \arrow{r}{\mu^{s,t}_{P_u}} 
    & {M_{s \cdot t} P_u}
  \end{tikzcd}\
  \begin{tikzcd}[column sep = normal, scale cd=0.85]
    {P_u P_v M_t} \arrow{d}[swap]{P_u \lambda^{v,t}} \arrow{r}{\mu^{u,v}_{M_t}}
    & {P_{u \cdot v} M_t} \arrow{dd}{\lambda^{u \cdot v, t}} \\
    {P_u M_t P_v} \arrow{d}[swap]{\lambda^{u,t}_{P_v}}
    & \\
    {M_t P_u P_v} \arrow{r}{M_t \mu^{u,v}}
    & {M_t P_{u \cdot v}}
  \end{tikzcd}\
  \begin{tikzcd}[column sep = normal, scale cd=0.85]
    {P_u} \arrow{r}{P_u \eta} \arrow{dr}[swap]{\eta_{P_u}} & {P_u M_e} \arrow{d}{\lambda^{u,e}} \\
    {} & {M_e P_u}
  \end{tikzcd}\
  \begin{tikzcd}[column sep = normal, scale cd=0.85]
    {M_t} \arrow{r}{\eta_{M_t}} \arrow{dr}[swap]{M_t \eta} & {P_e M_t} \arrow{d}{\lambda^{e,t}} \\
    {} & {M_t P_e}
  \end{tikzcd}
\end{define}

\begin{theoremrep}%
  \label{prop:composite-graded-monad}
  A graded distributive law \(\lambda^{u,t} \colon P_{u}M_{t} \to M_{t}P_{u}\) of a \(U\)-graded monad over a \(P\) \(T\)-graded monad \(M\) induces a composite \(T \times U\)-graded monad \(M_{t}P_{u}\).
\end{theoremrep}
\begin{proof}
  This proof is analogous to the ungraded case because we restrict to grading by a monoid.
  The monad unit is the parallel composition of the two units: \(\eta = \eta^M \pll \eta^P \colon \id \to M_e P_e\);
  the graded monad multiplication uses the distributive law: \(\mu = (M_s \pll \lambda^{u,t} \pll P_v) \dcomp (\mu^{(M) s,t} \pll \mu^{(P) u,v}) \colon M_s P_u M_t P_v \to M_{s \cdot t} P_{u \cdot v}\).
  These are natural transformations because they are compositions of natural transformations.
  Unitality follows from the fact that the distributive law commutes with the units and unitality of the two graded monads;
  associativity follows from the fact that the distributive law commutes with the multiplications and associativity of the two graded monads.
\end{proof}

\noindent With an appropriate distributive law, the labelling morphism of a labelled coalgebra can be incorporated in a new transition structure, giving a coalgebra graded by a monoid of \emph{sampling intervals}.
In this way, labelled graded coalgebras, and Feller-Dynkin processes as a consequence, are particular instances of graded coalgebras.
The grading by the monoid of sampling intervals determines the order and amount by which the system is progressed and/or observed.
The monoid of sampling intervals is the result of the interaction between a \(T\)-graded monad, \(M_{t}\), and the \(\naturals\)-graded monad, \(F^{n}\), generated by a functor \(F\), via graded distributive laws.

\begin{define}%
  [Monoid of sampling intervals]%
  \label{def:sampling-monoid}
  \AP%
  For a monoid \((T, \cdot, e)\), consider the \intro[Samp]{}coproduct monoid \(\Samp_{T} = (T, \cdot, e) + (\naturals, +, 0)\).
  The elements of this monoid are lists \((t_{0},k_{0}, \dots, t_{n},k_{n})\) alternating elements \(t_{i} \in T\), with \(t_{i} \neq e\) for \(i = 1, \dots, n\), and natural numbers \(k_{i} \in \naturals\), with \(k_{i} > 0\) for \(i = 0, \dots, n-1\).
  Intuitively, an element \(t_{i}\) indicates a transition of \(t_{i}\) time, while the number \(k_{i}\) indicates the number of observations to perform after the transition.
  The multiplication is concatenation of lists followed by simplification, as in \Cref{eq:mult-sampling-interval}.
  The unit is the list \((e,0)\).
\end{define}
\begin{toappendix}
  The multiplication in the monoid of sampling intervals (\Cref{def:sampling-monoid}) is concatenation of lists followed by simplification.
  \begin{align}
    & (s_{0}, j_{0}, \dots, s_{m}, j_{m}) \cdot (t_{0},k_{0}, \dots, t_{n},k_{n}) \nonumber\\
    &= \begin{cases}
        (s_{0}, j_{0}, \dots, s_{m}, j_{m},t_{0},k_{0}, \dots, t_{n},k_{n}) & \text{if } j_{m} > 0 \text{ and } t_{0} \neq e\\
        (s_{0}, j_{0}, \dots, s_{m} \cdot t_{0},k_{0}, \dots, t_{n},k_{n}) & \text{if } j_{m} = 0\\
        (s_{0}, j_{0}, \dots, s_{m}, j_{m}+k_{0}, \dots, t_{n},k_{n}) & \text{if } t_{0}= e
      \end{cases}\label{eq:mult-sampling-interval}
  \end{align}
\end{toappendix}

\noindent \Cref{prop:composite-graded-monad} gives a composite monad graded by the product monoid, but the monoid of sampling intervals is, instead, a coproduct.
Regrading the composite monad solves this issue.
Monoid morphisms regrade monads (\Cref{lemma:regrading}).
As a consequence, a span of monoid morphisms regrades composite monads.
There is always a span of monoid morphisms from a coproduct to its components, \(p_{1} \colon U + V \to U\) and \(p_{2} \colon U + V \to V\) defined below.
\begin{align*}
  p_{1}(u_{0}, v_{0}, \dots, u_{n}, v_{n}) &= u_{0} \cdots u_{n} &
  p_{2}(u_{0}, v_{0}, \dots, u_{n}, v_{n}) &= v_{0} \cdots v_{n}
\end{align*}
With this span, we always obtain a regrading by the coproduct monoid.

\begin{propositionrep}%
  \label{cor:coproduct-distr-law}
  For three monoids \(T\), \(U\) and \(V\) with monoid morphisms \(h \colon T \to U\) and \(k \colon T \to V\), consider a \(U\)-graded monad \(M\) and a \(V\)-graded monad \(P\).
  A graded distributive law \(\lambda^{v,u} \colon P_vM_u \to M_uP_v\) of \(P\) over \(M\) induces a composite \(T\)-graded monad \(M_{h(t)} P_{k(t)}\).

  In particular, the graded distributive law \(\lambda\) induces a composite \(U+V\)-graded monad \(M_{u_{0} \cdots u_{n}} P_{v_{0} \cdots v_{n}}\) by considering the monoid morphisms \(p_{1} \colon U+V \to U\) and \(p_{2} \colon U+V \to V\).
\end{propositionrep}
\begin{proof}
  \Cref{prop:composite-graded-monad} gives a composite monad, \(M_u P_v\), graded by \(U \times V\);
  the span of monoid morphisms corresponds to a morphism to the product, \(\langle h,k \rangle \colon T \to U \times V\); 
  by \Cref{lemma:regrading}, we obtain a \(T\)-graded monad, \(M_{h(t)} P_{k(t)}\).
\end{proof}

\noindent The monoid morphism from the monoid of sampling intervals to the monoid of time \(T\), \(l \colon \Samp_{T} \to (T, \cdot, e)\), may be interpreted as returning the length of the sampling interval; the monoid morphism to the natural numbers, \(c \colon \Samp_{T} \to (\naturals, +, 0)\), may be interpreted as returning the number of samples.

\begin{toappendix}
\subsection{Distributive laws between functors and graded monads}%
\label{sec:monad-functor-graded-distributive-laws}
\noindent Some \kl{graded distributive laws} arise from distributive laws between graded monads and functors.
As in the ungraded case, there are two possible combinations.
We elaborate on Kleisli-laws here, but the Eilenberg-Moore case is analogous (see \Cref{sec:graded-em-laws}).

\begin{define}
  A \intro{graded Kleisli-law} of a functor \(F \colon \cat{C} \to \cat{C}\) over a \(T\)-graded monad \(M_{t} \colon \cat{C} \to \cat{C}\) is a family of natural transformations \(\lambda^{t} \colon FM_{t} \to M_{t}F\) indexed by the monoid \(T\) that interchanges with the graded monad multiplication and unit as below.
  \begin{equation*}
    \begin{tikzcd}
      {F M_s M_t} \arrow{r}{\lambda^{s}_{M_t}} \arrow{d}[swap]{F \mu^{s,t}} & {M_s F M_t} \arrow{r}{M_s \lambda^{t}} & {M_s M_t F} \arrow{d}{\mu^{s,t}_{F}} \\
      {F M_{s \cdot t}} \arrow{rr}{\lambda^{s \cdot t}} & & {M_{s \cdot t} F}
    \end{tikzcd}
    \begin{tikzcd}
      {F} \arrow{r}{F \eta} \arrow{dr}[swap]{\eta_{F}} & {F M_e} \arrow{d}{\lambda^{e}} \\
      {} & {M_e F}
    \end{tikzcd}
  \end{equation*}
\end{define}

\label{sec:graded-em-laws}
\begin{define}
  A \intro{graded Eilenberg-Moore-law} of a \(U\)-graded monad \(P_{u} \colon \cat{C} \to \cat{C}\) over a functor \(F \colon \cat{C} \to \cat{C}\) is a family of natural transformations \(\lambda^{u} \colon P_{u}F \to FP_{u}\) indexed by the monoid \(U\) that interchanges with the graded monad multiplication and unit as below.
  \begin{equation*}
  \begin{tikzcd}
    {P_u P_v F} \arrow{r}{P_u \lambda^{v}} \arrow{d}[swap]{\mu^{u,v}_{F}} & {P_u F P_v} \arrow{r}{\lambda^{u}_{P_v}} & {F P_u P_v} \arrow{d}{F \mu^{u,v}} \\
    {P_{u \cdot v} F} \arrow{rr}{\lambda^{u \cdot v}} & & {F P_{u \cdot v}}
  \end{tikzcd}
  \begin{tikzcd}
    {F} \arrow{r}{\eta_{F}} \arrow{dr}[swap]{F \eta} & {P_e F} \arrow{d}{\lambda^{e}} \\
    {} & {F P_e}
  \end{tikzcd}
\end{equation*}
\end{define}

\begin{lemmarep}%
  \label{prop:distr-law-from-functor-distr-law}
  \kl{Graded Kleisli-laws} of a functor \(F\) over a \(T\)-graded monad \(M_{t}\) determine \kl{graded distributive laws} of the the \(\naturals\)-graded monad \(F^{n}\) over the \(T\)-graded monad \(M_{t}\).
\end{lemmarep}
\begin{proof}[Proof sketch]
  By~\cite[Lemma~6.3.7]{sokolova2005phd}, a Kleisli-law of a functor \(F\) over an ungraded monad \(M\) determines Kleisli-laws of the functors \(F^{n}\) over \(M\).
  The analogous construction in the graded case gives graded Kleisli-laws of the functors \(F^{n}\) over a graded monad \(M_{t}\).
  The other two axioms of a graded distributive law are satisfied because the graded multiplication and unit of the graded monad \(F^{n}\) are equalities and by~\cite[Lemma~6.3.8]{sokolova2005phd}.
\end{proof}
\end{toappendix}

\begin{toappendix}
\begin{lemma}
  \kl{Graded Eilenberg-Moore-laws} of a \(U\)-graded monad \(P_{u}\) over a functor \(F\) determine \kl{graded distributive laws} of the \(U\)-graded monad \(P_{u}\) over the \(\naturals\)-graded monad \(F^{n}\) arising from \(F\).
\end{lemma}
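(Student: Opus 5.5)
We dualise the proof of \Cref{prop:distr-law-from-functor-distr-law}. Given a graded Eilenberg-Moore-law \(\lambda^{u} \colon P_{u}F \to FP_{u}\), we define by induction on \(n\) a family \(\lambda^{u,n} \colon P_{u}F^{n} \to F^{n}P_{u}\) indexed by \(U \times \naturals\). We set \(\lambda^{u,0} = \id_{P_{u}}\), and
\[
  \lambda^{u,n+1} = \lambda^{u}_{F^{n}} \dcomp F(\lambda^{u,n}) \colon P_{u}FF^{n} \to FP_{u}F^{n} \to FF^{n}P_{u}.
\]
Each component is a composite of natural transformations, hence natural. This is the graded analogue of the construction in~\cite[Lemma~6.3.7]{sokolova2005phd}, with the roles of monad and functor reversed.

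We then check the four axioms of a \kl{graded distributive law}. The multiplication of \(F^{n}\) is the identity \(F^{m}F^{n} = F^{m+n}\), and its unit is the identity \(\id = F^{0}\). The two axioms involving them therefore reduce to two facts.
\begin{enumerate}
  \item The unit axiom for \(F^{n}\) holds trivially, since \(\lambda^{u,0} = \id\).
  \item The multiplication axiom for \(F^{n}\) becomes the equation \(\lambda^{u,m+n} = \lambda^{u,m}_{F^{n}} \dcomp F^{m}(\lambda^{u,n})\). We prove it by induction on \(m\). The case \(m = 0\) is immediate. The inductive step unfolds the definition once, then applies functoriality of \(F\) together with the inductive hypothesis. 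This is the analogue of~\cite[Lemma~6.3.8]{sokolova2005phd}.
\end{enumerate}

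The remaining two axioms concern \(P\). The first is unit compatibility, \(\eta_{F^{n}} \dcomp \lambda^{e,n} = F^{n}(\eta)\). We prove it by induction on \(n\). The case \(n = 0\) is trivial. For the inductive step, use naturality of \(\eta\) and the unit axiom of the Eilenberg-Moore-law to rewrite \(\eta_{FF^{n}} \dcomp \lambda^{e}_{F^{n}}\) as \(F(\eta_{F^{n}})\). Then apply \(F\) to the inductive hypothesis.

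The second is multiplication compatibility:
\[
  \mu^{u,v}_{F^{n}} \dcomp \lambda^{u \cdot v,n} = P_{u}(\lambda^{v,n}) \dcomp \lambda^{u,n}_{P_{v}} \dcomp F^{n}(\mu^{u,v}).
\]
This is the main obstacle, because it requires careful bookkeeping of whiskerings. We proceed by induction on \(n\), with \(n = 0\) trivial. In the inductive step, we start from \(\mu^{u,v}_{FF^{n}} \dcomp \lambda^{u\cdot v}_{F^{n}} \dcomp F(\lambda^{u\cdot v,n})\) and rewrite in three moves.
\begin{enumerate}
  \item Apply the multiplication axiom of the Eilenberg-Moore-law, whiskered by \(F^{n}\). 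This replaces the first two factors with \(P_{u}(\lambda^{v}_{F^{n}}) \dcomp \lambda^{u}_{P_{v}F^{n}} \dcomp F(\mu^{u,v}_{F^{n}})\).
  \item Apply \(F\) to the inductive hypothesis, which expands \(F(\mu^{u,v}_{F^{n}} \dcomp \lambda^{u\cdot v,n})\).
  \item Use naturality of \(\lambda^{u}\) with respect to \(\lambda^{v,n} \colon P_{v}F^{n} \to F^{n}P_{v}\). This lets us swap \(\lambda^{u}_{P_{v}F^{n}} \dcomp FP_{u}(\lambda^{v,n})\) into \(P_{u}F(\lambda^{v,n}) \dcomp \lambda^{u}_{F^{n}P_{v}}\).
\end{enumerate}
Regrouping then yields \(P_{u}(\lambda^{v,n+1}) \dcomp \lambda^{u,n+1}_{P_{v}} \dcomp F^{n+1}(\mu^{u,v})\), as required.
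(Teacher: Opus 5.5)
Your proposal is correct and takes the same route as the paper. The paper proves this lemma only by saying it is analogous to the Kleisli case: iterate the law inductively (the graded analogue of Sokolova's Lemma~6.3.7), then note that, since the multiplication and unit of \(F^{n}\) are identities, the remaining axioms reduce to compatibility of the iterates (the analogue of Lemma~6.3.8). Your write-up spells out the inductive checks of the \(P\)-unit and \(P\)-multiplication axioms that the paper leaves implicit, and the three rewriting moves in your final inductive step check out.
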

\begin{proof}
  Analogous to the Kleisli case (\Cref{prop:distr-law-from-functor-distr-law}).
\end{proof}

\noindent As a consequence of \Cref{prop:distr-law-from-functor-distr-law} and \Cref{cor:coproduct-distr-law}, we obtain \(\Samp_{T}\)-graded monads from graded Kleisli-laws and from graded Eilenberg-Moore-laws.

\begin{corollary}%
  \label{cor:samp-graded-monads}
  A \kl{graded Kleisli-law} of a functor \(F\) over a \(T\)-graded monad \(M_{t}\) gives a composite \(\Samp_{T}\)-graded monad \(M_{t_{0} \cdots t_{n}} F^{k_{0}+ \cdots + k_{n}}\).
\end{corollary}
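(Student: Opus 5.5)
The corollary follows by chaining two earlier results, with no new calculation beyond checking types. First, apply \Cref{prop:distr-law-from-functor-distr-law} to the given \kl{graded Kleisli-law} \(\lambda^{t} \colon FM_{t} \to M_{t}F\). This produces a \kl{graded distributive law} \(\lambda^{n,t} \colon F^{n}M_{t} \to M_{t}F^{n}\) of the \(\naturals\)-graded monad \(F^{n}\) over the \(T\)-graded monad \(M_{t}\). It is obtained by iterating \(\lambda\): \(\lambda^{0,t} = \id\) and \(\lambda^{n+1,t} = F(\lambda^{n,t}) \dcomp \lambda^{t}_{F^{n}}\). The multiplication and unit of \(F^{n}\) are identities, so its two axioms reduce to the iterated Kleisli-law axioms.

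Second, invoke \Cref{cor:coproduct-distr-law} with \(U = T\), \(V = \naturals\), the coproduct monoid \(\Samp_{T} = T + \naturals\), and the two projections \(p_{1} \colon \Samp_{T} \to T\) and \(p_{2} \colon \Samp_{T} \to \naturals\). This yields a composite \(\Samp_{T}\)-graded monad \(M_{p_{1}(s)}F^{p_{2}(s)}\). On a list \(s = (t_{0},k_{0},\dots,t_{n},k_{n})\) we have \(p_{1}(s) = t_{0}\cdots t_{n}\) and \(p_{2}(s) = k_{0}+\cdots+k_{n}\), which is exactly \(M_{t_{0}\cdots t_{n}}F^{k_{0}+\cdots+k_{n}}\) as stated.

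The only point needing care is that \(p_{1}\) and \(p_{2}\) are genuinely monoid morphisms out of \(\Samp_{T}\), because the multiplication of \Cref{def:sampling-monoid} simplifies lists rather than merely concatenating them. I would check the three cases of \Cref{eq:mult-sampling-interval}. In each case the simplification only merges adjacent time entries by \(\cdot\), merges adjacent counts by \(+\), or removes units. Hence the product of the \(T\)-entries and the sum of the \(\naturals\)-entries are both preserved, and the unit \((e,0)\) maps to \(e\) and \(0\). Equivalently, \(p_{1}\) and \(p_{2}\) are the copairings \([\id, 0]\) and \([e, \id]\) induced by the coproduct's universal property, which makes them monoid morphisms automatically. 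With that settled, the rest is direct instantiation.
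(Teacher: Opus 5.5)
Your proof is correct and follows essentially the paper's own route: the paper states this corollary as a direct consequence of \Cref{prop:distr-law-from-functor-distr-law}, which yields a graded distributive law \(F^{n}M_{t} \to M_{t}F^{n}\), followed by the regrading of \Cref{cor:coproduct-distr-law} along the projections \(p_{1}\) and \(p_{2}\). Your extra check that \(p_{1}\) and \(p_{2}\) are monoid morphisms, which the paper only asserts, is a sensible addition, though the constants in your copairings are mislabelled: the component \(\naturals \to T\) of \(p_{1}\) is constant at \(e\), and the component \(T \to \naturals\) of \(p_{2}\) is constant at \(0\).
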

\begin{corollary}
  A \kl{graded Eilenberg-Moore-law} of a \(U\)-graded monad \(P_{u}\) over a a functor \(F\) gives a composite \(\Samp_{U}\)-graded monad \(F^{k_{0}+ \cdots + k_{n}} P_{u_{0} \cdots u_{n}}\).
\end{corollary}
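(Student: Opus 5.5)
The statement to prove is the last corollary: a graded Eilenberg-Moore-law \(\lambda^{u} \colon P_{u}F \to FP_{u}\) yields a composite \(\Samp_{U}\)-graded monad \(F^{k_{0}+\cdots+k_{n}} P_{u_{0}\cdots u_{n}}\). The plan is to mirror the Kleisli corollary (\Cref{cor:samp-graded-monads}) and proceed in two steps.

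\textbf{Step 1: obtain a graded distributive law.} Apply the preceding lemma, which is the Eilenberg-Moore analogue of \Cref{prop:distr-law-from-functor-distr-law}. It turns \(\lambda^{u}\) into a graded distributive law of the \(U\)-graded monad \(P\) over the \(\naturals\)-graded monad \(F^{n}\). Concretely, \(\lambda^{u,n} \colon P_{u}F^{n} \to F^{n}P_{u}\) is defined by induction on \(n\):
\begin{itemize}
\item \(\lambda^{u,0} = \id\);
\item \(\lambda^{u,n+1}\) is the composite \(P_{u}F F^{n} \xrightarrow{\lambda^{u}_{F^{n}}} F P_{u} F^{n} \xrightarrow{F\lambda^{u,n}} F F^{n} P_{u}\).
\end{itemize}

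\textbf{Step 2: regrade the composite.} Apply \Cref{cor:coproduct-distr-law} with \(M = F^{n}\) (graded by \(\naturals\)) and \(P\) (graded by \(U\)). This gives the composite graded monad \(F^{n}P_{u}\), graded by \(\naturals \times U\) via \Cref{prop:composite-graded-monad}. We then regrade along the coproduct projections of \(\Samp_{U} = U + \naturals\), sending \((u_{0},k_{0},\dots,u_{n},k_{n})\) to \(k_{0}+\cdots+k_{n}\) and to \(u_{0}\cdots u_{n}\). The product monoid's order of factors is immaterial, since \(\naturals \times U \cong U \times \naturals\) as monoids. The result is the stated graded monad.

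\textbf{Where the real work lies.} The corollary itself is immediate once the preceding lemma holds, so the substantive check is that lemma, which was only justified as "analogous." Four axioms must be verified.
\begin{itemize}
\item \emph{Unit of \(F^{n}\).} Since \(\lambda^{u,0} = \id\) and the unit of \(F^{n}\) is an identity, this axiom holds trivially.
\item \emph{Multiplication of \(F^{n}\).} The multiplication is the equality \(F^{m}F^{n} = F^{m+n}\). The axiom therefore reduces to \(\lambda^{u,m+n} = (F^{m}\lambda^{u,n}) \circ (\lambda^{u,m}_{F^{n}})\), read as \(P_uF^mF^n \to F^mP_uF^n \to F^mF^nP_u\). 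This follows by induction on \(m\), unfolding the recursive definition.
\item \emph{Unit of \(P\).} We need \(\lambda^{e,n} \circ \eta_{F^{n}} = F^{n}\eta\). This follows by induction on \(n\), using the unit axiom of \(\lambda\) and naturality of \(\eta\).
\item \emph{Multiplication of \(P\).} We need \(\lambda^{u\cdot v,n} \circ \mu^{u,v}_{F^{n}} = F^{n}\mu^{u,v} \circ \lambda^{u,n}_{P_{v}} \circ P_{u}\lambda^{v,n}\). This is the main obstacle. It also goes by induction on \(n\). In the inductive step, first use naturality of \(\mu^{u,v}\) and of \(\lambda^{u}\) to slide the \(F\)-layer outward. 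Then apply the multiplication axiom of the graded Eilenberg-Moore-law once at the outermost \(F\). Finally apply the inductive hypothesis under \(F\).
\end{itemize}
The care needed is in the whiskering indices, exactly as in \cite[Lemma~6.3.8]{sokolova2005phd}, now with grades \(u,v\) carried along. No step uses anything beyond naturality and the graded associativity and unitality of \(P\), so the ungraded proof transfers verbatim.
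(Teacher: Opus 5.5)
Your proposal is correct and takes the same route as the paper. The paper derives this corollary by combining the Eilenberg--Moore analogue of \Cref{prop:distr-law-from-functor-distr-law}, which gives a graded distributive law \(P_{u}F^{n} \to F^{n}P_{u}\), with the regrading of \Cref{cor:coproduct-distr-law} along the projections of \(\mathsf{Samp}_{U}\). Your inductive checks of the four axioms, with naturality of \(\lambda^{u}\) doing the work in the \(P\)-multiplication case, correctly fill in what the paper only calls ``analogous''. One small slip in your closing remark: those checks use the Eilenberg--Moore-law axioms and naturality, not the graded associativity and unitality of \(P\), which are only needed afterwards in \Cref{prop:composite-graded-monad}.
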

\end{toappendix}

\begin{example}%
  \label{cor:samp-graded-strong-writer}%
  \label{ex:sampling-writer}
  We will consider the \(\Samp_{T}\)-graded monad \(M_{t}(B^{k} \times -)\) arising as the composite of a constantly \(T\)-graded strong monad on a cartesian category, \(M_{t} = M\) for all \(t \in T\), and of the \((\naturals, +, 0)\)-graded writer monad, \({\writer{B}}_{k} = (B^{k} \times -)\), induced by the functor \((B \times -)\).
  The strength of a strong monad $M$ on a cartesian category induces a graded distributive law of the writer monad $\writer{B} = (B^k \times {-})$ over the constantly graded monad $M$.
  By \Cref{cor:coproduct-distr-law}, the functors \(M(B^{k_{0} + \cdots + k_{n}} \times -)\) carry a \(\Samp_{T}\)-graded monad structure.
 \end{example}

\begin{remark}
  \Cref{cor:samp-graded-strong-writer} is an instance of a more general construction, in which a \emph{functor-over-graded-monad} (or \emph{Kleisli}) distributive law $\lambda^t\colon FM_t \to M_tF$ of a functor $F$ over a graded monad $M$ induces a graded distributive law $F^nM_t\to M_tF^n$.
  An analogue construction in the inverse direction, i.e.\ \emph{graded-monad-over-functor} (or \emph{Eilenberg-Moore}) distributive laws $\lambda^t \colon M_tF \to FM_t$ determine graded monad distributive laws $F^nM_t \to M_tF^n$ (\Cref{sec:monad-functor-graded-distributive-laws}).
\end{remark}

\subsection{Feller-Dynkin processes are graded coalgebras}

We conclude the section by showing that Feller-Dynkin processes construct \(\Samp_{T}\)-graded coalgebras and that the notions of behavioural and trace equivalences that we obtain are closely related to those in the literature~\cite{chen2023behavioural}.

\begin{toappendix}
\begin{remark}%
  \label{rem:graded-kleisli-composition}
  Given a \(T\)-graded coalgebra \(\gamma_{t} \colon X \to M_{t}X\) of a \(T\)-graded monad \(M\), we can define a \intro[coalgebra composition]{composition operation} \((\ccomp)\) which is the graded version of the one for functor coalgebras~\cite[Section~6.3.2]{sokolova2005phd}: \(\gamma_{s} \ccomp \gamma_{t} = \gamma_{s} \dcomp M_{s}\gamma_{t} \dcomp \mu^{s,t}_{X}\).
  The axioms of graded coalgebras can be rephrased as \(\gamma_{s} \ccomp \gamma_{t} = \gamma_{s \cdot t}\) and \(\gamma_{e} = \id_{\ccomp} = \eta\);
  the graded coalgebra axioms, then, ensure that the composition \((\ccomp)\) is associative and unital.
  This operation is composition in the graded Kleisli category of the graded monad \(M_{t}\)~\cite[Section~3.2]{fujii2016formal}.
\end{remark}

\begin{proposition}%
  \label{prop:kleisli-labelled-are-samp}
  Consider a monoid \((T, \cdot, e)\), a \(T\)-graded monad \(M\) and an endofunctor \(F\) on a category \(\cat{C}\).
  Suppose there is a \kl{graded Kleisli-law} \(\lambda^{t} \colon FM_{t} \to M_{t}F\).
  Then, \(\Samp_{T}\)-graded coalgebras of the composite \(\Samp_{T}\)-graded monad \(M_{t_{0} \cdots t_{n}}F^{k_{0}+ \cdots + k_{n}}\) are in bijection with Kleisli-labelled coalgebras of \(M_{t}\) with labels in \(F\).
\end{proposition}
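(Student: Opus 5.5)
The plan is to exploit the fact that $\Samp_{T} = T + \naturals$ is a coproduct monoid. It is therefore generated by the elements $(t,0)$ for $t \in T$ together with the single element $(e,1)$. Throughout I write $N_{s} = M_{l(s)}F^{c(s)}$ for the composite $\Samp_{T}$-graded monad of \Cref{cor:samp-graded-monads}. A graded coalgebra $\delta_{s} \colon X \to N_{s}X$ is a monoid-homomorphism-like assignment into the graded Kleisli category of $N$ (\Cref{rem:graded-kleisli-composition}). Such an assignment should be determined by, and freely determined by, its values on the generators, subject only to the relations of each summand. The two directions of the bijection are then ``restrict to generators'' and ``extend along normal forms''.

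\emph{From $\Samp_{T}$-graded coalgebras to Kleisli-labelled coalgebras.}
\begin{itemize}
\item Given $\delta$, set $\gamma_{t} = \delta_{(t,0)} \colon X \to M_{t}F^{0}X = M_{t}X$ and $l = \delta_{(e,1)} \colon X \to M_{e}FX$.
\item To see that $\gamma$ is a graded $M$-coalgebra, restrict $N$ along the inclusion monoid morphism $T \to \Samp_{T}$, using \Cref{lemma:regrading}.
\item I then check that this restriction is exactly $M$. Its unit is $\eta^{M} \pll \eta^{F^{0}} = \eta^{M}$. Its multiplication is $\mu^{M}$, because the distributive law component $\lambda^{0,t} \colon F^{0}M_{t} \to M_{t}F^{0}$ is the identity, by the unit axiom of a graded distributive law for the $F^{n}$-side.
\item Since $\delta$ is multiplicative and unital on all of $\Samp_{T}$, its restriction $\gamma$ satisfies the axioms of \Cref{def:gradedcoalgebra}. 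No condition on $l$ is required.
\end{itemize}

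\emph{From Kleisli-labelled coalgebras to $\Samp_{T}$-graded coalgebras.}
\begin{itemize}
\item Given $(\gamma, l)$, regard $\gamma_{t}$ as a morphism $X \to N_{(t,0)}X$ and $l$ as a morphism $X \to N_{(e,1)}X$.
\item Using the graded Kleisli composition $\ccomp$ of $N$, define $\delta$ on normal forms by
\[\delta_{(t_{0},k_{0},\dots,t_{n},k_{n})} = \gamma_{t_{0}} \ccomp l^{\ccomp k_{0}} \ccomp \cdots \ccomp \gamma_{t_{n}} \ccomp l^{\ccomp k_{n}},\]
with $l^{\ccomp 0} = \eta$.
\item Normal forms in the coproduct monoid are unique, so $\delta$ is well defined.
\item Unitality holds because $\delta_{(e,0)} = \gamma_{e} \ccomp \eta = \eta$.
\item For multiplicativity, $\delta_{s \cdot s'} = \delta_{s} \ccomp \delta_{s'}$, I go through the three cases of \Cref{eq:mult-sampling-interval}:
  \begin{itemize}
  \item Plain concatenation is immediate from associativity of $\ccomp$.
  \item When $j_{m} = 0$ the case merges $s_{m}$ with $t_{0}$, using $\gamma_{s_{m}} \ccomp \gamma_{t_{0}} = \gamma_{s_{m}\cdot t_{0}}$.
  \item When $t_{0} = e$ the case merges observation counts, using $\gamma_{e} = \eta$ and unitality of $\ccomp$.
  \end{itemize}
\end{itemize}

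\emph{The two constructions are mutually inverse.}
\begin{itemize}
\item Restricting the extended $\delta$ to the generators returns $\gamma$ and $l$, by unitality of $\ccomp$.
\item Conversely, any $\Samp_{T}$-graded coalgebra $\delta'$ satisfies $\delta'_{s} = \delta'_{g_{1}} \ccomp \cdots \ccomp \delta'_{g_{r}}$ whenever $s = g_{1}\cdots g_{r}$ is a factorisation into generators. Hence $\delta'$ agrees with the extension of its own restriction.
\end{itemize}

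The main obstacle is the legitimacy of computing with $\ccomp$ across mixed grades. Associativity and unitality of $\ccomp$ for the composite monad $N$ hold because $N$ is a genuine graded monad (\Cref{prop:composite-graded-monad} and \Cref{cor:coproduct-distr-law}), so its graded Kleisli category is a category. The delicate point is the identification, for pure-$T$ grades, of $\ccomp$ in $N$ with $\ccomp$ in $M$. I would verify this first by unfolding the composite multiplication $(M_{s} \pll \lambda \pll F^{k}) \dcomp (\mu^{M} \pll \mu^{F})$ at $F$-degree $0$, where both the $\lambda$-component and $\mu^{F}$ are identities.
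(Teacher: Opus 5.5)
Your proposal is correct and follows essentially the paper's proof. You restrict a $\Samp_T$-graded coalgebra to the generators $(t,0)$ and $(e,1)$. Conversely, you extend $(\gamma,l)$ along normal forms as $\gamma_{t_0}\ccomp l^{\ccomp k_0}\ccomp\cdots\ccomp\gamma_{t_n}\ccomp l^{\ccomp k_n}$, and injectivity comes from decomposing any coalgebra into its generator components. You are also more careful than the paper on two points: you check that $\ccomp$ for the composite monad agrees with that of $M$ at pure-$T$ grades (via $\lambda^{0,t}=\id$), and you note that merging adjacent time components needs the coalgebra law $\gamma_s\ccomp\gamma_t=\gamma_{s\cdot t}$, not only associativity and unitality of $\ccomp$.
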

\begin{proof}
  Suppose we have a \(\Samp_{T}\)-graded coalgebra
  \[\hat{\gamma}_{(t_{0},k_{0}, \dots,t_{n}, k_{n})} \colon X \to M_{t_{0} \cdots t_{n}}F^{k_{0}+ \cdots +k_{n}}X.\]
  Then, we can consider \(\gamma_{t} = \hat{\gamma}_{(t,0)} \colon X \to M_{t}X\) and \(l = \hat{\gamma}_{(e,1)} \colon X \to M_{e}FX\).
  We check that this defines a Kleisli-labelled coalgebra using the composition defined in \Cref{rem:graded-kleisli-composition}.
  \begin{align*}
    & \gamma_{s} \ccomp \gamma_{t} & &\gamma_{e} \\
    & = \hat{\gamma}_{(s,0)} \ccomp \hat{\gamma}_{(t,0)} && = \hat{\gamma}_{(e,0)} \\
    & = \hat{\gamma}_{(s,0) \cdot (t,0)} && = \eta \\
    & = \hat{\gamma}_{(s \cdot t, 0)} &&\\
    & = \gamma_{s \cdot t} &&
  \end{align*}
  This mapping is injective because \(\Samp_{T}\)-graded coalgebras are determined by their image on their generators, \(t \in T\) and \(1 \in \naturals\):
  every component \(\hat{\gamma}_{(t_{0},k_{0}, \dots, t_{n},k_{n})}\) of a \(\Samp_{T}\)-graded coalgebra can be decomposed as
  \begin{align*}
    \hat{\gamma}_{(t_{0},k_{0}, \dots, t_{n},k_{n})} &= \hat{\gamma}_{(t_{0},0)} \ccomp \hat{\gamma}_{(e,1)}^{\ccomp k_{0}} \ccomp \cdots \ccomp \hat{\gamma}_{(t_{n},0)} \ccomp \hat{\gamma}_{(e,1)}^{\ccomp k_{n}} \\
    & = \gamma_{t_{0}} \ccomp l^{\ccomp k_{0}} \ccomp \cdots \ccomp \gamma_{t_{n}} \ccomp l^{\ccomp k_{n}}.
  \end{align*}
  Then, two \(\Samp_{T}\)-coalgebras are equal whenever all their \((t,0)\)-components and their \((e,1)\)-components are equal.

  We check that the assignment is also surjective.
  Given a Kleisli-labelled coalgebra \((\gamma_{t},l)\), we define morphisms
  \[\hat{\gamma}_{(t_{0},k_{0}, \dots, t_{n},k_{n})} = \gamma_{t_{0}} \ccomp l^{\ccomp k_{0}} \ccomp \cdots \ccomp \gamma_{t_{n}} \ccomp l^{\ccomp k_{n}}.\]
  These form a \(\Samp_{T}\)-graded coalgebra by associativity and unitality of the operation \((\ccomp)\) from \Cref{rem:graded-kleisli-composition}.
  By construction, the \((t,0)\)-components of \(\hat{\gamma}\) are precisely \(\gamma_{t}\) and its \((e,1)\)-component is \(l\).
\end{proof}
\end{toappendix}

We consider the composite \(\Samp_{T}\)-graded monad of \Cref{cor:samp-graded-strong-writer}: \kl{Feller-Dynkin processes} determine \(\Samp_{T}\)-graded coalgebras of this monad.
A similar construction has already appeared as a composition operation of functor coalgebras~\cite[Section~6.3.2]{sokolova2005phd}.

\begin{corollary}%
  \label{cor:fd-graded-coalgebra}
  A \kl{Kleisli-labelled coalgebra}, \((\gamma_{t}, \obs)\), for a constantly-graded strong monad \(M\) on a cartesian category \(\cat{C}\) and a writer functor, \((B \times -)\), determines a \(\Samp_{T}\)-graded coalgebra \(\hat{\gamma}_{(t_{0},k_{0} \dots, t_{n},k_{n})} \colon X \to M(B^{k} \times X)\) defined by induction as
  \begin{align*}
    \hat{\gamma}_{(t_{0},k_{0})} &\colon X \xrightarrow{\gamma_{t_{0}}} X \xrightarrow{\cp} X^{k_{0}+1} \xrightarrow{\obs^{k_{0}} \tensor \id} B^{k_{0}} \tensor X\\
    \hat{\gamma}_{(t_{0},k_{0} \dots, t_{n+1},k_{n+1})} & \colon X \xrightarrow{\hat{\gamma}_{(t_{0},k_{0} \dots, t_{n},k_{n})}} B^{k} \tensor X \xrightarrow{\id \tensor \hat{\gamma}_{(t_{n},k_{n})}} B^{k+k_{n+1}} \tensor X
  \end{align*}
  where composition is in the Kleisli category of \(M\), \(\cp\) denotes the diagonal morphism, and \(\obs^{k_{0}}\) denotes the \(k_{0}\)-fold monoidal product of \(\obs\) with itself in the Kleisli category of \(M\).
\end{corollary}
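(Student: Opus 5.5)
The plan is to obtain the corollary by instantiating \Cref{prop:kleisli-labelled-are-samp} with the composite $\Samp_{T}$-graded monad of \Cref{cor:samp-graded-strong-writer}, and then to check that the explicit inductive formula agrees with the one produced there.

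First, I will set up the graded Kleisli-law. For $F = (B \times -)$ and the constantly graded strong monad $M_{t} = M$, the strength $\sigma_{B,X} \colon B \times MX \to M(B \times X)$ gives a family $\lambda^{t} = \sigma$, constant in $t$. The strength axioms are compatibility with $\mu$ and with $\eta$. Since the grading is constant, these are exactly the two diagrams of a graded Kleisli-law. By \Cref{prop:distr-law-from-functor-distr-law} and \Cref{cor:samp-graded-monads}, this produces the $\Samp_{T}$-graded monad $M(B^{k_{0}+\cdots+k_{n}} \times -)$, which is the one described in \Cref{cor:samp-graded-strong-writer}.

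Second, I will translate the given data into a Kleisli-labelled coalgebra in the sense of \Cref{prop:kleisli-labelled-are-samp}. The transition structure is $\gamma_{t}$. The labelling morphism is $l = \langle \obs, \id_{X}\rangle \dcomp \eta_{B \times X} \colon X \to M_{e}(B \times X)$, as in \Cref{ex:fd-processes-labelled-coalgebra}. The proposition then yields a $\Samp_{T}$-graded coalgebra
\[\hat{\gamma}_{(t_{0},k_{0},\dots,t_{n},k_{n})} = \gamma_{t_{0}} \ccomp l^{\ccomp k_{0}} \ccomp \cdots \ccomp \gamma_{t_{n}} \ccomp l^{\ccomp k_{n}}.\]
The graded coalgebra axioms come for free from that proposition, since they follow from associativity and unitality of $\ccomp$ (\Cref{rem:graded-kleisli-composition}).

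Third, I will show that this coincides with the inductive description in the statement. Unfolding $\ccomp$ for the composite monad, each multiplication $\mu$ uses the strength to pull the accumulated $B^{k}$ factor through $M$. In Kleisli notation this means that postcomposing with a morphism $f$ acts as $\id_{B^{k}} \tensor f$, where $\tensor$ is the premonoidal product on $\kleisli{M}$ induced by the strength. This gives the inductive step of the statement directly.

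For the base case, I need $l^{\ccomp k_{0}} = \cp \dcomp (\obs^{k_{0}} \tensor \id)$. I will prove this by induction on $k_{0}$. The key fact is that $l$ is the pure (Kleisli-lifted) map $\langle \obs, \id\rangle$, and pure maps are central. So iterating copies the state $k_{0}+1$ times and applies $\obs$ to the first $k_{0}$ copies, with no dependence on the order of effects.

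I expect the main obstacle to be this last identification, for two reasons. The first is the interpretation of $\obs^{k_{0}}$ as a $k_{0}$-fold monoidal product in $\kleisli{M}$: this is only well defined up to the premonoidal ordering unless $M$ is commutative or $\obs$ is deterministic. Here $\obs$ is deterministic (a map $X \to B$ lifted by $\eta$), which resolves it. The second is bookkeeping of the strength through repeated multiplications, which I will handle by citing the standard fact that the strength makes $\kleisli{M}$ premonoidal with $\eta$-lifted maps central.
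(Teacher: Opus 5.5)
Your overall route is the intended one. The paper gives no separate argument for this corollary; it is meant to follow from \Cref{prop:kleisli-labelled-are-samp} instantiated with the strength-induced law of \Cref{cor:samp-graded-strong-writer}. Your Steps 1--3 are correct: the strength axioms give the two graded Kleisli-law diagrams, $\hat\gamma = \gamma_{t_0}\ccomp l^{\ccomp k_0}\ccomp\cdots$, and $f \ccomp g = f \dcomp (\id_{B^k}\tensor g)$ for the composite multiplication. For Step 3 to yield the stated inductive clause, its last factor must be $\hat\gamma_{(t_{n+1},k_{n+1})}$; the $\hat\gamma_{(t_n,k_n)}$ in the statement is a typo.

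The gap is in your base case. You resolve the ambiguity of $\obs^{k_0}$ by declaring $\obs$ deterministic, and then argue that $l=\langle\obs,\id_X\rangle\dcomp\eta_{B\times X}$ is pure, hence central. The statement does not grant this. It defines $\obs^{k_0}$ as a monoidal product in the Kleisli category, i.e.\ $\obs\colon X\to MB$, and the effectful case is the one the paper needs. In the introduction, $o\colon\{0,L,R,2\}\to\distr(\{yes,no\})$ is stochastic, the coalgebra \eqref{eq:samp-graded-rep-system} lands in $\distr(\{yes,no\}^k\times X)$, and the text stresses that repeated tests may disagree. With deterministic $\obs$, all $k_0$ observations in a cluster coincide, since the state is only copied between them, so the multiplicities $k_i$ would carry no information.

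For effectful $\obs$ the labelling is $l=\cp\dcomp(\obs\tensor\id_X)$, which is not central, so your argument breaks. The repair is to apply centrality to the pure copy maps (written $\cp$ at every arity) rather than to $l$. By Step 3 and the induction hypothesis $l^{\ccomp k}=\cp\dcomp(\obs^{k}\tensor\id_X)$,
\begin{align*}
l^{\ccomp(k+1)} &= \cp\dcomp(\obs^{k}\tensor\id_X)\dcomp(\id_{B^k}\tensor\cp)\dcomp(\id_{B^k}\tensor\obs\tensor\id_X)\\
&= \cp\dcomp(\id_{X^k}\tensor\cp)\dcomp(\obs^{k}\tensor\id_X\tensor\id_X)\dcomp(\id_{B^k}\tensor\obs\tensor\id_X),
\end{align*}
where the second step slides the central $\cp$ past $\obs^{k}\tensor\id_X$. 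This equals $\cp\dcomp(\obs^{k+1}\tensor\id_X)$ once $\obs^{k+1}$ is read as performing the $k+1$ tests from left to right, which is exactly the order in which $\ccomp$ sequences them. So for an arbitrary strong $M$ the statement holds with this convention for $\obs^{k_0}$, and for commutative monads such as $\distr$, $\subdistr$ and $\giry$ no convention is needed. Your Steps 1 and 2 are unaffected, since \Cref{prop:kleisli-labelled-are-samp} accepts any labelling $l\colon X\to M(B\times X)$.
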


\begin{example}%
  \label{ex:fd-processes-coalgebra}
  A \kl{finitary Feller-Dynkin process} \((\gamma_{t}, \obs)\) determines a \(\Samp_{\posreals}\)-graded coalgebra
  \(\hat{\gamma}_{(t_{0},k_{0}, \dots, t_{n},k_{n})} \colon X \to \subdistr(B^{k} \times X)\).
  In particular, the finitary Feller-Dynkin process in \Cref{ex:fd-process-rep-system} determines the \(\Samp_{\posreals}\)-graded coalgebra in~\eqref{eq:samp-graded-rep-system}.
  Similarly, a \kl{Feller-Dynkin process} determines a \(\Samp_{\posreals}\)-graded coalgebra \(\hat{\gamma}_{(t_{0},k_{0}, \dots, t_{n},k_{n})} \colon X \to \subgiry(B^{k} \times X)\).
\end{example}

\noindent Behavioural equivalence and trace equivalence for graded coalgebras generalise bisimilarity and trace equivalence of Feller-Dynkin processes~\cite{chen2023behavioural}.
The next result relies on the characterisation of bisimilarity of \kl{Feller-Dynkin processes} as cospans of Feller-Dynkin homomorphisms~\cite[Theorem~60]{chen2023behavioural}.
Since \kl{Feller-Dynkin processes} are a subcategory of \(\Samp_{\posreals}\)-graded coalgebras for the graded monad \(\subgiry(B^{k} \times -)\) (\Cref{prop:fd-subcategory}), behavioural equivalence of \kl{Feller-Dynkin processes} may be finer than that of the corresponding graded coalgebras.

\begin{toappendix}
\begin{remark}
  Recall a consequence of Kolmogorov's extension theorem~\cite[Corollary~14.44]{klenke2008probability}.
  For every partial Markov monoid \(\gamma_{t} \colon X \to \subgiry(X)\), there is a morphism \(p^{\gamma} \colon X \to \subgiry(X^{\tensor \posreals})\) such that, for all \(t_{1}\leq \dots \leq t_{n} \in \posreals\), \(p^{\gamma} \dcomp \langle \proj[t_{1}], \dots, \proj[t_{n}] \rangle = \gamma_{t_{1}} \dcomp \langle \id, \gamma_{t_{2}-t_{1}}  \dcomp \cdots \langle \gamma_{t_{n-1}-t_{n-2}}, \gamma_{t_{n}-t_{n-1}} \rangle\rangle\).
  Note that \(X^{\tensor \posreals}\) denotes the infinite product of \(X\), which is the space of functions \(\posreals \to X\) with the product \(\sigma\)-algebra.
\end{remark}

\begin{proposition}%
  \label{prop:fd-subcategory}
  Feller-Dynkin processes and their homomorphisms~\cite[Definition~53]{chen2023behavioural} form a subcategory of \(\Samp_{\posreals}\)-graded coalgebras for the graded monad \(\subgiry(B^{k} \times -)\).
\end{proposition}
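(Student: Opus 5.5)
To prove \Cref{prop:fd-subcategory}, I will build a functor from the category of Feller-Dynkin processes, with the homomorphisms of \cite[Definition~53]{chen2023behavioural}, into \(\GCoAlg{\subgiry(B^{k} \times -)}\) for \(B = 2^{A}\), and then show that it is injective on objects and faithful. Its image will then be a subcategory isomorphic to the category of Feller-Dynkin processes.

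On objects, I send a Feller-Dynkin process \((\gamma_{t}, \obs)\) to the \(\Samp_{\posreals}\)-graded coalgebra \(\hat{\gamma}\) given by \Cref{cor:fd-graded-coalgebra} and \Cref{ex:fd-processes-coalgebra}. This uses the strong monad \(\subgiry\) on \(\Meas\), regarded as constantly \(\posreals\)-graded, together with the writer monad \((B \times -)\). Well-definedness is exactly \Cref{prop:kleisli-labelled-are-samp}, instantiated with the graded Kleisli-law coming from the strength (\Cref{cor:samp-graded-strong-writer}).

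Injectivity on objects comes from the bijection in \Cref{prop:kleisli-labelled-are-samp}. The process is recovered as \(\gamma_{t} = \hat{\gamma}_{(t,0)}\) and \(l = \hat{\gamma}_{(0,1)} = \langle \obs, \id\rangle \dcomp \eta\). From \(l\) we get back \(\obs\), because \(\eta\) of the Giry monad is a split mono (Dirac measures determine their point), so \(l\) followed by the marginal onto \(B\) returns \(\obs\) deterministically.

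On morphisms, a Feller-Dynkin homomorphism \(h \colon X \to Y\) is a measurable map satisfying two conditions: it preserves observations, \(\obs_{X} = h \dcomp \obs_{Y}\), and it commutes with the kernels, \(\gamma_{t} \dcomp \subgiry(h) = h \dcomp \delta_{t}\). I send \(h\) to itself, and I must check that \(\hat{\gamma}_{\sigma} \dcomp \subgiry(B^{k} \times h) = h \dcomp \hat{\delta}_{\sigma}\) for every sampling interval \(\sigma\). By the decomposition \(\hat{\gamma}_{(t_{0},k_{0},\dots)} = \gamma_{t_{0}} \ccomp l^{\ccomp k_{0}} \ccomp \cdots\) from the proof of \Cref{prop:kleisli-labelled-are-samp}, it suffices to check this on the generators \((t,0)\) and \((0,1)\), and then show that commuting with \(h\) is preserved by \(\ccomp\).

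For the generators, the case \((t,0)\) is the kernel condition. The case \((0,1)\) follows from \(\langle \obs_{X}, \id\rangle \dcomp (B \times h) = h \dcomp \langle \obs_{Y}, \id\rangle\), which holds by observation preservation, together with naturality of \(\eta\).

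For closure under \(\ccomp\), I will use naturality of \(\mu\) and of the graded Kleisli-law (the strength) in the carrier. From this, \(\alpha \dcomp \subgiry F^{j}(h) = h \dcomp \alpha'\) and \(\beta \dcomp \subgiry F^{k}(h) = h \dcomp \beta'\) together imply \((\alpha \ccomp \beta) \dcomp \subgiry F^{j+k}(h) = h \dcomp (\alpha' \ccomp \beta')\). I expect this step to be the main, though routine, obstacle: it is a diagram chase through \(\subgiry(B^{j} \times \subgiry(B^{k} \times -))\) using the strength.

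Functoriality is immediate, since morphisms are sent to themselves. The same identity shows that the functor is faithful. Hence the image is a (non-full) subcategory isomorphic to the category of Feller-Dynkin processes. Fullness is not claimed: graded homomorphisms need not respect the continuity data, which is consistent with the remark that behavioural equivalence of Feller-Dynkin processes may be finer.
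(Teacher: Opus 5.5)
Your plan matches the paper's: assign each process the \(\Samp_{\posreals}\)-graded coalgebra of the Kleisli-labelled construction, and send each homomorphism to its underlying map. The difference is in which steps each of you actually writes out.

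\textbf{The kernel condition.} The paper reads the transition condition of Definition~53 as preservation of path measures, \(p^{\gamma} \dcomp h^{\#} = h \dcomp p^{\delta}\). Its proof is essentially the derivation of your kernel condition from that, by projecting onto time \(t\) in the Kleisli category of \(\subgiry\) and using \(p^{\gamma} \dcomp \proj[t] = \gamma_{t}\) from the Kolmogorov remark:
\[\gamma_{t} \dcomp h = p^{\gamma} \dcomp \proj[t] \dcomp h = p^{\gamma} \dcomp h^{\#} \dcomp \proj[t] = h \dcomp p^{\delta} \dcomp \proj[t] = h \dcomp \delta_{t}.\]
You take the kernel condition as part of the definition. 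Under the path-level formulation, you need to add this step.

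\textbf{Extension to all grades.} The paper stops at the kernel condition and only asserts that \(h\) is then a homomorphism of the \(\Samp_{\posreals}\)-graded coalgebras. Your reduction to the generators \((t,0)\) and \((0,1)\) makes that step explicit; the \((0,1)\) case uses observation preservation and naturality of \(\eta\), which the paper never visibly uses. You also need closure under \(\ccomp\), which is easier than you expect. It holds for any graded monad \(M\) by naturality of \(\mu^{s,t}\) alone, with no chase through the strength:
\[(\alpha \ccomp \beta) \dcomp M_{s \cdot t}h = \alpha \dcomp M_{s}\beta \dcomp M_{s}M_{t}h \dcomp \mu^{s,t} = \alpha \dcomp M_{s}h \dcomp M_{s}\beta' \dcomp \mu^{s,t} = h \dcomp (\alpha' \ccomp \beta').\]

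\textbf{Injectivity and faithfulness.} These arguments go beyond what the paper proves. They are fine for the measurable data. To recover \(\obs\) you only need that Dirac measures on the finite discrete \(B = 2^{A}\) determine their point; the unit of the Giry monad is not a split mono in \(\Meas\) in general. However, a Feller-Dynkin process also carries a topology, and its homomorphisms are continuous, and the graded coalgebra forgets both. So ``isomorphic onto its image'' holds only once those continuity conditions are disregarded, as the paper does.
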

\begin{proof}
  As shown in \Cref{ex:fd-processes-coalgebra}, we can assign a graded coalgebra to each Feller-Dynkin process.
  We check that Feller-Dynkin homomorphisms are also graded coalgebra homomorphisms.
  A Feller-Dynkin homomorphism, \(h \colon (\gamma_{t}, a) \to (\delta_{t},b)\), is a morphism between their state spaces, \(h \colon X \to Y\), satisfying some continuity conditions, and that preserves the observations, \(h \dcomp b = a\), and the transitions, \(p^{\gamma} \dcomp h^{\#} = h \dcomp p^{\delta}\), where \(h^{\#} \colon X^{\tensor \posreals} \to Y^{\tensor \posreals}\) is the morphism given by precomposing with \(h\).

  Suppose \(h \colon (\gamma_{t}, a) \to (\delta_{t},b)\) is a Feller-Dynkin homomorphism.
  Then,
  \(\gamma_{t} \dcomp h = p^{\gamma} \dcomp \proj[t] \dcomp h = p^{\gamma} \dcomp h^{\#} \dcomp \proj[t] = h \dcomp p^{\delta} \dcomp \proj[t] = h \dcomp \delta_{t}\).
  This gives that \(h\) is also a homomorphism between the corresponding coalgebras.

\end{proof}
\end{toappendix}

\begin{theoremrep}%
  \label{th:fd-bisimilarity}
  If two states \(x,y\) in a Feller-Dynkin process \((\gamma_{t}, \obs)\) are bisimilar as in~\cite[Definition~23]{chen2023behavioural}, then they are behaviourally equivalent in the corresponding graded coalgebra.
\end{theoremrep}
\begin{proof}
  Two states \(x,y\) in a Feller-Dynkin process \((\gamma_{t}, \obs)\) are bisimilar if there is a cospan of Feller-Dynkin homomorphisms~\cite[Theorem~60]{chen2023behavioural}.
  By \Cref{prop:fd-subcategory}, a cospan of Feller-Dynkin processes determines a cospan of graded coalgebras.
  Therefore, if two states in a Feller-Dynkin process are bisimilar, then they must be behaviourally equivalent in the corresponding coalgebras.
\end{proof}

\noindent Intuitively, two \kl{Feller-Dynkin processes} are trace equivalent if their traces coincide when sampled a finite number of times.
The graded coalgebra associated to a \kl{monadic Feller-Dynkin process} reflects this condition: the grading monoid consists of finite lists of sampling times (\Cref{cor:fd-graded-coalgebra}).

\begin{theoremrep}%
  \label{th:fd-trace-equivalence}
  Two states \(x,y\) in a Feller-Dynkin process \((\gamma_{t}, \obs)\) are trace equivalent as in~\cite[Definition~29]{chen2023behavioural} if and only if they are trace equivalent in the corresponding graded coalgebra.
\end{theoremrep}
\begin{proof}
  Two states \(x,y \in X\) in a Feller-Dynkin process \((\gamma_{t}, \obs)\) are trace equivalent if, for all set of times \(\{t_{n} \in \posreals \mid n \in \naturals\}\) and all  \(U \subseteq A^{\tensor \posreals}\) such that \(\proj[(t_{n}, n \in \naturals)](U)\) is measurable in \(A^{\tensor \naturals}\), their traces coincide, \(p^{\gamma} \dcomp \obs^{\#} (U \mid x) = p^{\gamma} \dcomp \obs^{\#} (U \mid y) \).
  This is equivalent to \(\hat{\gamma}_{(t_{0},1,\dots,t_{n},1)} \dcomp \proj[A^{n+1}] (V \mid x) = \hat{\gamma}_{(t_{0},1,\dots,t_{n},1)} \dcomp \proj[A^{n+1}] (V \mid y)\), for all \(n \in \naturals\) and all measurable \(V \subseteq A^{n+1}\).
  This last condition is precisely trace equivalence for coalgebras (\Cref{def:trace-equivalence}).
\end{proof}

\begin{toappendix}
\begin{proposition}%
  \label{prop:em-labelled-are-samp}
  Consider a monoid \((T, \cdot, e)\), a \(T\)-graded monad \(M\) and an endofunctor \(F\) on a category \(\cat{C}\).
  Suppose there is a \kl{graded Eilenberg-Moore-law} \(\lambda^{t} \colon M_{t}F \to FM_{t}\).
  Then, \(\Samp_{T}\)-graded coalgebras of the composite \(\Samp_{T}\)-graded monad \(F^{k_{0}+ \cdots + k_{n}}M_{t_{0} \cdots t_{n}}\) are in bijection with Eilenberg-Moore-labelled coalgebras of \(M_{t}\) with \(F\)-labels.
\end{proposition}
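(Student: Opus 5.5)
This mirrors the proof of \Cref{prop:kleisli-labelled-are-samp}, with the roles of the two factors swapped: the composite is now \(F^{k_{0}+\cdots+k_{n}}M_{t_{0}\cdots t_{n}}\), obtained from the graded Eilenberg-Moore-law through the graded distributive law of the previous lemma and \Cref{cor:coproduct-distr-law}. I will use the graded Kleisli composition \((\ccomp)\) of \Cref{rem:graded-kleisli-composition} for this composite \(\Samp_{T}\)-graded monad, and construct maps in both directions.

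\emph{From graded coalgebras to labelled coalgebras.} Given a \(\Samp_{T}\)-graded coalgebra \(\hat{\gamma}\), set \(\gamma_{t} = \hat{\gamma}_{(t,0)} \colon X \to F^{0}M_{t}X = M_{t}X\) and \(l = \hat{\gamma}_{(e,1)} \colon X \to FM_{e}X\). On grades of the form \((s,0),(t,0)\), the multiplication of the composite monad should reduce to \(\mu^{s,t}\) of \(M\). The reason is that the distributive law component \(\lambda^{0,t}\) is the identity, since \(F^{0}=\id\); this follows from the unit axiom of the distributive law built from the Eilenberg-Moore-law. Hence \(\gamma_{s}\ccomp\gamma_{t} = \hat{\gamma}_{(s,0)\cdot(t,0)} = \hat{\gamma}_{(s\cdot t,0)} = \gamma_{s\cdot t}\). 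Similarly \(\gamma_{e} = \hat{\gamma}_{(e,0)} = \eta\), because the unit of the composite is \(\eta^{F^{n}}\pll\eta^{M}\) and the first factor is the identity. So \((\gamma,l)\) is an Eilenberg-Moore-labelled coalgebra.

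\emph{Injectivity.} The elements \((t,0)\) and \((e,1)\) generate \(\Samp_{T}\), with
\[(t_{0},k_{0},\dots,t_{n},k_{n}) = (t_{0},0)\cdot(e,1)^{k_{0}}\cdots(t_{n},0)\cdot(e,1)^{k_{n}}.\]
The graded coalgebra axioms therefore force \(\hat{\gamma}_{(t_{0},k_{0},\dots)} = \gamma_{t_{0}}\ccomp l^{\ccomp k_{0}}\ccomp\cdots\ccomp\gamma_{t_{n}}\ccomp l^{\ccomp k_{n}}\). Consequently \(\hat{\gamma}\) is determined by \((\gamma,l)\).

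\emph{Surjectivity.} Given \((\gamma,l)\), define \(\hat{\gamma}\) by the displayed formula. Well-definedness with respect to the normal form of \(\Samp_{T}\) holds because the simplification rules (merging \(s\cdot t\) when \(j_{m}=0\), adding counts when \(t_{0}=e\)) correspond to \(\gamma_{s}\ccomp\gamma_{t}=\gamma_{s\cdot t}\) and \(\gamma_{e}=\eta\) being the \(\ccomp\)-unit. Given this, multiplicativity is just associativity of \(\ccomp\), and unitality holds since \(\hat{\gamma}_{(e,0)}=\gamma_{e}=\eta\). By construction the two maps are mutually inverse.

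The main obstacle is confirming that \(\ccomp\) for the composite monad is associative and unital, and that it restricts correctly on the generators. The restriction amounts to checking that on \(F^{0}\) and \(M_{e}\) the distributive law degenerates to identities or units, which follows from the unit axioms. Associativity and unitality follow from the composite being a graded monad (\Cref{prop:composite-graded-monad}, \Cref{cor:coproduct-distr-law}) together with the general fact that graded Kleisli composition is associative and unital.
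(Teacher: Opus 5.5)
Your proposal is correct and follows the paper's route. The paper proves this by analogy with the Kleisli-labelled case: it restricts to the generators \((t,0)\) and \((e,1)\), uses the decomposition \(\hat{\gamma}_{(t_{0},k_{0},\dots,t_{n},k_{n})} = \gamma_{t_{0}}\ccomp l^{\ccomp k_{0}}\ccomp\cdots\ccomp\gamma_{t_{n}}\ccomp l^{\ccomp k_{n}}\) for injectivity, and defines \(\hat{\gamma}\) by that same formula for surjectivity. You additionally make explicit what the paper leaves implicit: the composite multiplication reduces to \(\mu^{s,t}\) on grades \((s,0),(t,0)\) because the distributive-law component at \(F^{0}\) is the identity, and the simplification rules of \(\Samp_{T}\) are matched by \(\gamma_{s}\ccomp\gamma_{t}=\gamma_{s\cdot t}\) and \(\gamma_{e}=\eta\).
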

\begin{proof}
  This proof is analogous to that of \Cref{prop:kleisli-labelled-are-samp}.
\end{proof}

\end{toappendix}

\section{Existence of a final coalgebra}
\label{sec:final-coalgebras}

Since behavioural equivalence of states is defined in terms of cospans of
coalgebra homomorphisms, a special role falls to the terminal coalgebra of any
graded monad: it is the object in which all possible behaviours of a system
materialize. The terminal morphism out of a coalgebra takes every state to its
behaviour, and two states are equivalent precisely if they are identified in
the terminal coalgebra.
The existence of terminal graded coalgebras—and of terminal non-graded
coalgebras, which are a particular case—is far from trivial~\cite{Adamek_Milius_Moss_2025}.

Terminal coalgebras are commonly constructed using \emph{terminal chains}: those
built by repeatedly applying the functor to the terminal object.\footnote{This
iteration may be carried into the transfinite domain, by taking limits for
inaccessible cardinals.} When this chain converges, the object is the carrier of the terminal coalgebra.
This approach does not directly translate to graded coalgebras: terminal
coalgebras are fixpoints of their endofunctor, but terminal graded coalgebras
are not. As such, any proof based on fixpoint iteration is bound to run into
problems.

We employ a different proof strategy: starting with an accessible monad on a
locally presentable category, we construct a series of categories, finally
arriving at the category of graded coalgebras. Since these constructions
preserve local presentability, we conclude that the category of graded
coalgebras is complete and thus has a terminal object. A similar series of constructions has
previously been used to show accessibility of the Eilenberg-Moore category of an
accessible monad \cite[Theorem 2.78]{Adamek_Rosicky_1994}.

\begin{define}[$\kappa$-accessibility]
  Let \(\kappa\) be a regular cardinal. A poset \((I, \leq)\) is \emph{\(\kappa\)-directed} if for every subset \(I' \subseteq I\) with \(|I'| \leq \kappa\), there is some upper bound \(u \in I\) such that \(i \leq u\) for all \(i\in I'\). A \emph{\(\kappa\)-directed diagram} is a functor whose domain is a \(\kappa\)-directed poset. An object \(X\) in a category \(\cat{C}\) is \(\kappa\)-presentable if the hom-functor \(\cat{C}(X,{-})\) preserves colimits of \(\kappa\)-directed diagrams.
  A category \(\cat{C}\) is called locally \emph{\(\kappa\)-accessible} if the full subcategory spanned by \(\kappa\)-presentable objects is essentially small and every object of \(\cat{C}\) is a \(\kappa\)-directed colimit of \(\kappa\)-presentable objects.
  A functor between \(\kappa\)-accessible categories is called \emph{\(\kappa\)-accessible} if it preserves \(\kappa\)-directed colimits.
  A category or functor is called \emph{accessible} if it is \(\kappa\)-accessible for some regular cardinal \(\kappa\).
  A category is \emph{locally presentable} if it is accessible and cocomplete.
  When \(\kappa = \omega\), we speak of locally finitely presentable categories and finitary functors.
\end{define}

\noindent Locally presentable categories comprise many of the categories relevant in praxis. Most relevant to our examples, the category \(\Set\) is locally finitely presentable, with a functor \(F\colon \Set\to \Set\) being finitary (\(\kappa\)-accessible) roughly if every element \(t\in FX\) mentions only finitely many (\(\kappa\)-many) elements of \(X\). In this sense, the condition of finitarity (accessibility) restricts the branching degree of the coalgebra. Other examples of locally presentable categories include the category of posets and monotone functions, categories of relational structures and relation preserving maps and metric spaces with nonexpansive maps. Also, varieties of finitary algebras, e.g. groups, lattices are locally finitely presentable. Notable non-examples are the category of topological spaces and continuous maps, as well as measurable spaces and measurable maps.

We now first construct a category of graded pre-coalgebras, essentially defined like graded coalgebras but not subject to any axioms, and carve out the category of graded coalgebras in a second step.

\begin{define}[Graded pre-coalgebra]%
  \label{def:graded-pre-coalgebra}
  Let $M$ be a $T$-graded monad on $\cat{C}$. A
  \emph{graded $M$-pre-coalgebra} consists of a $\cat{C}$-object
  $X$, and a family of $\cat{C}$ morphisms $(\gamma_t\colon X \to
  M_tX)_{t\in T}$. Morphisms in this category are defined
  as expected. We denote the category of graded
  $M$-pre-coalgebras by $\GPCoAlg{M}$, of which the
  the category of graded $M$-coalgebras
  $\GCoAlg{M}$ forms a full subcategory.
\end{define}

\begin{toappendix}  
\begin{define}[Inserter category]%
\label{def:inserter}
  Given two functors $F, G\colon \cat{C} \to \cat{D}$, the
  \emph{inserter category}, $\cat{Ins}(F, G)$, has as objects pairs $(X, f)$,
  consisting of a $\cat{C}$-object $X$ and a $\cat{D}$-morphisms
  $f\colon FX \to GX$. A homomorphism between $(X,f)$ and $(Y, g)$ is a
  $\cat{C}$-morphism $h$ such that the following square commutes:
  \[
  \begin{tikzcd}
    {FX} \arrow{r}{f} \arrow{d}{Fh} & {GX} \arrow{d}{Gh}\\
    {FY} \arrow{r}{g} & {GY}
  \end{tikzcd}
  \]
\end{define}

\begin{proposition}[]
  Let $\Delta_T\colon \cat{C} \to \prod_{t\in T}
  \cat{C}$ denote the diagonal functor.
  The category of graded $M$-pre-coalgebras is isomorphic to
  $\textbf{Ins}\left(\Delta_{T}, 
  \left(\prod_{t\in T}M_t\right) \cdot \Delta_{T}
  \right)$.
  \end{proposition}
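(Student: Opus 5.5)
I will prove the isomorphism by writing down two mutually inverse functors that are the identity on underlying \(\cat{C}\)-objects and \(\cat{C}\)-morphisms, and then checking that the morphism conditions on each side agree. The starting point is to unfold the inserter of \Cref{def:inserter} for \(F = \Delta_T\) and \(G = \left(\prod_{t\in T}M_t\right)\cdot\Delta_T\), both functors \(\cat{C}\to\prod_{t\in T}\cat{C}\).

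\emph{Objects.} An object is a pair \((X, f)\) with \(X\in\cat{C}\) and \(f\colon \Delta_T X \to \left(\prod_{t} M_t\right)\Delta_T X\) a morphism of \(\prod_{t\in T}\cat{C}\). Here \(\Delta_T X = (X)_{t\in T}\) and \(\left(\prod_t M_t\right)\Delta_T X = (M_tX)_{t\in T}\). Since composition and identities in the product category are computed componentwise, \(f\) is exactly a \(T\)-indexed family \((f_t\colon X\to M_tX)_{t\in T}\), which is precisely a graded \(M\)-pre-coalgebra in the sense of \Cref{def:graded-pre-coalgebra}. So the assignment \((X,(\gamma_t)_t)\mapsto (X,(\gamma_t)_t)\) is a bijection on objects.

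\emph{Morphisms.} An inserter morphism \((X,f)\to(Y,g)\) is an \(h\colon X\to Y\) with \(f\dcomp G h = F h\dcomp g\). Evaluating at component \(t\) gives \(F h = (h)_t\) and \(Gh = (M_t h)_t\), so the condition becomes \(f_t\dcomp M_t h = h\dcomp g_t\) for every \(t\in T\). This is exactly the homomorphism condition of \Cref{def:graded-coalgebra-morphism}, which is what ``defined as expected'' in \Cref{def:graded-pre-coalgebra} refers to. The maps on hom-sets are therefore identities, so they preserve composition and identities, and the two functors are strictly inverse to each other.

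There is no real obstacle here. The only point requiring care is that morphisms, equality, and composition in \(\prod_{t\in T}\cat{C}\) are componentwise, so that one equation in the product category is equivalent to the family of equations indexed by \(T\). I would state this fact explicitly before carrying out the unfolding above.
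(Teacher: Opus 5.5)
Your proposal is correct and follows the paper's approach. The paper dismisses this as ``immediate from definitions'', and your componentwise unfolding of objects and morphisms of the inserter over \(\prod_{t\in T}\cat{C}\) is exactly the verification it leaves implicit.
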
 
\begin{proof}
  Immediate from definitions.
\end{proof}
\end{toappendix}

\noindent The category $\GPCoAlg{M}$ can be represented as an inserter category, for which accessibility follows by existing results on accessible categories. We thus have the following lemma:

\begin{lemmarep}%
  \label{lem:graded-pre-coalgebras-accessible}
  Let $M$ be an accessible graded monad on an accessible category
  $\cat{C}$, i.e. a graded monad where each functor $M_t$ is accessible. Then $\GPCoAlg{M}$ is accessible.
\end{lemmarep}
\begin{proof}
  Since the subcategory $\cat{ACC}$ of $\cat{CAT}$ consisting of accessible
  categories and accessible functors is closed under products
  \cite[Proposition 2.67]{Adamek_Rosicky_1994}, we have that both
  $\Delta_{T}$ and $\left(\prod_{t\in T}M_t\right) \cdot
  \Delta_{T}$ %
  are accessible functors between accessible categories. Then, we
  have that the respective \kl{inserter category} is accessible
  \cite[Theorem 2.72]{Adamek_Rosicky_1994}, and by extension so is the
  isomorphic category of graded $M$-pre-coalgebras.
\end{proof}

\noindent In the next step, we carve out the full subcategory $\GCoAlg{M}$ of $\GPCoAlg{M}$.
The proof uses the concept of an \emph{equifier}, which allows us to encode the graded coalgebra axioms as natural transformations. It is known that the subcategory specified by an equifier inherits accessibility of the parent category \cite[Lemma 2.76]{Adamek_Rosicky_1994}.

\begin{toappendix}
  
\noindent We now show that the full subcategory of graded $M$-coalgebras is
accessible. The central tool in this step is the concept of an equifier.
\begin{define}[Equifier]%
  \label{def:equifier}
  Let $F,G\colon \cat{C} \to \cat{D}$ be functors, and
  $\phi, \psi\colon F\Rightarrow G$ natural transformations. The
  \emph{equifier} of $\phi$ and $\psi$ is the full subcategory
  $\cat{Eq}(\phi, \psi)$ of \(\cat{C}\), spanned by objects $X$ with $\phi_X = \psi_X$.
\end{define}

Equifiers %
inherit accessibility \cite[Lemma 2.76]{Adamek_Rosicky_1994}. We
use this fact in the following lemma, where we encode the graded
coalgebra axioms as equifiers of natural transformations.

\end{toappendix}
  
\begin{lemmarep}%
  \label{lem:graded-coalgebras-accessible}
  Let $M$ be an accessible graded monad on an accessible
  category $\cat{C}$. Then the category of graded
  $M$-coalgebras is accessible.
\end{lemmarep}
\begin{proof}
  Let $U \colon \GPCoAlg{M} \to \cat{C}$ denote the
  forgetful functor on graded $M$-pre-coalgebras, and for $t \in T$ let $\phi^{(t)}\colon
  U \Rightarrow M_tU$ denote the natural transformation defined at
  component $C := (X, (\alpha^{(t)}\colon X \to M_tX)_{t\in T})$ by
  $\phi_C^{(t)} = \alpha^{(t)}$. Then the unit axiom can be encoded
  via the natural transformations
  \[\eta U,\; \phi^{(e)} \colon U \Rightarrow M_eU\]
  while the multiplication axioms are encoded via
  \[(\phi^{(s)};M_s\phi^{(t)};\mu^{s,t}U),\; \phi^{(st)} \colon U \Rightarrow M_{st}U\]
  for $s, t \in T$.
  Thus, the joint equifier of these pairs of natural transformations
  (the full subcategory where all equifiers of individual pairs
  intersect) is precisely the category of graded coalgebras
  $\GCoAlg{M}$. Since $\GPCoAlg{M}$ is accessible, so
  is the joint equifier \cite[Lemma 2.76 and following remark]{Adamek_Rosicky_1994}.
\end{proof}

\noindent We lastly show that $\GCoAlg{M}$ is cocomplete, thus
rendering it locally presentable, from which completeness then
follows.

\begin{lemmarep}%
  \label{lem:forgetful-creates-colimits}
  The forgetful functor $U\colon \GCoAlg{M} \to \cat{C}$
  creates colimits.
\end{lemmarep}
\begin{proof}
  Let $D\colon I \to \GCoAlg{M}$ be a diagram and denote $Di
  = (C_i, (c^{(t)}_i)_{t\in T})$. Let $(C_i \xrightarrow{l_i}
  L)_{i\in \textsf{Ob}(I)}$ be a colimit of $UD\colon I \to
  \cat{C}$. Then, for $t\in T$, the sink
  \[(C_i \xrightarrow{c_i^{(t)}} M_tC_i \xrightarrow{M_tl_i}
    M_tL)_{i\in \mathsf{OB}(I)}\]
  is a cocone of $UD$, since for each $I$-morphism $f\colon i \to
  j$, the following diagram commutes:

  \begin{equation*}
    \begin{tikzcd}
      {X_i} \arrow{r}{c^{(t)}_i} \arrow{d}[swap]{UDf} & {M_tX_i} \arrow{d}[swap]{M_tUDf} \arrow{r}{M_tl_i} & {M_tL} \\
      {X_j} \arrow{r}[swap]{c^{(t)}_j} & {M_tX_j} \arrow{ru}[swap]{M_tl_j} & { }
    \end{tikzcd}
  \end{equation*}
  (The left square commutes since $Df$ is a homomorphism of graded
  coalgebras, the right triangle commutes since it is $M_m$
  applied to a triangle which commutes because the $l_i$ form a
  cocone.)

  This implies that there is a unique mediating morphism
  $\gamma^{(t)}\colon L \to M_tL$.
  We now have to show that
  \begin{enumerate}
    \item $(L, (\gamma^{(t)})_{t\in T})$ is a graded
          coalgebra and
    \item it is a colimit of $D$
  \end{enumerate}

  \noindent Starting with $(1)$, we need to show that the unit and
  multiplication axioms hold. For the pathological case of $I$ being
  empty, this follows since the empty colimit in $\mathcal{C}$ is
  the initial object, then the relevant diagrams have the initial
  object in the top left corner and thus commute by uniqueness of
  the outgoing morphism. If $I$ is not empty, consider the following
  diagram:

  \begin{equation*}
    \begin{tikzcd}
      {M_eC_i} \arrow[Rightarrow, no head]{r}{} & {M_eC_i} \arrow{r}{M_el_i} & {M_eL} \arrow[Rightarrow, no head]{r}{} & {M_eL} \\
      & {C_i} \arrow{lu}{\eta} \arrow{u}[swap]{c^{(e)}_i} \arrow{r}[swap]{l_{i}} & {L} \arrow{ru}[swap]{\eta} \arrow{u}{\gamma^{(e)}} & {  }
    \end{tikzcd}
  \end{equation*}

  \noindent We need to show that right triangle commutes. The outer
  path commutes by naturality of $\eta$, the left triangle because
  $Di$ satisfies the unit axiom and the middle square because
  $\gamma^{(I)}$ is defined as a colimit that requires it to
  commute. So the right triangle commutes when precomposed with $l_i$.
  Since, as a colimit cocone, the family of $l_i$ is jointly epic, the
  right hand triangle commutes. Similarly, for the multiplication
  axiom, consider the follwing diagram:

  \begin{equation*}
   \begin{tikzcd}
      {C_i} \arrow{ddd}[swap]{c_i^{(st)}} \arrow{rrr}{c^{(s)}_i} \arrow{rd}{l_{i}} & { } & & {M_sC_i} \arrow{ddd}{M_sc_i^{(t)}} \arrow{ld}[swap]{M_sl_i} \\
      { } & {L} \arrow{r}{\gamma^{(s)}} \arrow{d}[swap]{\gamma^{(st)}} & {M_sL} \arrow{d}{M_s\gamma^{(t)}} & { } \\
      { } & {M_{st}L} & {M_sM_tL} \arrow{l}{\mu^{s,t}} & { } \\
      {M_{st}C_i} \arrow{ru}[swap]{M_{st}l_i} & { } & { } & {M_sM_tC_i} \arrow{lll}{\mu^{s,t}} \arrow{lu}{M_sM_tl_i}
   \end{tikzcd}
  \end{equation*}
 
  \noindent We need to show that the inside square commutes. The outside square
  commutes since $Di$ satisfies the multiplication axiom. The top,
  left and right squares commute because $\gamma$ is a colimit that
  requires them to (or $M_m$ applied to such a square). The bottom
  square commutes because of naturality of $\mu^{s,t}$. So the inner
  square commutes when precomposed with $l_i$. Again, since the family of
  $l_i$ is jointly epic, the inner square on its own commutes.

  We now turn to $(2)$, showing that the thus defined graded coalgebra
  is a colimit of $D$. It is clear that the family $(l_i\colon C_i \to
  L)_{i\in I}$ is a cocone in $\GCoAlg{M}$, assume there is a cocone $(l'_i \colon C_i
  \to B)_{i\in I}$ of $D$ in $\GCoAlg{M}$. Since $L$ is a
  colimit in $\cat{C}$, there is a unique mediating
  $\cat{C}$-morphism $h \colon L \to B$. We now need to show that
  $h$ is a homomorphism of graded coalgebras (thus showing that such a
  unique mediating morphism also exists in $\GCoAlg{M}$).
  Consider the following diagram.

  \begin{equation*}
    \begin{tikzcd}
      {C_i} \arrow{r}{c_i^{(t)}} \arrow{d}{l_{i}} \arrow[bend right=49]{dd}[swap]{l'_i} & {M_tC_i} \arrow{d}[swap]{M_tl_i} \arrow[bend left=49]{dd}{M_tl'_i} \\
      {L} \arrow{r}{\gamma^{(t)}} \arrow{d}{h} & {M_tL} \arrow{d}[swap]{M_th} \\
      {D} \arrow{r}{d^{(t)}} & {M_tD}
    \end{tikzcd}
  \end{equation*}
  The outside square commutes (since the $l'_i$ are homomorphisms in
  $\GCoAlg{M}$), so does the top square (since the way
  $\gamma^{(m)}$ is defined forces it to commute). So the bottom
  square commutes when precomposed with $l_i$. Again we use that as a
  colimit cocone the family $(l_i \colon C_i \to L)$ is jointly epic,
  so the bottom square commutes, showing that $h$ is a homomorphism of
  graded coalgebras.
\end{proof}

\begin{toappendix}
\begin{corollary}
  \label{cor:graded-coalgebras-cocomplete}
  Let $M$ be a graded monad on $\cat{C}$. If $\;\cat{C}$ is cocomplete, then so is $\GCoAlg{M}$.
  If $M$ is accessible and $\cat{C}$ is locally
  presentable, then $\GCoAlg{M}$ is locally presentable.
\end{corollary}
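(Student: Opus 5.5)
The corollary has two parts: cocompleteness of $\GCoAlg{M}$ when $\cat{C}$ is cocomplete, and local presentability when, in addition, $M$ is accessible and $\cat{C}$ is locally presentable. Both follow by combining the lemmas just proved.

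For the first part I would invoke \Cref{lem:forgetful-creates-colimits}. Given any small diagram $D\colon I \to \GCoAlg{M}$, cocompleteness of $\cat{C}$ supplies a colimit of $UD$. Since $U$ creates colimits, there is a unique graded coalgebra structure on this colimit making the colimit injections homomorphisms. Moreover, the resulting cocone is a colimit in $\GCoAlg{M}$. Nothing in that lemma used accessibility, so the first claim holds for an arbitrary graded monad $M$. The empty diagram is included, so the initial object of $\cat{C}$ carries the initial graded coalgebra.

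For the second part, recall that a category is locally presentable exactly when it is accessible and cocomplete. A locally presentable $\cat{C}$ is in particular accessible. Since each $M_t$ is accessible, \Cref{lem:graded-coalgebras-accessible} gives that $\GCoAlg{M}$ is accessible. A locally presentable $\cat{C}$ is also cocomplete, so the first part gives that $\GCoAlg{M}$ is cocomplete. Together these show that $\GCoAlg{M}$ is locally presentable.

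The only delicate point is a size issue. \Cref{lem:graded-pre-coalgebras-accessible} and \Cref{lem:graded-coalgebras-accessible} rely on \cite[Proposition 2.67, Lemma 2.76]{Adamek_Rosicky_1994}. These results concern products and joint equifiers indexed by sets, here indexed by $T$ and by $T\times T$. This is fine because $T$ is a (small) monoid. I would also check that "accessible" in \Cref{lem:graded-coalgebras-accessible} matches the paper's definition, under which a cocomplete accessible category is locally presentable. With that checked, no further calculation is needed.
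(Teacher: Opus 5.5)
Your proposal is correct and is exactly the paper's argument: the paper's proof just combines \Cref{lem:forgetful-creates-colimits}, which gives cocompleteness of $\GCoAlg{M}$ for any graded monad on a cocomplete $\cat{C}$, with \Cref{lem:graded-coalgebras-accessible} and the definition of locally presentable categories as accessible cocomplete ones. Your remarks on the empty diagram and on $T$ being small are sound, but they are already handled inside the cited lemmas.
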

\begin{proof}
  Follows immediately from the combination of
  Lemma~\ref{lem:forgetful-creates-colimits} and
  Lemma~\ref{lem:graded-coalgebras-accessible}.
\end{proof}  
\end{toappendix}

\begin{theoremrep}
  \label{thm:graded-coalgebras-complete}
  Let $M$ be an accessible graded monad on a locally
  presentable category. Then $\GCoAlg{M}$ is complete, in
  particular it has a terminal object.
\end{theoremrep}
\begin{proof}
  We know from Corollary~\ref{cor:graded-coalgebras-cocomplete} that $\GCoAlg{M}$ is locally presentable.
  It is well known that locally presentable categories are complete
  (c.f. \cite[Remark 1.56]{Adamek_Rosicky_1994}).
\end{proof}

\begin{example}
  When \(M\) has the form \(F^n\), (c.f. \Cref{exp:functor-coalgebra}) then the graded monad is accessible if and only if the functor \(F\) is accessible.
  Then \Cref{thm:graded-coalgebras-complete} recovers an established result on (ungraded) coalgebras \cite[Theorem 11.2.18]{Adamek_Milius_Moss_2025}. In this case, the terminal graded coalgebra is the one corresponding to the terminal ungraded coalgebra.
\end{example}

\begin{example}
  For the graded monad determining finitary Feller-Dynkin processes (\Cref{ex:fd-processes-labelled-coalgebra}), all components of the graded monad \(M\) are of the form \(\subdistr(B^k\times {-})\), and thus finitary since they are composed of only finitary endofunctors. We therefore have that the category \(\GCoAlg{M}\) has a terminal object.
\end{example}

\newcommand{\form}{\mathcal{F}}
\newcommand{\sem}[1]{\llbracket #1 \rrbracket}
\newcommand{\rsem}[1]{\llparenthesis #1 \rrparenthesis}
\newcommand{\by}[1]{\qquad (#1)}

\section{Characteristic Logics}%
\label{sec:characteristic-logics}
We define a notion of coalgebraic modal logic for graded coalgebras.
The core result one wants to prove about these is that they
characterise a certain notion of process equivalence—in our context,
either behavioural equivalence or trace equivalence. 
From here on forward, fix a monoid \(T\) and a \(T\)-graded monad \(M\) on \(\Set\).

\begin{define}
  A \emph{graded coalgebraic modal logic} \(\mathcal{L} = (\Theta,
  \mathcal{O}, \Lambda)\) consists of a set \(\Theta\) of constants, a
  set \(\mathcal{O}\) of propositional operators, where each \(p \in
  \mathcal{O}\) comes with an associated finite arity \(\ar(p) \in
  \naturals\), and a set \(\Lambda\) of modal operators,
  where each \(\lambda \in \Lambda\) comes with an associated finite
  arity \(\ar(\lambda) \in \naturals\), as well as a depth
  \(\depth(\lambda) \in T\). The formulae \(\form(\mathcal{L})\) of
  \(\mathcal{L}\) are then generated by the following grammar:
  \begin{equation*}
    \form(\mathcal{L}) \ni \phi_i ::= \theta \mid p(\phi_1, \ldots,
    \phi_{\ar(p)}) \mid \lambda(\phi_1, \ldots, \phi_{\ar(\lambda)}) \qquad\qquad \theta \in \Theta, p \in \mathcal{O}, \lambda
  \in \Lambda
  \end{equation*}
\end{define}

\noindent Our formulae will take values in a set of truth values \(\Omega\) (in
most instances this will instantiate to \(\Omega = 2 = \{\top, \bot\}\)).
For the semantics, we assume that each of these individual components
is equipped with a morphism in \(\Set\):
for \(\theta \in \Theta\), a morphism of type \(\hat \theta \colon 1
\to \Omega\); for propositional operators \(p \in \mathcal{O}\), a
morphism \(\sem p \colon \Omega^{\ar(p)} \to \Omega\); and for modal
operators \(\lambda \in \Lambda\), morphisms \(\sem \lambda \colon
M_{\depth(\lambda)}(\Omega^{\ar(\lambda)}) \to \Omega\).

To aid readability we restrict to unary modalities in the technical development. The extension of our results to polyadic modalities is mostly a matter of adding the appropriate indices.
Formulae \(\phi\) are interpreted in \(M\)-graded coalgebras \((X,
\gamma)\), inducing interpretation maps \(\sem \phi _\gamma \colon X
\to \Omega\), defined inductively:
for truth constants we have \(\sem\theta_\gamma = X \xrightarrow{!} 1
\xrightarrow{\hat\theta} \Omega\), for propositional operators \(\sem{ p(\phi_1, \ldots, \phi_n) }_\gamma = \langle \sem
{\phi_1}_\gamma, \ldots, \sem{\phi_n}_\gamma\rangle ; \sem p \), and for modal operators \(\sem{\lambda \phi}_\gamma = 
\gamma_{\depth(\lambda)} ; M_{\depth(\lambda)} \sem
{\phi}_\gamma ;  \sem \lambda \).

Then we say two states \(x\colon 1 \to X, y \colon 1 \to Y\) in \(M\)-graded coalgebras \((X,\gamma),
(Y, \delta)\) are \emph{logically equivalent} if for all \(\phi \in
\form(\mathcal{L})\) we have \(x;\sem \phi_\gamma =
y;\sem\phi_\delta\). Now the immediate question is: how does logical
equivalence relate to the notions of process equivalence discussed
above? In particular, we would like logical equivalence to coincide
with the process equivalence under consideration. This is known as the
\emph{Hennessy-Milner property}. In practice, the Hennessy-Milner
property is proven in two parts, showing invariance and expressivity
separately.

\begin{define}
  Let \(x \sim y\) denote either behavioural equivalence or trace
  equivalence of states \(x\) and \(y\). We say that a graded logic
  \(\mathcal{L}\) is \emph{invariant} with respect to \(\sim\), if
  \(x\) and \(y\) are logically equivalent whenever \(x \sim y\).
  Conversely, \(\mathcal{L}\) is called \emph{expressive} with respect
  to \(\sim\), if \(x\) and \(y\) being logically equivalent implies
  \(x \sim y\).
\end{define}

\noindent We will consider the cases of behavioural equivalence and trace
equivalence individually.

\subsection{Logics for Behavioural Equivalence}

The development for invariance and expressivity with respect to behavioural equivalence follows largely along the lines of the ungraded variant \cite{schroder2008expressivity}.
For behavioural equivalence, invariance of \(\mathcal{L}\) follows
for all graded modal logics, requiring no further conditions.
\begin{toappendix}
\begin{lemma}
  \label{thm:adequacy-bisim}
  Let \((X, \gamma), (Y, \delta)\) be two \(M\)-graded coalgebras and
  let \(h\colon (X, \gamma) \to (Y, \delta)\) be a homomorphism. Then
  \(\sem\phi_\gamma = h ; \sem\phi_\delta\) for all \(\phi \in
  \form(\mathcal{L})\).
\end{lemma}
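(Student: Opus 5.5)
We prove the claim by structural induction on the formula \(\phi \in \form(\mathcal{L})\), restricting to unary modalities as the paper does; the polyadic case only adds indices. In each case we show that the two interpretation maps \(X \to \Omega\) agree, namely \(\sem\phi_\gamma = h ; \sem\phi_\delta\).

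\emph{Base case.} For a truth constant \(\theta \in \Theta\) we have \(\sem\theta_\gamma = {!_X} ; \hat\theta\) and \(h ; \sem\theta_\delta = h ; {!_Y} ; \hat\theta\). These coincide because \(h ; {!_Y} = {!_X}\), by uniqueness of maps into the terminal object \(1\).

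\emph{Propositional operators.} Let \(\phi = p(\phi_1, \ldots, \phi_n)\) and assume the claim for each \(\phi_i\). Then
\[
  h ; \langle \sem{\phi_1}_\delta, \ldots, \sem{\phi_n}_\delta\rangle = \langle h;\sem{\phi_1}_\delta, \ldots, h;\sem{\phi_n}_\delta\rangle = \langle \sem{\phi_1}_\gamma, \ldots, \sem{\phi_n}_\gamma\rangle,
\]
using the universal property of the product in \(\Set\) and then the induction hypothesis. Postcomposing both sides with \(\sem p\) gives the claim for \(\phi\).

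\emph{Modal operators.} This is the only case that uses the homomorphism condition. Let \(\phi = \lambda\psi\) with \(d = \depth(\lambda)\), and assume the claim for \(\psi\). We compute
\[
  h ; \delta_d ; M_d\sem\psi_\delta ; \sem\lambda = \gamma_d ; M_d h ; M_d \sem\psi_\delta ; \sem\lambda = \gamma_d ; M_d(h;\sem\psi_\delta) ; \sem\lambda = \gamma_d ; M_d\sem\psi_\gamma ; \sem\lambda .
\]
The first equality is the graded coalgebra homomorphism square (\Cref{def:graded-coalgebra-morphism}) at grade \(d\). The second is functoriality of \(M_d\). The third is the induction hypothesis. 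The outer terms are \(h;\sem{\lambda\psi}_\delta\) and \(\sem{\lambda\psi}_\gamma\), so the claim holds for \(\phi\).

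The argument has no real obstacle. The only point needing care is that the homomorphism condition must be available at every grade \(t \in T\), in particular at \(\depth(\lambda)\). This holds because \Cref{def:graded-coalgebra-morphism} demands commutation for all \(t\). Note also that the graded coalgebra axioms themselves are never used, so the lemma holds verbatim for pre-coalgebra homomorphisms.

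Invariance under behavioural equivalence then follows. Given a cospan \(g, h\) of homomorphisms with \(x;g = y;h\), we get \(x;\sem\phi_\gamma = x;g;\sem\phi_\xi = y;h;\sem\phi_\xi = y;\sem\phi_\delta\) for every formula \(\phi\).
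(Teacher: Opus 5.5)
Your proof is correct and follows the paper's argument exactly: structural induction on \(\phi\), with the constant case from uniqueness of maps into \(1\), the propositional case from naturality of pairing and the induction hypothesis, and the modal case from the homomorphism square at grade \(\depth(\lambda)\), functoriality of \(M_{\depth(\lambda)}\), and the induction hypothesis. Your remark that the graded coalgebra axioms are never used, so the lemma also holds for pre-coalgebra homomorphisms, is correct as well.
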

\begin{proof}
  By induction on \(\phi\):
  For \(\phi = \theta \in \Theta\), this follows from the fact that
  the interpretation morphism factors through \(!\).
  For \(\phi = p(\phi_1, \ldots, \phi_n)\) we have that
  \begin{align*}
    \sem\phi_\gamma &= \langle \sem {\phi_1}_\gamma,
    \ldots, \sem{\phi_n}_\gamma\rangle ; \sem p  \\
                    &= \langle h ; \sem {\phi_1}_\delta, \ldots, h ; \sem{\phi_n}_\delta\rangle ; \sem p\\
                    &= h ; \langle \sem {\phi_1}_\delta,
                    \ldots, \sem{\phi_n}_\delta\rangle ; \sem p\\
                    &= h ; \sem \phi _\delta
  \end{align*}

  For the case \(\phi = \lambda\phi'\) we have
  \begin{align*}
    \sem\phi_\gamma &= \gamma_{\depth(\lambda)} ; M_{\depth(\lambda)}\sem {\phi'}_\gamma ; \sem \lambda\\
                    &= \gamma_{\depth(\lambda)} ; M_{\depth(\lambda)} h ; M_{\depth(\lambda)}\sem {\phi'}_\delta ; \sem \lambda \\
                    &= h ; \delta_{\depth(\lambda)} ; M_{\depth(\lambda)}\sem {\phi'}_\delta ; \sem \lambda \\
                    &= h ; \sem \phi _\delta
  \end{align*}

\end{proof}  
\end{toappendix}

\begin{theoremrep}
  \label{cor:branching-invariance}
  The logic \(\mathcal{L}\) is invariant with respect to behavioural
  equivalence.
\end{theoremrep}
\begin{proof}
  Let \(x, y\) be two behavioural equivalent states in \((X, \gamma),
  (Y, \delta)\). This implies that there are coalgebra homomorphisms
  \(g, h\) with \(g(x) = h(y)\). By \Cref{thm:adequacy-bisim} we have
  that \(x\) is logically equivalent to \(g(x)\) and \(y\) is
  logically equivalent to \(h(y)\), combined giving logical
  equivalence of \(x\) and \(y\).
\end{proof}

\noindent For expressivity it is necessary that the logic at hand contains enough modalities to observe all possible behaviours.
We call such a set of modal operators separating.

\begin{define}
  Let \(\Lambda\) be a graded set of modalities for \(M\).
  We say that \(\Lambda\) is \emph{separating} if there is a generating set $G$ of $T$ such that for all sets \(X\) and all \(g\in G\) the following source 
  is jointly injective:
  \begin{equation*}
    \{M_gX \xrightarrow{M_gf} M_g\Omega \xrightarrow{\sem{\lambda}} \Omega
    \mid f \colon X \to \Omega, \lambda \in \Lambda \text{ with } \depth(\lambda) = g\}
  \end{equation*}
\end{define}

\noindent Beyond a separating set of modalities, expressivity also requires constraints on the branching degree of the underlying system.
This is already apparent in the classical case of Hennessy-Milner logic on labelled transition systems:
expressivity in this instance only holds for finitely branching transition systems. 
Categorically, the branching degree is cast as the accessibility degree of the graded monad.
In particular, since we restrict to finitary modal logics, we restrict to \(\omega\)-accessible (i.e.\ finitary) graded monads.

\begin{theoremrep}
  \label{thm:branching-expressivity}
  Let \(M\) be a finitary graded monad and \(\mathcal{L} = (\Theta, \mathcal{O}, \Lambda)\) a graded modal logic where \(\Lambda\) is separating and \(\mathcal{O} \cup \Theta\) is functionally complete (i.e., every function \(\Omega^n \to \Omega\) arises by composing operators in \(\mathcal{O}\) and \(\Theta\)). Then \(\mathcal{L}\) is expressive for behavioural equivalence.
\end{theoremrep}
\begin{proof}
  We assume without loss of generality that all considered states are elements of the same \(M\)-coalgebra \((X, \gamma)\).
  Denote by \(R\) the logical equivalence relation on \(X\) and \(e\colon X \to X/R\) the quotient map that sends each \(z\in X\) to the equivalence class \([z] \in X/R\).
  We now construct a graded coalgebra structure \((X/R, \gamma')\), such that \(e\) is a homomorphism of graded coalgebras:
  
  We need to show that
  \begin{equation*}
    \gamma'_t(e(z)) := M_te(\gamma_t(z))
  \end{equation*}
  yields a well defined map \(\gamma'_t \colon X/R \to M_t(X/R)\), and that these maps satisfy the graded coalgebra axioms, since homomorphy already follows from definition.
  Let \(x, y\in X\) be two states in \(X\).
  We start with the case of \(t\in G\).
  For well definedness, assume \(x R y\).

  We need to show that \(M_te(\gamma_t(x)) = M_te(\gamma_t(y))\).
  Since \(\Lambda\) is separating, this follows when we show that 
  \begin{equation*}
    M_te ; M_tf ; \sem \lambda (\gamma_t(x)) =
    M_te ; M_tf ; \sem \lambda (\gamma_t(y))
  \end{equation*}
  for all \(f\colon X/R \to \Omega\) and \(\lambda \in \Lambda\) with \(\depth(\lambda) = t\).
  Since \(M_t\) is finitary, there is \(Y \subseteq X\) finite with \(\gamma_t(x), \gamma_t(y) \in M_tY \subseteq M_tX\).

  For every \(x,y\in Y\) such that not \(xRy\) we can (by definition of R) find a formula \(\phi_{x,y}\) with \(\sem{\phi_{x,y}}_\gamma(x) \not = \sem{\phi_{x,y}}_\gamma(y)\), so the semantics for all formulae is well defined and jointly injective on \(Y/R\).
  Since \(\mathcal{O}\) is functionally complete and \(Y\) is finite, we can build \(\phi\) such that \(\sem {\phi} _{\gamma \mid Y} = f_{\mid Y}\).
  Then we have that 
  \begin{alignat*}{2}
    &M_te ; M_tf; \sem {\lambda}  (\gamma_t(x))=&\\
    &M_te ; M_t\sem{\phi}_\gamma; \sem {\lambda}  (\gamma_t(x))=&\\
    & M_t\sem{\phi}_\gamma; \sem {\lambda}  (\gamma_t(x))=& \\
    &\sem{\lambda(\phi_1, \ldots, \phi_n)}_\gamma (x)= &\by{\text{def. } \sem{\cdot}_\gamma}\\
    &\sem{\lambda(\phi_1, \ldots, \phi_n)}_\gamma (y)=&\by{x R y}\\
    &M_t\sem{\phi}_\gamma ;\sem {\lambda}( \gamma_t(y))=&\by{\text{def. } \sem{\cdot}_\gamma}\\
    &M_te ; M_t\sem{\phi}_\gamma ; \sem {\lambda}  (\gamma_t(y))=&\\
    &M_te ; M_tf; \sem {\lambda}  (\gamma_t(y))=&\\
  \end{alignat*}
  Now we show well definedness and the graded coalgebra axioms at once. Assume that \(t\in T\setminus G\), then since \(G\) is generating, \(t= t_1 \ldots t_m\) with \(t_i \in G\).
  Now we show well definedness for \(t\) by indution over \(m\), so let \(t' = t_1\ldots t_{m-1}\).
  Well definedness then follows from the following diagram: 
\begin{equation*}
  \begin{tikzcd}
    X \arrow{r}{"\gamma_{t'}"} \arrow{d}{"e"} \arrow[bend left]{rrr}{"\gamma_t"} & M_{t'}X \arrow{r}{"M_{t'}\gamma_{t_m}"} \arrow{d}{"M_{t'}e"} & M_{t'}M_{t_m}X \arrow{r}{"{\mu^{t',t_m}}"} \arrow{d}{"M_{t'}M_{t_m}e"} & M_tX \arrow{d}{"M_te"}\\
    X/R \arrow{r}{"\gamma'_{t'}"} \arrow[bend right]{rrr}{"\gamma'_t"}              & M_{t'}(X/R) \arrow{r}{"M_{t'}\gamma'_{t_m}"}                  & M_{t'}M_{t_m}(X/R) \arrow{r}{"{\mu^{t'a,t_m}}"}                         & M_t(X/R)              
  \end{tikzcd}
\end{equation*}
The morphism \( \gamma'_{t'} ; M_{t'}\gamma'_{t_m} ; \mu^{t',t_m} \) is well defined (because all constituent morphisms are by induction). The top commutes by the graded coalgebra axioms, left and middle square commute by definition of \(\gamma'\) and the right square commutes by naturality of \(\mu\). Then we have that
\begin{align*}
  \gamma'_t(e(z)) &= M_te(\gamma_t(z))\\
  &=\gamma_{t'}; M_{t'}\gamma_{t_m}; \mu^{t',t_m}; M_te (z)\\
  &=e; \gamma'_{t'}; M_{t'}\gamma'_{t_m} ;\mu^{t',t_m} (z)
\end{align*}
Since \(e\) is an epimorphism, we have that \(\gamma'_t = \gamma'_{t'} ; M_{t'}\gamma'_{t_m} ; \mu^{t',t_m}\), yielding well definedness as well as the second graded coalgebra axiom.
A similar argument can be made for any choice of \(t',t_m\in T\), showing that the second graded coalgebra axiom holds for all indices.
For the first graded coalgebra axiom the statement follows by application of naturality.
\end{proof}

\begin{example}
  When instantiating to functor coalgebras, i.e. graded coalgebras for graded monads of the form \(F^n\) as in \Cref{exp:functor-coalgebra}, considering the truth value object \(2\) and propositional operators \(\top, \neg, \land\) with their usual semantics, then \Cref{thm:branching-expressivity} instantiates to (a finitary version of) known expressivity results for coalgebraic modal logic \cite{schroder2008expressivity}.
\end{example}

\begin{example}
\label{exp:prob-branching-logic}
  Consider finitary Feller-Dynkin Processes as in \Cref{ex:fd-processes-coalgebra}, i.e. graded coalgebras of the \(\Samp_{\posreals}\)-graded monad \(M\) where the components \(M_{(t_0,k_0,\ldots,t_n,k_n)}\) are of the form \(\subdistr(B^k \times {-})\), with \(k = t_1 +\dots+t_n\).
  We define a logic \(\mathcal{L} = (\Theta, \mathcal{O}, \Lambda)\), with \(\Theta = \{\top\}\), \(\mathcal{O} = \{\land, \neg\}\) and \(\Lambda = \{\rmodal{b}_p, \modal{r}_p \mid b\in B, r\in \posreals, p\in [0,1]\}\).
  The semantics of this logic is defined on the truth value object \(\Omega = 2\), with \(\top, \land\) and \(\neg\) having the usual interpretation.
  For the modal operators we have depths \(\depth(\rmodal{b}_p) = (0,1)\) and \(\depth(\modal{r}_p) = (r,0)\), with \(\sem{\rmodal{b}_p} \colon M_{(0,1)}\Omega \to \Omega (= \mathcal{D}(B\times 2) \to 2)\) given by \(\sem{\rmodal{b}_p}(\mu) = \top \) iff \(\mu(b,\top) \geq p\), and similarly \(\sem{\modal{r}_p} \colon M_{(r,0)}\Omega \to \Omega (= \mathcal{D}2 \to 2)\) given by \(\sem{\rmodal{r}_p}(\mu) = \top \) iff \(\mu(\top) \geq p\).
  This set of modalitites is \(G\)-separating, and thus we have invariance and expressivity of \(\mathcal{L}\) for behavioural equivalence by \Cref{cor:branching-invariance} and \Cref{thm:branching-expressivity} respectively. 
  \begin{toappendix}
    \noindent \textbf{Details for \Cref{exp:prob-branching-logic}}\\
    We show that the set of modalities in \Cref{exp:prob-branching-logic} is separating.
    For \(g = (r,0)\), let \(\mu,\nu\in \subdistr X\), such that \(\subdistr f;\sem{\modal{r}_p}(\mu) = \subdistr f;\sem{\modal{r}_p}(\nu)\) for all \(f\colon X \to 2\) and \(p \in [0,1]\). We have to show that \(\mu = \nu\).
    Let \(\mu(x) = q\), and \(f_x \colon X \to 2\) the characteristic function of \(x\).
    Then \(\subdistr f_x;\sem{\modal{r}_q}(\nu) = \top\) iff \(\nu(x) \geq q =\mu(x)\). Similarly we can show that \(\mu(x)\leq \nu(x)\) and thus \(\mu(x) = \nu(x)\) for all \(x\in X\).
    When \(g = (0,1)\), let \(\mu, \nu \in \subdistr(B\times X)\), such that  \(\subdistr f;\sem{\rmodal{b}_p}(\mu) = \subdistr f;\sem{\rmodal{b}_p}(\nu)\) for all \(f\colon X \to 2\) and \(p \in [0,1]\). We have to show that \(\mu = \nu\). Let \(\mu(b,x) = q\), and \(f_x \colon X \to 2\) the characteristic function of \(x\).
    \(\subdistr f_x;\sem{\rmodal{b}_q}(\nu) = \top\) iff \(\nu(b,x) \geq q =\mu(b,x)\). Similarly we can show that \(\mu(x)\leq \nu(x)\) and thus \(\mu(x) = \nu(x)\) for all \(x\in X\).
  \end{toappendix}
\end{example}

\subsection{Logics for Trace Equivalence}%
\label{sec:trace-logics}

\noindent As our notion of trace equivalence is inspired by graded semantics, the development of characteristic logics builds on the respective notion of \emph{graded logics} \cite{milius_et_al:LIPIcs.CALCO.2015.253,DorschMS19,ForsterSWBGM24}.
As such, we require an additional condition on the graded monad at hand. In the context of trace semantics, we assume the generating set $G$ of $T$ is fixed from the outset, and all modal operators $\lambda\in \Lambda$ have a depth $\depth(\lambda)\in G$.

\begin{define}%
  \label{def:G-uniform}
  A \(T\)-graded monad \(M\) is called \emph{\(G\)-uniform}, if the following diagram is a coequalizer in the Eilenberg-Moore category of \((M_e, \mu^{e,e}, \eta)\) for all \(g\in G\) and \(t\in T\).
  \begin{equation*}
    \begin{tikzcd}
      M_gM_eM_t \arrow[shift left=0.8ex]{r}{"M_g\mu^{e,t}"} \arrow[shift
      right=0.8ex]{r}[swap]{"\mu^{g,e}M_t"} & M_gM_t
      \arrow{r}{"\mu^{g,t}"} & M_{g\cdot t}
    \end{tikzcd}
  \end{equation*}
\end{define}

\noindent Commutativity of the diagram follows from the graded monad
axioms, so only universality is additionally required for
uniformity. The intuition of $G$-uniformity is that behaviours of different depths do not bleed into each other when composed, which will allow us to reverse the composition and evaluate arguments of modalities on the relevant component behaviour.

\begin{example}%
  \label{exp:g-uniform-monads}
  \begin{enumerate}[wide]
    \item The \(\naturals\)-graded monad determined by an endofunctor \(F\), with \(M_n = F^n\) (c.f. \Cref{prop:functor-coalgebras}) is \(G\)-uniform for \(G = \{1\}\).
    \item In fact, the condition from graded semantics of a graded monad being \emph{depth-1}~\cite{DorschMS19} is precisely an instance of \(G\)-uni\-for\-mi\-ty, where the monad is \(\naturals\)-graded and \(G = \{1\}\). As such, all depth-1 graded monads are \(G\)-uniform.
    \item The \(\Samp_{T}\)-graded monad of the form \(M_{(t_0,k_0,\ldots,t_n,k_n)} = M(B^n \times {-})\) as in \Cref{cor:fd-graded-coalgebra} is \(G\)-uniform for $G = \{(r,0), (0,1) \mid r \in \posreals\}$
  \end{enumerate}
\end{example}

\begin{toappendix}
  \noindent \textbf{Datails for \Cref{exp:g-uniform-monads}}
  We show explicitly that the \(\Samp_{T}\)-graded monad of the form \(M_{(t_0,k_0,\ldots,t_n,k_n)} = M(B^n \times {-})\) as in \Cref{cor:fd-graded-coalgebra} is \(G\)-uniform, where \(G = \{(g,0),(e,1) \mid g\in G'\}\) for a generating set \(G'\) of \(G\).
  
  Let \(\tau\) be the strenght of the monad \(M\)
  We treat the case of generator \((g,0)\) explicitly, the case of \((e,1)\) is
  analougous and simpler.
  We have to show that the following diagram is a coequalizer:
  \begin{equation*}
    \begin{tikzcd}
      M(B \times MM(B^n \times {-}) \arrow[shift left=0.8ex]{r}{"M(B \times \mu)"} \arrow[shift
      right=0.8ex]{r}[swap]{"M\tau ; \mu"} & M(B\times M(B^n \times {-}))
      \arrow{r}{"M\tau ; \mu"} & M(B^{n+1} \times {-})
    \end{tikzcd}
  \end{equation*}
  We show that this is a split coequalizer. The splittings we investigate are
  \[M(B\times \eta (B^n \times {-})) \colon M(B^{n+1}\times {-})\to M(B\times M(B^n \times {-}))\]
  and
  \[M(B\times M\eta(B^n \times {-})) \colon M(B\times M(B^n \times {-})) \to M(B\times MM(B^n \times {-}))\]

  For these to give a split coequalizer, we need to verify the follwoing identities:
  \[M(B\times \eta (B^n \times {-})); M\tau ; \mu = \id\]
  which follows by the axioms of a (strong) monad and

  \[M(B\times M\eta(A^n \times {-}));M(B \times \mu) = \id\]
  which is also by the monad axioms.

  The third equality we need to verify is given by the following square:

  \begin{equation*}
    \begin{tikzcd}[column sep = huge]
      M(B\times M(B^n \times {-})) \arrow{r}{M(B\times M\eta(A^n \times {-}))} \arrow{d}{M\tau; \mu} & M(B\times MM(B^n \times {-})) \arrow{d}{M\tau; \mu}\\
      M(B^{n+1} \times {-}) \arrow{r}{M(B\times \eta(B^n \times {-}))} & M(B\times M(B^n \times {-}))
    \end{tikzcd}
  \end{equation*}
  Which commutes by naturality of \(\tau\) and \(\mu\).
\end{toappendix}

\noindent In the case of trace equivalence, invariance is no longer a given: we generalize
the concept of \emph{graded logics} from graded semantics, which puts restrictions on the admissible operators, to the present context
to ensure this property.
Furthermore, the separatedness proof for trace equivalence can no
longer rely on the set of propositional operators being functionally complete.
Thus, our expressivity criterium leaves more work to the concrete instantiation
than the criterium for behavioural equivalence.

\begin{define}
  The logic \(\mathcal{L} = (\Theta, \mathcal{O}, \Lambda)\) is a \emph{trace-logic} if:
  \begin{enumerate*}[label=(\roman*)]
    \item \(\Omega\) is equipped with an \(M_e\)-algebra structure \(o\colon M_e
    \Omega \to \Omega\);
    \item the evaluation morphisms \(\sem p \colon \Omega^n \to \Omega\) of
    propositional operators \(p\in \mathcal{O}\) are algebra homomorphisms
    \(\sem p \colon (\Omega, o)^n \to (\Omega, o)\);
    \item the evaluation morphisms \(\sem \lambda \colon M_t \Omega \to
    \Omega\) makes the following two diagrams commute:\\
    \begin{equation*}
      \begin{tikzcd}
        & M_tM_e\Omega \arrow{r}{"\mu^{t,e}"} \arrow{d}{"M_to"} & M_t\Omega \arrow{d}{"\sem \lambda"}\\
        & M_t\Omega \arrow{r}{"\sem\lambda"} & \Omega
      \end{tikzcd}
      \begin{tikzcd}
        & M_eM_t\Omega \arrow{d}{"\mu^{e,t}"} \arrow{r}{"M_e\sem \lambda"} & M_e\Omega \arrow{d}{"o"}\\
        & M_t\Omega \arrow{r}{"\sem\lambda"} & \Omega
      \end{tikzcd}
    \end{equation*}
  \end{enumerate*}
\end{define}

\noindent The right-hand diagram says that the semantics of modal operators are algebra homomorphisms, while the diagram on the left will allow us to construct a competitor to the coequalizer in \Cref{def:G-uniform}. If a logic satisfies the above requirements, we get invariance for the fragment of uniform-depth
formulae, in which constants are restricted to occur at the same depth in
the syntax tree.

\begin{define}[Formulae of uniform depth]
  For $k \in T$, we write $\form_k(\mathcal{L})$ to denote the set of $\mathcal{L}$ formulae of uniform depth \(k\), which are inductively defined by the following grammars:
  \begin{alignat*}{2}
    &\form_e(\mathcal{L}) \ni \phi := \theta \mid \lambda(\phi) \mid p(\phi_1, \ldots, \phi_{\ar(p)})\qquad &\text{ for $k = e$}\\
    &\form_k(\mathcal{L}) \ni \phi := \lambda(\phi) \mid p(\phi_1, \ldots, \phi_{\ar(p)}) \qquad &\text{ for $k \not = e$}
  \end{alignat*}
  where $\theta \in \Theta$, $\lambda \in \Lambda$ and $p\in \mathcal{O}$, and all \(\phi_i \in \form_l(\mathcal{L})\) have equal uniform depth \(l\), with \(l\) subject to the constraints \(l = k\) in the case of propositional operators and \( \depth(\lambda) \cdot l  = k\) in the case of a modal operators.
\end{define}

\noindent To prove invariance for trace semantics, we define a semantics \(\rsem\phi\colon M_k1 \to \Omega\) which operates directly on the observed behaviours and then show that \(\sem \phi_\gamma\) factors through \(\rsem \phi\).

\begin{define}
  For \(\phi\in \form_k(\mathcal{L})\), define the homomorphism of \(M_e\)-algebras \(\rsem \phi \colon (M_k1, \mu^{e,k}) \to (\Omega, o)\) inductively: For truth constants we define \(\rsem \theta = M_e\hat \theta ; o\) and for propositional operators \(\rsem{p(\phi_1, \ldots, \phi_n)} = \langle \rsem \phi_1, \ldots, \rsem{\phi_n} \rangle; \sem{p}\). For formulae of the form \(\lambda\phi\) consider the following diagram:
  \begin{equation}
    \tag{$\ast$}
    \label{eq:semantics}
    \begin{tikzcd}[column sep = 2cm, row sep = 0.7cm]
      M_{\depth(\lambda)}M_eM_l1 \arrow[shift left]{r}{\mu^{\depth(\lambda),e}} \arrow[shift right,swap]{r}{M_{\depth(\lambda)}\mu^{e,l}} \arrow[swap]{d}{"M_{\depth(\lambda)}M_e\rsem{\phi}"} & M_{\depth(\lambda)}M_l1 \arrow{r}{\mu^{\depth(\lambda),l}} \arrow{d}{"M_{\depth(\lambda)}\rsem{\phi}"} & M_k1 \arrow[dashed]{d}{"\rsem{\lambda\phi}"}\\
      M_{\depth(\lambda)}M_e\Omega \arrow[shift left]{r}{\mu^{\depth(\lambda),e}} \arrow[shift right,swap]{r}{M_{\depth(\lambda)} o} & M_{d(\lambda)}\Omega \arrow{r}{\sem \lambda} & \Omega
    \end{tikzcd}
  \end{equation}

  \noindent The morphisms on the top are the coequalizer diagram of \Cref{def:G-uniform}, the bottom morphisms commute due to the second modal operator axiom. By naturality/homomorphy, we then have that \( M_{\depth(\lambda)}\rsem{\phi} ; \sem \lambda\) is a competitor to the coequalizer \(\mu^{\depth(\lambda),l}\). We then define \(\rsem{\lambda\phi}\colon M_k1 \to \Omega\) to be the unique morphism that makes the right hand square commute.
\end{define}

\noindent We are now able to prove invariance for trace logics.
\begin{toappendix}
  \begin{lemma}%
  \label{lem:semantics-factors}
  Let \(\phi \in \mathcal{L}_k\) be a uniform depth \(k\) formula in a trace logic \(\mathcal{L} = (\Theta, \mathcal{O}, \Lambda)\) for a \(G\)-uniform graded monad \(\mathbb{M}\). Let \((X, \gamma)\) be a graded coalgebra for \(\mathbb{M}\). Then \(\sem \phi _\gamma = \gamma_k; M_k! ;\rsem \phi\).
\end{lemma}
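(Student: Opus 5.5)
We argue by structural induction on the uniform-depth formula \(\phi \in \form_k(\mathcal{L})\), proving simultaneously for all \(k\) that \(\sem \phi_\gamma = \gamma_k ; M_k! ; \rsem \phi\). Throughout, we use that \(\rsem\phi\) is an \(M_e\)-algebra homomorphism \((M_k1,\mu^{e,k}) \to (\Omega,o)\), as built into its definition.

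\emph{Constants.} Here \(k = e\). Since \(\gamma_e = \eta\) by the unit axiom of \Cref{def:gradedcoalgebra}, the right-hand side is \(\eta ; M_e! ; M_e\hat\theta ; o\). By naturality of \(\eta\) this equals \(! ; \hat\theta ; \eta ; o\), and the unit law of the algebra \(o\) gives \(!;\hat\theta = \sem\theta_\gamma\).

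\emph{Propositional operators.} All subformulae lie in \(\form_k\). The induction hypothesis and the universal property of the product give
\[\langle \sem{\phi_i}_\gamma\rangle_i ; \sem p = \gamma_k;M_k!;\langle\rsem{\phi_i}\rangle_i;\sem p.\]

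\emph{Modal case.} Let \(\phi = \lambda\psi\) with \(d = \depth(\lambda) \in G\) and \(\psi\in\form_l\), so \(k = d\cdot l\). By definition and the induction hypothesis,
\[\sem{\lambda\psi}_\gamma = \gamma_d ; M_d\gamma_l ; M_dM_l! ; M_d\rsem\psi ; \sem\lambda.\]
By the defining square \eqref{eq:semantics}, \(M_d\rsem\psi;\sem\lambda = \mu^{d,l};\rsem{\lambda\psi}\), so the expression becomes
\[\gamma_d;M_d\gamma_l;M_dM_l!;\mu^{d,l}_1;\rsem{\lambda\psi}.\]
Naturality of \(\mu^{d,l}\) moves \(M_dM_l!\) past it, turning this into \(\gamma_d;M_d\gamma_l;\mu^{d,l}_X;M_k!;\rsem{\lambda\psi}\). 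The multiplicative coalgebra axiom collapses the first three factors to \(\gamma_{d\cdot l} = \gamma_k\), which is the claim.

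The main obstacle is ensuring that the modal case is well-typed and uses the right instance of the coequalizer. Specifically, \(\rsem{\lambda\psi}\) must be defined via \(G\)-uniformity at the grade pair \((d,l)\), and the square in \eqref{eq:semantics} must hold on the nose, not merely after precomposing with the coequalizer. A related subtlety is that \(\rsem{\lambda\psi}\) has to be an \(M_e\)-homomorphism for later inductive steps. This follows because the coequalizer lives in the Eilenberg-Moore category, so the mediating map is automatically a homomorphism; we will note this explicitly but otherwise take it as given from the definition. Uniqueness of decompositions \(k = d\cdot l\) is not needed, since the induction follows the syntax of \(\phi\).
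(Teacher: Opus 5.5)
Your proof is correct and follows the paper's argument step for step. The paper also uses structural induction on the uniform-depth formula: the constant case goes through \(\gamma_e=\eta\), naturality of \(\eta\) and the unit law of \(o\), and the modal case goes through the defining square of the mediating map, naturality of \(\mu^{d,l}\) and the multiplicative coalgebra axiom.
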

\begin{proof}
  We show the claim by structural induction over \(\phi\).
  For the case of \(\phi = \theta \in \Theta\), we have that 
  \begin{alignat*}{2}
    \sem \theta _\gamma &= !; \hat \theta &\\
    &= !; \hat\theta ; \eta_\Omega; o &\by{o\text{ }M_0\text{-Alg.}}\\
    &= \eta_X; M_0!; M_e\hat\theta; o &\by{\eta \text{ nat.}}\\
    &= \gamma_e ; M_e!; \rsem{\theta} &\\
  \end{alignat*}

  For the case of \(\phi = p(\phi_1, \ldots, \phi_n)\), we have that
  \begin{alignat*}{2}
    \sem{p(\phi_1, \ldots, \phi_n)}_\gamma &= \langle\sem{\phi_1}_\gamma, \ldots, \sem{\phi_n}_\gamma\rangle;\sem{p} &\\
    &= \langle\gamma_k ; M_k! ;\rsem{\phi_1}, \ldots, \gamma_k ; M_k! \rsem{\phi_n} \rangle;\sem{p}&\\
    &= \gamma_k ;M_k! ; \langle\rsem{\phi_1}, \ldots, \rsem{\phi_n}\rangle;\sem{p} &\\
    &= \rsem{p(\phi_1, \ldots, \phi_n)}.
  \end{alignat*}

  And, for the case of \(\phi = \lambda\phi'\), where \(\phi_i\in \mathcal{L}_l\), we have
  \begin{alignat*}{2}
    \sem{\lambda\phi'}_\gamma &= \gamma_l ; M_{\depth(\lambda)}\sem{\phi'}_\gamma ;  \sem{\lambda} &\\
    &= \gamma_{\depth(\lambda)} ; M_{\depth(\lambda)}( \gamma_l; M_l!; \rsem{\phi'}) ; \sem{\lambda} &\\
    &=\gamma_{\depth(\lambda)}; M_{d(\lambda)}(\gamma_l;M_l!);  \mu^{d(\lambda),l}; \rsem{\lambda\phi'}  &\\
    &= \gamma_{\depth(\lambda)}; M_{d(\lambda)} \gamma_l ;  \mu^{d(\lambda),l} ; M_{d(\lambda)l}!; \rsem{\lambda\phi'} &\\
    &= \gamma_{d(\lambda),l}; M_{d(\lambda)l}!; \rsem{\lambda\phi'} \qedhere
  \end{alignat*}
\end{proof}
\end{toappendix}

\begin{theorem}%
  \label{cor:trace-invariance}
  Let \(\mathcal{L} = (\Theta, \mathcal{O}, \Lambda)\) be a trace logic for a \(G\)-uniform graded monad \(\mathbb{M}\). Then the uniform-depth fragment of \(\mathcal{L}\) is invariant with respect to trace equivalence in graded coalgebras for \(\mathbb{M}\).
\end{theorem}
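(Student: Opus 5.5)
The plan is to derive the theorem directly from \Cref{lem:semantics-factors}, which factors the coalgebraic semantics of a uniform-depth formula through its trace semantics. Fix two $G$-uniform graded coalgebras $(X,\gamma)$ and $(Y,\delta)$ for $\mathbb{M}$, and two trace equivalent states $x\colon 1\to X$ and $y\colon 1\to Y$. By \Cref{def:trace-equivalence} this means
\[x \dcomp \gamma_{k} \dcomp M_{k}(!_X) = y \dcomp \delta_{k} \dcomp M_{k}(!_Y) \quad\text{for every } k\in T.\]
Now take an arbitrary formula $\phi$ of the uniform-depth fragment, so that $\phi\in\form_k(\mathcal{L})$ for some $k\in T$.

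Applying \Cref{lem:semantics-factors} to both coalgebras gives $\sem\phi_\gamma = \gamma_k \dcomp M_k! \dcomp \rsem\phi$ and $\sem\phi_\delta = \delta_k \dcomp M_k! \dcomp \rsem\phi$. The morphism $\rsem\phi\colon M_k1\to\Omega$ depends only on $\phi$ and the graded monad, and not on the coalgebra. Precomposing with the states and using trace equivalence at the grade $k$ then yields
\[x\dcomp\sem\phi_\gamma = x\dcomp\gamma_k\dcomp M_k!\dcomp\rsem\phi = y\dcomp\delta_k\dcomp M_k!\dcomp\rsem\phi = y\dcomp\sem\phi_\delta.\]
Since $\phi$ was arbitrary, $x$ and $y$ are logically equivalent with respect to the uniform-depth fragment, which is the claim.

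The only real work therefore sits in \Cref{lem:semantics-factors} and in the well-definedness of $\rsem{-}$, which we take as given. The latter needs $G$-uniformity to provide the coequalizer in~\eqref{eq:semantics}, and it needs the trace-logic axioms to make $M_{\depth(\lambda)}\rsem\phi \dcomp \sem\lambda$ a competitor for that coequalizer.

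Two points deserve a check.
\begin{itemize}
\item \emph{Every uniform-depth formula has a single well-defined depth $k$.} This follows from the grammar of $\form_k(\mathcal{L})$.
\item \emph{The modal step of the lemma uses the graded coalgebra law $\gamma_{\depth(\lambda)\cdot l} = \gamma_{\depth(\lambda)}\dcomp M_{\depth(\lambda)}\gamma_l\dcomp\mu^{\depth(\lambda),l}$ together with naturality of $\mu$ at $!$.} This is where the graded coalgebra axioms of \Cref{def:gradedcoalgebra} enter.
\end{itemize}
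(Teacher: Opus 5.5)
Your proof is correct and follows the paper's intended route: the theorem is an immediate consequence of \Cref{lem:semantics-factors}. That lemma gives $\sem\phi_\gamma = \gamma_k \dcomp M_k! \dcomp \rsem\phi$ with $\rsem\phi$ independent of the coalgebra, and trace equivalence at grade $k$ finishes the argument. Two harmless slips: $G$-uniformity is a hypothesis on $\mathbb{M}$, not on the coalgebras, and a uniform-depth formula need not have a unique depth (a $0$-ary propositional operator lies in every $\form_k(\mathcal{L})$); your argument only uses that $\phi \in \form_k(\mathcal{L})$ for some $k$, so nothing breaks.
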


\noindent We next turn our attention to expressivity for trace semantics.

\begin{define}
    The trace logic \(\mathcal{L}\) is \emph{unit separating}, if the set of morphisms \(\{\rsem\theta , \rsem{p} \colon M_e1 \to \Omega\mid \theta \in \Theta, p\in \mathcal{O}, \ar(p)=0\} \) is jointly injective.
    Moreover, \(\mathcal{L}\) is \emph{inductively separating}, if  for all \(g \in G, t\in T\) and jointly injective sets of algebra morphisms \(\mathfrak{A} \subseteq (M_t1, \mu^{e,t}) \to (\Omega, o)\) closed under \(\mathcal{O}\), the set of morphisms
    \(\{\rsem{\lambda h} \colon M_{g\cdot t}1 \to \Omega\mid \lambda\in \Lambda, \depth(\lambda) = g, h\in \mathfrak{A}\}\)
    is jointly injective, where \(\rsem{\lambda h}\) is defined analogously to the dashed arrow in diagram (\ref{eq:semantics}).
\end{define}

\begin{theoremrep}%
  \label{thm:trace-expressivity}
  When \(M\) is a graded monad over \((T, \cdot,e)\), and \(\mathcal{L} = (\Theta, \mathcal{O}, \Lambda)\) is a trace logic which is unit separating and inductively separating, then the uniform-depth fragment of \(\mathcal{L}\) is expressive with respect to trace equivalence.
\end{theoremrep}
\begin{proof}
  For \(t\in T\), the set of morphisms \(\{\rsem \phi \colon M_t1\to \Omega \mid \phi \in \mathcal{L}_t \}\) is jointly injective. The claim then follows by application of \Cref{lem:semantics-factors}. Since \(G\) is a generator of \(T\), we can write \(t = g_1\cdot \ldots\cdot g_n\) and proceed by induction over \(n\).
  The case for \(n=0\), i.e.\ \(t=e\), follows by unit separation; for \(t\not = e\), we have by induction hypothesis that the set of evaluation morphisms \(\{\rsem{\phi}\colon M_{t'}1\to \Omega \mid \phi \in \mathcal{L}_{t'}\}\) for \(t' = g_2 \cdot \ldots \cdot g_n\) is jointly initial and closed under \(\mathcal{O}\).
  Then we have, since \(\mathcal{L}\) is inductively separating, that the set \(\{\rsem{\lambda \phi}\colon M_{t}1 \to \Omega \mid \lambda \in \Lambda, \depth(\lambda) = g_1, \phi \in \mathcal{L}_{t'} \}\subseteq \{\rsem \phi \mid \phi \in \mathcal{L}_{t}\}\) is jointly injective.
\end{proof}

\begin{example}
  When instantiating to graded semantics (\Cref{sec:graded-semantics}), the axioms for trace-logics are equivalent to graded logics~\cite{DorschMS19}, and by application of \Cref{cor:trace-invariance} and \Cref{thm:trace-expressivity} we recover the relevant invariance and expressivity results~\cite{ForsterSWBGM24}.
\end{example}

\begin{example}%
  \label{exp:prob-trace-logic}
  Consider the setting of \Cref{ex:fd-processes-coalgebra}, with distributions instead of subdistributions, that is, we have a \(\Samp_{\posreals}\) graded monad \(M\) where each \(M_t\) is of the form \(\distr(B^k \times {-})\). This modification is without loss of generality, since we can always add a sink state to systems.
  We define a logic \(\mathcal{L} = \{\Theta, \mathcal{O}, \Lambda\}\) with \(\Theta = \emptyset\), \(\mathcal{O} = \{+_p, \top \mid p \in [0,1]\}\), where \(\ar(\top) =0, \ar(+_p) = 2\) and unary modalities \(\Lambda = \{\modal{r}, \rmodal{b} \mid r \in \posreals, b \in B\}\) with \(\depth(\modal{r}) = (r, 0)\) and \(\rmodal{b} = (0, 1)\).
  As a truth value object we choose \(\Omega = [0,1] \subseteq \reals\) and $\distr$-algebra structure $o$ taking expected values: \(o(\mu) = \sum_{x\in \Omega } \mu(x)\). We assign semantics to these operators via \(\sem {+_p}\colon \Omega^2 \to \Omega\) calculating weighted sums: \(\sem{+_p}(a,b) = pa + (1-p)b\), the semantics of the 0-ary propositional operator \(\top\) is the constant function \(1\), the operator \(\sem{\modal{r}} \colon M_{(r,0)}\Omega \to \Omega \; ( = \distr[0,1] \to [0,1]\)) takes expected values and \(\sem{\rmodal b}\colon M_{(0,1)}\Omega \to \Omega\; ( = \distr(B \times [0,1]) \to [0,1])\) is calculated via $\sem {\rmodal b} (\mu) = \sum_{v\in [0,1]} \mu(b,v)v$.
  Intuitively, the modal operator \(\modal{r}\) progress the system by time \(r\), while \(\rmodal b\) probes whether the current state is labelled by \(b\). The semantics then gives the expected value of a formula holding when the system is probabilistically executed.
  Simple calculation shows that these operations satisfy the axioms of trace logics. Therefore, by \Cref{cor:trace-invariance}, \(\mathcal{L}\) defines a logic that is invariant for trace semantics of Feller-Dynkin processes.
  The logic also satisfies unit separation and inductive separation, and thus we have by \Cref{thm:trace-expressivity} that \(\mathcal{L}\) is expressive for trace semantics.
  This logic is a continuous time version of the multi-valued modal logic characterising trace semantics in probabilistic transition systems \cite{BernardoB08}.
\end{example}

\begin{toappendix}
   \noindent \textbf{Details for \Cref{exp:prob-trace-logic}}\\
   We state the argument for expressivity explicitly.
   It is immediate that \(\rsem{\top} \colon M_{(0,0)}1 = \distr 1 \cong 1\to \Omega\) is injective, giving unit separation. For depth-1 separation, assume \(\mathfrak{A}\subseteq M_{t'}1  \to \Omega (= \distr(B^n) \to [0,1])\) is a set of affine maps, closed under affine maps itself. We distinguish the case where the generator \(g = (r,0)\) and \(g = (0,1)\). For \(g = (r,0)\), assume \(\mu, \nu \in \distr(B^n)\) such that \(\rsem{\modal r f}(\mu) = \rsem{\modal r f}(\nu) \) for all \(f\in \mathfrak{A}\). We have to show that \(\mu = \nu\). Note that
  \begin{align*}
    & f(\mu)\\
    &= \distr f (1\cdot \ket{\mu}); \sem{\modal r}\\
    &= \rsem{\modal r f}(\mu) \\
    &= \rsem{\modal r f}(\nu) \\
    &= \distr f (1\cdot \ket{\nu}); \sem{\modal r}\\
    &= f(\nu)
  \end{align*}
  Then \(\mu = \nu\) follows by joint injectivity of \(\mathfrak{A}\).
  For \(g=(0,1)\), assume again that \(\mu, \nu \in \distr(B^{n+1})\) such that \(\rsem{\rmodal{b}f}(\mu) = \rsem{\rmodal{b}f}(\nu)\) for all \(b\in B\) and \(f \in \mathfrak{A}\).
  When we choose \(f\) to be the constant function \(1\), then we get from this assumption that
  \begin{align*}
    p_b:&=\sum_{c\in B^n} \mu(b, c)\\
    &=\sum_{c\in B^n} \mu(b, c)f(1 \cdot \ket{c})\\
    &=\rsem{\rmodal{b}f}(\mu)\\
    &= \rsem{\rmodal{b}f}(\nu)\\
    &=\sum_{c\in B^n} \nu(b, c)f(1 \cdot \ket{c})\\
    &=\sum_{c\in B^n} \nu(b, c)
  \end{align*}
  We are then able to normalize the distributions by choosing \(\mu',\nu' \in \distr(B\times \distr(B^n)) = M_{(0,1)}M_{t'}1\) as follows:
  \[\mu' := \sum_{b\in B} p_b \ket{\mu_b}\]
  where
  \[\mu_b := \sum_{c\in B^n} \frac{\mu(b,c)}{p_b}\ket{c}\]
  (analoguously for \(\nu'\)). Then we have that
  \begin{align*}
    & f(\mu_b)p_b\\
    &= \distr(B\times f) (\mu'); \sem{\rmodal b}\\
    &= \rsem{\rmodal b f}(\mu) \\
    &= \rsem{\rmodal b f}(\nu) \\
    &= \distr(B\times f) (\nu'); \sem{\rmodal b}\\
    &= f(\nu_b)p_b
  \end{align*}
  When \(p_b \not = 0\) then we have by joint injectivity that \(\mu_b=\nu_b\). Since the case of \(p_b = 0\) is irrelevant, we have that \(\mu = \nu\).
  Thus we have that \(\mathcal{L}\) is an expressive multi-valued logic for trace semantics.
\end{toappendix}

\begin{toappendix}
\end{toappendix}

\section{Conclusion and Future Work}

We introduced graded coalgebras of graded monads, and their behavioural and trace equivalences; we argued that graded monad coalgebras model state-based systems which exhibit continuous-time behaviour.
We proved that graded monads can be combined via graded distributive laws.
As a particular case, \(T\)-graded monads regulating the branching behaviour can be combined with functors regulating the observable behaviour;
their composition gives a monad graded by the coproduct monoid \(T+\naturals\).
Graded coalgebras of these \(T+\naturals\)-graded monads instantiate to Feller-Dynkin processes~\cite{chen2023behavioural} and capture their bisimilarity and trace equivalence.
We developed the theory of graded monad coalgebras.
We proved existence of terminal graded monad coalgebras, under suitable assumptions.
We defined characteristic coalgebraic modal logics for behavioural and trace equivalence, and proved invariance and expressivity for these logics.

While the present work focuses on probabilistic processes as central examples, we anticipate that the technique of externalising the time parameter could have wider applications and facilitate the coalgebraic treatment of, for instance,  timed automata and related systems.
Furthermore, it will be interesting to use the framework we laid out to connect recent work on behavioural pseudometrics for continuous-time Markov processes~\cite{ChenCP25} with behavioural metrics in coalgebra~\cite{BaldanBKK18}.
Further work could look at enriching graded monad coalgebras to impose continuity conditions on the dependency of the transitions with respect to the grading parameter.

\bibliographystyle{plainurl}
\bibliography{main.bib}

\end{document}